\documentclass[lettersize,onecolumn]{IEEEtran}

\usepackage{amsmath,amssymb,amsthm}
\usepackage{graphicx}
\usepackage[numbers,sort&compress]{natbib}

\allowdisplaybreaks

\newcommand{\F}{\mathbb F}
\newcommand{\Prob}{\mathbb P}
\newcommand{\E}{\mathbb E}
\newcommand{\supp}{\operatorname{supp}}
\newcommand{\rank}{\operatorname{rank}}
\newcommand{\wt}{\operatorname{wt}}
\newcommand{\Core}{\operatorname{Core}}
\newcommand{\BP}{\mathrm{BP}}
\newcommand{\epsBP}{\epsilon_{\mathrm{BP}}}
\newcommand{\epsMAP}{\epsilon_{\mathrm{MAP}}}
\newcommand{\wex}{w_{\mathrm{ex}}}
\newcommand{\1}{\mathbf 1}

\newtheorem{theorem}{Theorem}[section]
\newtheorem{proposition}[theorem]{Proposition}
\newtheorem{lemma}[theorem]{Lemma}
\newtheorem{corollary}[theorem]{Corollary}

\title{
Limits of Weight-Bounded Redundant Checks for Peeling Decoding of Random Regular LDPC Codes
}

\author{Jianshen~Zhu
\thanks{Jianshen Zhu is with the Department of Information Sciences
and Technology, Tokyo University of Science, Noda, Chiba 278-8510,
Japan (e-mail: zhu\_js@rs.tus.ac.jp).}
\thanks{This work has been submitted to the IEEE for possible publication.
Copyright may be transferred without notice, after which this version may
no longer be accessible.}}

\begin{document}

\maketitle

\begin{abstract}
We study peeling decoding on the binary erasure channel when redundant
parity checks may be added without restriction in number, but each check
is subject to a row-weight constraint.  For random simple binary regular
low-density parity-check (LDPC) codes, we consider the parity-check matrix
formed by all nonzero dual codewords up to a prescribed weight.  We show
that, throughout an explicit logarithmic row-weight range, this complete
family of redundant checks has the same asymptotic normalized-residual
threshold as peeling with the original LDPC matrix.  In the same range,
with probability tending to one over the random code, it still has a
linear-size stopping set that is uniquely decodable.  Consequently, every
parity-check representation whose peeling decoder succeeds on every
uniquely decodable erasure pattern must contain checks above this
logarithmic scale, and the supports of those heavier checks must
collectively contain a linear number of coordinates.  Thus row weight
remains an asymptotic limitation even when the number and selection of
redundant checks are unrestricted.
\end{abstract}

\begin{IEEEkeywords}
Binary erasure channel, LDPC codes, parity-check matrices,
peeling decoding, redundant parity checks, stopping sets.
\end{IEEEkeywords}

\section{Introduction}
\label{sec:introduction}

On the binary erasure channel (BEC), the natural benchmark for block
decoding is unique recovery.  For an erasure set \(E\), the unknown
symbols satisfy a linear system obtained by restricting the parity-check
equations to the erased coordinates.  Recovery is unique exactly when this
restricted system has a unique solution, equivalently when the corresponding
columns of a parity-check matrix are linearly independent.  On the BEC,
this is precisely the condition for unique maximum-likelihood (ML) decoding.
The erased symbols can then be recovered by solving the linear system, for
example by Gaussian elimination.

A simpler alternative is belief-propagation (BP) decoding, which on the
BEC reduces to the peeling decoder~\cite{RichardsonUrbanke2001,RichardsonUrbanke2008}.
The peeling process repeatedly recovers an erased symbol whenever some parity
check contains exactly one currently erased coordinate.  For a bounded-degree
low-density parity-check (LDPC) code, it can be implemented in time linear in
the blocklength.  Its success, however, depends on the chosen parity-check
matrix.  A nonempty set of coordinates is a stopping set if every parity-check
row meets it in either zero or at least two positions.  The peeling decoder
fails to recover all erased symbols precisely when the erasure set contains
such a stopping set.  Hence, the decoder can fail even when the erasure
pattern is uniquely decodable~\cite{DiEtAl2002}.

A natural way to narrow the gap between ML and peeling decoding is to add
redundant parity checks, that is, additional dual codewords as rows of the
parity-check matrix.  This idea underlies stopping redundancy,
parity-check collections, and ML redundancy, which study how redundant
checks can remove stopping failures on correctable erasure
patterns~\cite{SchwartzVardy2006,HollmannTolhuizen2007,HanSiegel2008,
YakimenkaEtAl2019}.  A classical result gives a useful benchmark: every
binary \([n,k]\) code admits a parity-check matrix whose peeling decoder
succeeds on every uniquely decodable erasure pattern and whose nonzero rows
have weight at most \(k+1\)~\cite{HanSiegel2008,WeberAbdelGhaffar2008}.
For a fixed-rate LDPC family, this bound allows parity checks whose weight is
linear in the blocklength.
We study how far this row-weight requirement can be reduced
for random regular LDPC codes.  Rather than selecting a particular collection
of redundant checks, we give the peeling decoder every dual codeword up to a
prescribed weight \(W\).  This removes the number and choice of redundant
checks from consideration and isolates the effect of the row-weight limit
itself.

Given a binary linear code \(C\) and an integer \(W\geq1\), let
\(H_{\leq W}(C)\) denote the matrix whose rows are all nonzero dual
codewords of Hamming weight at most \(W\).  We call this matrix the
\emph{low-weight dual closure} at \(W\).  In the Tanner-graph formulation of
Bazzi and Audah~\cite{BazziAudah2015}, this is the graph
\(\overline{G}^{W}\) containing all redundant checks of degree at most
\(W\).  Thus \(H_{\leq W}(C)\) contains every parity check allowed by the
row-weight bound \(W\).
Figure~\ref{fig:correctable-erasure-failure} illustrates this maximal
choice at finite blocklengths for random simple \((3,6)\)- and
\((3,9)\)-regular LDPC codes.  The experiment is restricted to uniquely
decodable erasure patterns.  For each sampled code, the complete
low-weight dual closure is constructed exactly by exhaustive
enumeration of the full dual row space.  The low-weight dual closure
markedly lowers the peeling failure rate relative to the original
parity-check matrix, but does not eliminate such failures in the
displayed range.

\begin{figure*}[t]
    \centering
    \begin{minipage}[t]{0.485\textwidth}
        \centering
        \includegraphics[width=\linewidth]{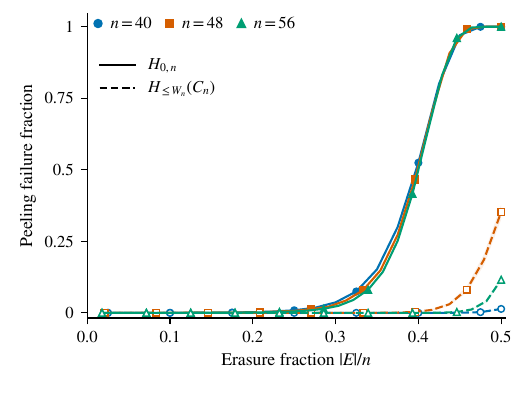}
        \par\smallskip
        {\small (a) $(3,6)$-regular ensemble}
    \end{minipage}
    \hfill
    \begin{minipage}[t]{0.485\textwidth}
        \centering
        \includegraphics[width=\linewidth]{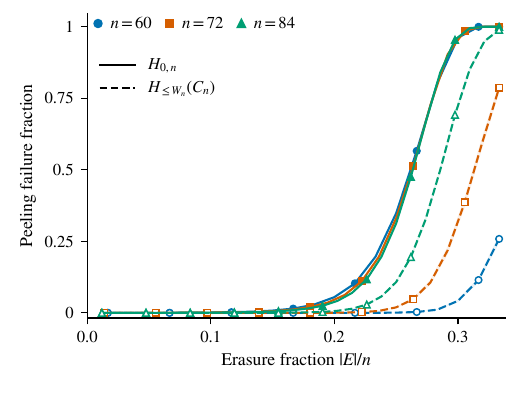}
        \par\smallskip
        {\small (b) $(3,9)$-regular ensemble}
    \end{minipage}
\caption{Peeling failure rates for uniquely decodable erasure patterns
in random simple regular LDPC codes.
The blocklengths are \(n=40,48,56\) for the \((3,6)\) ensemble and
\(n=60,72,84\) for the \((3,9)\) ensemble.
For the three blocklengths in each panel, respectively
\(150,150,200\) codes were sampled, with
\(500,500,1000\) accepted erasure sets per code and erasure size
\(t\); the corresponding ranges are
\(t=1,\ldots,20\), \(1,\ldots,24\), and \(1,\ldots,28\).
For each code and \(t\), \(E\) is sampled uniformly among the
\(t\)-subsets satisfying
\(\operatorname{rank}_{\F_2}((H_{0,n})_E)=t\).
The same accepted erasure sets are evaluated using the original matrix
\(H_{0,n}\) and the complete low-weight dual closure
\(H_{\leq W_n}(C_n)\), which is obtained by exhaustive enumeration of
the full dual row space.
The plotted value is the equal-weight mean of the code-level peeling
failure fractions.
Solid curves correspond to \(H_{0,n}\), and dashed curves to
\(H_{\leq W_n}(C_n)\), with
\(W_n=\lfloor0.75c_*\log n\rfloor\).
The nominal cutoffs are \(9,9,10\) for \((3,6)\) and \(15,16,16\)
for \((3,9)\); because every dual word in the \((3,6)\) cases has
even weight, the first two effective cutoffs are \(8\).
Pointwise \(95\%\) bootstrap intervals for the ensemble-averaged
failure fraction, obtained from \(5000\) whole-code resamples, are
also shown.  They quantify the finite-code-sample uncertainty in the
equal-weight code-level mean; the intervals are narrow on this scale
and therefore mostly coincide visually with the corresponding curves.
}
    \label{fig:correctable-erasure-failure}
\end{figure*}

In this paper, we focus on random regular LDPC codes and identify an
explicit logarithmic range in which low-weight redundant checks are
insufficient to close the gap between peeling and unique decoding.
More precisely, for a uniformly random simple \((\ell,r)\)-regular code
with \(3\leq\ell<r\), we show that there is an explicit constant
\(c_*=c_*(\ell,r)>0\) such that, for every fixed \(\gamma>0\), the
low-weight dual closure up to weight \((c_*-\gamma)\log n\) has the same
asymptotic normalized-residual threshold as peeling with the original
LDPC matrix.  Thus, throughout this logarithmic range, making every
admissible low-weight dual check available does not improve the BEC
peeling threshold.

We also show that the same weight range is insufficient for exact recovery
of all uniquely decodable erasure patterns.  With probability tending to one
over the random code, the low-weight dual closure contains a linear-size
stopping set that is nevertheless uniquely decodable.  Consequently, any
parity-check matrix whose peeling decoder succeeds on every uniquely
decodable erasure pattern must contain a row of weight greater than
\((c_*-\gamma)\log n\).  Moreover, the supports of the checks above this
weight scale must collectively contain a linear number of coordinates.

The closest asymptotic comparison is the work of Bazzi and
Audah~\cite{BazziAudah2015}, who study the same complete family of
redundant checks subject to a check-degree cutoff in the setting of LP
decoding.  Their results show that, under structural assumptions on the Tanner
graph, the LP threshold of a sufficiently large bounded-degree
redundant-check graph can be made arbitrarily close to that of the
full redundant-check graph.  Under an additional rigidity assumption,
the full redundant-check graph has the same LP threshold as the base
graph on the binary symmetric channel.  Our setting instead concerns
peeling on the BEC for the random regular LDPC ensemble.  We identify an
explicit logarithmic row-weight range in which the complete low-weight
dual closure leaves the normalized-residual threshold unchanged and still
fails on some uniquely decodable erasure patterns.

Related BEC work has emphasized different redundancy resources.
Wadayama studies redundant random matrix ensembles and the tradeoff between
the number of redundant rows and their stopping-set distributions~\cite{Wadayama2008}.
At finite lengths, Bocharova~et~al. augment selected short LDPC
parity-check matrices with low-weight dual codewords and study the
resulting gap between BP and ML decoding~\cite{BocharovaEtAl2017}.
In the present work, the decoder is instead
given every dual word below the prescribed weight cutoff.  This removes
the choice and number of redundant rows as design parameters and isolates
the effect of the row-weight constraint itself.

The remainder of the paper is organized as follows.
Section~\ref{sec:model} introduces the model and basic definitions.
Section~\ref{sec:barrier} states the main results.
Section~\ref{sec:proof-locality-barrier} gives their proofs, based on
structural properties of low-weight dual checks and stability of the
residual produced by peeling with the original parity-check matrix.
Section~\ref{sec:discussion} concludes the paper.
The technical estimates used in the proofs are collected in the appendix.

\section{Redundant parity checks under a row-weight constraint}
\label{sec:model}

Fix integers $3\leq \ell<r$.  We work with the uniform simple
$(\ell,r)$-regular Tanner ensemble.  For each blocklength $n$ for
which $m=\ell n/r$ is an integer, let
$G_n=(V_n,F_n,E_n)$ be a graph drawn from this ensemble, with
$|V_n|=n$ and $|F_n|=m$.  We identify $V_n$ with $[n]$ for convenience.
The binary incidence matrix of $G_n$ is denoted by $H_{0,n}$, and
$h_f$ denotes its row indexed by $f\in F_n$.  The corresponding
binary linear code is
\(
    C_n=\ker H_{0,n}.
\)

For a fixed parity-check matrix, the failure patterns of the peeling
decoder on the BEC are completely characterized by its stopping
sets~\cite{DiEtAl2002}.  A set $S\subseteq[n]$ is a
\emph{stopping set} for a binary matrix $H$ if no row of $H$
contains exactly one coordinate of $S$.
Starting from an erasure
set $E$, the peeling process repeatedly recovers an erased coordinate
whenever some check contains exactly one currently erased
coordinate.  When no such check remains, the unrecovered coordinates
form the largest stopping set contained in $E$, which we denote by
\(
    \Core_H(E).
\)
In particular, for the matrix $H_{0,n}$ we abbreviate this set as
\(
    R_0(E):=\Core_{H_{0,n}}(E).
\)
We refer to $R_0(E)$ as the residual set after peeling with
$H_{0,n}$; it consists of the erased coordinates left unrecovered
when the process stops.

For $\epsilon\in[0,1]$, let $E_\epsilon\subseteq[n]$ be the random
erasure set in which each coordinate is erased independently with
probability $\epsilon$, independently of $G_n$.
Unless otherwise stated, probabilities and expectations are taken
over the uniform simple Tanner graph $G_n$ and, whenever
$E_\epsilon$ is present, over the independent BEC erasures as well.
The asymptotic behavior of
the peeling decoder for the $(\ell,r)$-regular ensemble is described
by density evolution~\cite{RichardsonUrbanke2001,DiEtAl2002,RichardsonUrbanke2008}.
In the density-evolution recursion, let $x_t$ denote the erasure
probability of a variable-to-check message at iteration $t$, and let
$y_t$ denote the erasure probability of a check-to-variable message.
For the $(\ell,r)$-regular ensemble,
\[
    y_t=1-(1-x_t)^{r-1},
    \qquad
    x_{t+1}=\epsilon y_t^{\ell-1},
\]
with $x_0=\epsilon$.  The corresponding asymptotic fraction of
variables that remain erased after iteration $t$ is
$\epsilon y_t^\ell$.  Hence the nonzero fixed points of the recursion
satisfy
\begin{equation}
    x=\epsilon y^{\ell-1},
    \qquad
    y=1-(1-x)^{r-1}.
    \label{eq:de}
\end{equation}
Equivalently, define
\[
    f_\epsilon(z)
    :=
    \epsilon\bigl[1-(1-z)^{r-1}\bigr]^{\ell-1}.
\]
Then the fixed-point equation is simply $x=f_\epsilon(x)$.

We denote by $\epsBP=\epsBP(\ell,r)$ the belief-propagation (BP)
threshold of the $(\ell,r)$-regular ensemble on the BEC, that is,
the supremum of the erasure probabilities $\epsilon$ for which the
density-evolution recursion, initialized at $x_0=\epsilon$, converges
to zero~\cite{RichardsonUrbanke2001,RichardsonUrbanke2008}.  
For \(3\leq\ell<r\), this is equivalently the value at which a
nonzero fixed point first appears by tangency.  For every fixed
\(\epsilon>\epsBP\), there are exactly two positive fixed points, and
the larger one is the stable upper branch~\cite[Lemmas~4 and~9]{KudekarRichardsonUrbanke2011}.
By monotonicity of the scalar density-evolution recursion, initialization
at \(x_0=\epsilon\) converges to this larger fixed point.  If \(x\)
denotes it, then
\[
    f_\epsilon'(x)<1.
\]
The upper branch approaches the BP tangency continuously as
\(\epsilon\downarrow\epsBP\), as follows from the continuous
fixed-point parametrization and the tangency characterization above.
Standard concentration results then imply that, for every fixed
$\epsilon<\epsBP$,
\[
    \frac{|R_0(E_\epsilon)|}{n}
    \xrightarrow{\Prob}0,
\]
whereas for every fixed $\epsilon>\epsBP$ the fraction of variables
left unrecovered by the peeling process is bounded away from zero
with high probability~\cite{DiEtAl2002,RichardsonUrbanke2008}.

Stopping sets depend on the chosen parity-check matrix, not only on
the code itself.  Adding redundant parity checks can therefore
change the behavior of the peeling decoder without changing the
code.
For a binary linear code $C\leq\F_2^n$ and an integer $W\geq1$, let
$H_{\leq W}(C)$ denote the matrix whose rows are all nonzero
$h\in C^\perp$ with $\wt(h)\leq W$.
We call this matrix the \emph{low-weight dual closure}.
For every $f\in F_n$, the row $h_f$ belongs to $C_n^\perp$ and has
weight $r$.  Hence, for $W\geq r$, $H_{\leq W}(C_n)$ contains every row of
$H_{0,n}$, while every row of $H_{\leq W}(C_n)$ is a parity check
of $C_n$.  Therefore
\(
    \ker H_{\leq W}(C_n)=C_n.
\)

To study whether the added low-weight checks can change the BEC
threshold, let $W_\bullet=(W_n)$ be a deterministic sequence of
positive integers with $W_n\geq r$ for all sufficiently large $n$,
and define
\begin{equation}
    \epsilon_{\mathrm{res}}(W_\bullet)
    :=
    \sup\left\{
        \epsilon\in[0,1]:
        \frac{
            \left|
            \Core_{H_{\leq W_n}(C_n)}(E_\epsilon)
            \right|
        }{n}
        \xrightarrow{\Prob}0
    \right\}.
    \label{eq:residual-threshold}
\end{equation}
Notice that the same definition with $H_{0,n}$ in
place of $H_{\leq W_n}(C_n)$ gives $\epsBP$, by the concentration
result recalled above.

On the other hand, the peeling decoder associated with a given
parity-check matrix need not agree with maximum-likelihood (ML)
decoding on every erasure pattern: it may stop on a stopping set even
when the erased symbols are uniquely determined by the code~\cite{HanSiegel2008,YakimenkaEtAl2019}.
To formalize this
distinction, let $H$ be a parity-check matrix for a binary linear
code $C$ and let $E\subseteq[n]$.  Denote by $H_E$ the submatrix
formed by the columns indexed by $E$.  The erased symbols are
uniquely determined from the unerased ones if and only if
\(
    \rank(H_E)=|E|.
\)
An erasure set satisfying this condition is called
\emph{coverable}~\cite{HanSiegel2008,YakimenkaEtAl2019}.  
Since \(\operatorname{rowspan}H=C^\perp\), coverability depends only
on \(C\) and \(E\), whereas the stopping-set property depends on the
chosen parity-check representation.
A stopping set $S$ of $H$ is called a \emph{coverable stopping set} if it is
coverable as an erasure set.  We say that $H$ is
\emph{ML-exact on the BEC} if the peeling decoder based on $H$
succeeds for every coverable erasure set.
The following criterion characterizes ML-exactness in terms of
stopping sets.

\begin{proposition}
\label{prop:rank-interface}
Let $H$ be a parity-check matrix for a binary linear code $C$.
Then $H$ is ML-exact on the BEC if and only if
\(
    \rank(H_S)<|S|
\)
for every nonempty stopping set $S$ of $H$.
\end{proposition}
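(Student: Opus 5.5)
The argument is a two-direction equivalence that uses only the characterization of peeling recalled above: the peeling decoder on erasure set \(E\) stops exactly at \(\Core_H(E)\), the largest stopping set contained in \(E\). Hence peeling succeeds on \(E\) if and only if \(\Core_H(E)=\emptyset\), that is, if and only if \(E\) contains no nonempty stopping set of \(H\). The key auxiliary fact is that coverability is inherited by subsets. If \(\rank(H_E)=|E|\), then the columns of \(H\) indexed by \(E\) are linearly independent. Any subfamily of them is also independent, so \(\rank(H_S)=|S|\) for every \(S\subseteq E\).

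For the direction \((\Leftarrow)\), assume every nonempty stopping set \(S\) of \(H\) satisfies \(\rank(H_S)<|S|\). Let \(E\) be a coverable erasure set and put \(S=\Core_H(E)\subseteq E\). If \(S\) were nonempty, then by the auxiliary fact \(\rank(H_S)=|S|\). This contradicts the hypothesis, because \(S\) is a stopping set. Hence \(S=\emptyset\), peeling succeeds on \(E\), and \(H\) is ML-exact.

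For the direction \((\Rightarrow)\), argue by contraposition. Suppose some nonempty stopping set \(S\) has \(\rank(H_S)=|S|\); the rank cannot exceed \(|S|\), so this is the only alternative to \(\rank(H_S)<|S|\). Take \(E=S\) as the erasure set. It is coverable by assumption. Since \(S\) is itself a stopping set contained in \(E\), the core is \(\Core_H(E)=S\neq\emptyset\), so peeling fails on a coverable erasure set. Thus \(H\) is not ML-exact.

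I expect no step to be a real obstacle. The only point needing care is the fact that the final residual is exactly the maximal stopping set inside \(E\), and the paper already states this, citing~\cite{DiEtAl2002}. If one wanted this to be self-contained, one would show two things. First, a stopping set inside \(E\) is never peeled: a check with exactly one currently erased coordinate in that set must meet it in exactly one position, which no row of \(H\) does. Second, the terminal erased set is itself a stopping set. Together these give that the terminal set equals the union of all stopping sets in \(E\), which is again a stopping set.
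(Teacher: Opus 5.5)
Your proposal is correct and follows essentially the same route as the paper: erasing the stopping set \(S\) itself gives a coverable pattern on which peeling is stuck, and conversely the nonempty core \(\Core_H(E)\) of a failing coverable \(E\) inherits full column rank from \(E\). The only difference is that you argue the \((\Leftarrow)\) direction directly rather than by contraposition. Your optional self-contained justification that the peeling residual is the largest stopping set in \(E\) is also sound.
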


\begin{proof}
If a nonempty stopping set $S$ satisfies
$\rank(H_S)=|S|$, then $S$ is coverable, while the peeling decoder
based on $H$ cannot recover any coordinate when the erased set is
$S$.  Hence $H$ is not ML-exact on the BEC.

Conversely, suppose that $H$ is not ML-exact on the BEC.  Then there
is a coverable erasure set $E$ for which the peeling decoder based on
$H$ does not recover all erased coordinates.  The set
\(
    S=\Core_H(E)
\)
is therefore a nonempty stopping set contained in $E$.  Since
$\rank(H_E)=|E|$, the columns of $H_E$ are linearly independent.
The columns indexed by $S$ are consequently linearly independent as
well, and hence
\(
    \rank(H_S)=|S|.
\)
\end{proof}

We measure the row weight required for ML-exactness by
\begin{equation}
    \wex(C)
    :=
    \min_H
    \max\{
        \wt(h):
        h\text{ is a nonzero row of }H
    \},
    \label{eq:wex-definition}
\end{equation}
where the minimum is over all parity-check matrices $H$ for $C$
that are ML-exact on the BEC.  If $H$ has no nonzero rows, the inner
maximum is defined to be zero.

For a fixed row-weight budget, it is enough to consider the
low-weight dual closure.  The following proposition gives the
precise reduction.

\begin{proposition}
\label{prop:budget-domination}
Let $H$ be a parity-check matrix for a binary linear code $C$, and
suppose that every nonzero row of $H$ has weight at most $W$.
Then $H_{\leq W}(C)$ is also a parity-check matrix for $C$.
Moreover, for every erasure set $E$,
\(
    \Core_{H_{\leq W}(C)}(E)
    \subseteq
    \Core_H(E),
\)
and for every $S\subseteq[n]$,
\(
    \operatorname{rowspan}(H_S)
    =
    \operatorname{rowspan}\bigl((H_{\leq W}(C))_S\bigr).
\)
In particular,
\[
    \rank(H_S)
    =
    \rank\bigl((H_{\leq W}(C))_S\bigr).
\]
\end{proposition}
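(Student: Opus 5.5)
The plan rests on one observation: every nonzero row of $H$ is a nonzero dual codeword of weight at most $W$, and is therefore a row of $H_{\leq W}(C)$. Since $H$ is a parity-check matrix for $C$, its row space is $C^\perp$. Every row of $H_{\leq W}(C)$ lies in $C^\perp$ by definition. Hence the rows of $H_{\leq W}(C)$ include all nonzero rows of $H$ and are contained in $C^\perp$, which gives
\[
\operatorname{rowspan}H\subseteq\operatorname{rowspan}H_{\leq W}(C)\subseteq C^\perp=\operatorname{rowspan}H .
\]
So the row spaces coincide, and $\ker H_{\leq W}(C)=C$. This proves the first claim, and it does not need $W\geq r$: the containment of the rows of $H$ is all that is used.

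For the core inclusion, I would use the characterization of $\Core_H(E)$ as the largest stopping set of $H$ contained in $E$. This set is well defined because a union of stopping sets is again a stopping set. Let $S=\Core_{H_{\leq W}(C)}(E)$. If $S$ is empty there is nothing to prove. Otherwise $S$ is a stopping set of $H_{\leq W}(C)$, so no row of $H_{\leq W}(C)$ meets $S$ in exactly one coordinate. Every nonzero row of $H$ is such a row, and zero rows meet $S$ in no coordinate at all. Thus $S$ is a stopping set of $H$ contained in $E$, and maximality gives $S\subseteq\Core_H(E)$.

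The column-restriction statement follows by applying the restriction map $\pi_S:\F_2^n\to\F_2^S$. This map is linear, and the rows of $H_S$ and of $(H_{\leq W}(C))_S$ are the images under $\pi_S$ of the rows of $H$ and of $H_{\leq W}(C)$ respectively. Therefore
\[
\operatorname{rowspan}(H_S)=\pi_S(\operatorname{rowspan}H)=\pi_S(C^\perp)=\pi_S(\operatorname{rowspan}H_{\leq W}(C))=\operatorname{rowspan}\bigl((H_{\leq W}(C))_S\bigr).
\]
Taking dimensions yields the rank equality. No step here is a real obstacle. The only points needing care are the well-definedness of the core as a maximal stopping set and the handling of zero rows of $H$.
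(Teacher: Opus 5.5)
Your proposal is correct and follows essentially the same route as the paper's proof. Both sandwich the row spaces to get $\operatorname{rowspan} H_{\leq W}(C)=C^\perp$, both deduce the core inclusion because every nonzero row of $H$ is a row of $H_{\leq W}(C)$, and both obtain the rank equality by applying the restriction map $\pi_S$ to $C^\perp$. Your remarks on the empty core and zero rows add only minor bookkeeping.
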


\begin{proof}
Since every nonzero row of $H$ has weight at most $W$,
\[
    C^\perp
    =
    \operatorname{rowspan} H
    \subseteq
    \operatorname{rowspan} H_{\leq W}(C)
    \subseteq
    C^\perp.
\]
Thus $H_{\leq W}(C)$ is a parity-check matrix for $C$.

Every stopping set of $H_{\leq W}(C)$ is also a stopping set of $H$,
because the former matrix contains every nonzero row of the latter.
Taking the largest stopping set contained in $E$ gives
\(
    \Core_{H_{\leq W}(C)}(E)
    \subseteq
    \Core_H(E).
\)

For $S\subseteq[n]$, let
$\pi_S:\F_2^n\to\F_2^S$ denote restriction to the coordinates in
$S$.  Both $H$ and $H_{\leq W}(C)$ have row space $C^\perp$, and
therefore
\[
    \operatorname{rowspan}(H_S)
    =
    \pi_S(C^\perp)
    =
    \operatorname{rowspan}\bigl((H_{\leq W}(C))_S\bigr).
\]
The rank equality follows.
\end{proof}

Thus \(H_{\leq W}(C)\) leaves no larger peeling residual than any
parity-check matrix for \(C\) whose nonzero rows have weight at most
\(W\).  In particular, if \(H_{\leq W}(C)\) has a coverable stopping
set, then no such parity-check matrix can be ML-exact on the BEC.

\section{Main results}
\label{sec:barrier}

We first state the main results in this section.  Their proofs are
given in Section~\ref{sec:proof-locality-barrier},
with supporting technical estimates proved in the appendix.

Let $(x_{\BP},y_{\BP})$ denote the positive solution of
\eqref{eq:de} at $\epsilon=\epsBP$ satisfying
\(
    f_{\epsBP}'(x_{\BP})=1.
\)
Throughout, $\log$ denotes the natural logarithm.  Set
\[
    q:=r-2,\qquad
    b:=(\ell-1)(r-1),\qquad
    c_0:=\frac{q}{\log b},
\]
and define
\[
    c_{\mathrm{gate},\BP}
    :=\frac{1}{-\log(1-x_{\BP})},
    \qquad
    c_*:=\min\{c_{\mathrm{gate},\BP},2c_0\}.
\]

Our first theorem shows that, for every fixed margin below the scale
$c_*\log n$, adding all dual checks within the row-weight budget does
not change the normalized-residual threshold from $\epsBP$.

\begin{theorem}
\label{thm:threshold}
Fix $3\leq\ell<r$.  For every fixed $\gamma>0$ and every deterministic sequence of
positive integers $W_\bullet=(W_n)$ eventually satisfying
\[
    r\leq W_n\leq(c_*-\gamma)\log n,
\]
we have
\[
    \epsilon_{\mathrm{res}}(W_\bullet)
    =
    \epsBP(\ell,r).
\]
\end{theorem}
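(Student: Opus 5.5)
The two inequalities are handled separately.

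\emph{The bound \(\epsilon_{\mathrm{res}}(W_\bullet)\geq\epsBP\).} For \(W_n\geq r\), the matrix \(H_{\leq W_n}(C_n)\) contains every row of \(H_{0,n}\). Hence, by Proposition~\ref{prop:budget-domination} applied with \(H=H_{0,n}\),
\[
\Core_{H_{\leq W_n}(C_n)}(E_\epsilon)\subseteq R_0(E_\epsilon).
\]
For \(\epsilon<\epsBP\), the recalled concentration result gives \(|R_0(E_\epsilon)|/n\to0\) in probability, so the same holds for the closure core. This is the easy half.

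\emph{The bound \(\epsilon_{\mathrm{res}}(W_\bullet)\leq\epsBP\).} Fix \(\epsilon>\epsBP\). We must show that the closure core still has linear size with probability bounded away from zero; indeed w.h.p. The plan is to show that, w.h.p., the residual \(R:=R_0(E_\epsilon)\) loses only \(o(n)\) coordinates when peeling continues with the heavier checks. Take a row \(h\in C_n^\perp\) with \(\wt(h)\leq W_n\) that meets \(R\) in exactly one coordinate. We need to show such rows are rare, or collectively reach few coordinates. The argument has two structural steps.

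\emph{Step 1: local structure of low-weight dual words.} Any dual word \(h\) is a sum \(\sum_{f\in T}h_f\) of original checks. Let \(T\) be a minimal representing set. I would show that, w.h.p. over \(G_n\), every dual word of weight \(w\leq(2c_0-\gamma)\log n\) comes from a set \(T\) whose induced subgraph is "tree-like plus bounded excess". The counting goes as follows. A set \(T\) of \(t\) checks produces weight \(rt-2(\text{cancelled pairs})\). Low weight therefore forces each variable in the union of supports to be covered at least twice, except for \(w\) of them. This forces a connected structure with about \(t(r-2)\) leaves, the quantity \(q\), and it must close up through cycles. A first-moment count of such subgraphs, with growth rate \(b=(\ell-1)(r-1)\) per level, shows they cannot exist unless \(w\gtrsim 2q\log n/\log b\). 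This is where \(2c_0\) comes from; the count runs over the configuration model and then conditions on simplicity. Concretely, I expect to establish that below \(2c_0\log n\) every low-weight dual word is either a single check or has support structure contained in a radius-\(O(\log n)\) neighborhood, and that such neighborhoods have at most one cycle.

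\emph{Step 2: gating by the residual.} Next I bound the probability that such an \(h\) has exactly one coordinate in \(R\) but is not an original check. The residual \(R\) is locally described by the density-evolution fixed point \(x\) near \(x_{\BP}\); this is the stability of the upper branch, \(f_\epsilon'(x)<1\). Each of the roughly \(w\) support coordinates of \(h\) lies outside \(R\) with probability about \(1-x\). Requiring all but one of them to avoid \(R\) costs about \((1-x)^{w}\). Meanwhile the number of candidate words touching a fixed coordinate is at most polynomial, with exponent controlled in Step 1. Summing gives an expected number of newly peelable coordinates of \(n\cdot n^{-1+o(1)}\cdot\mathrm{poly}\), which is sublinear when \(w<\log n/(-\log(1-x))\). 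Letting \(\epsilon\downarrow\epsBP\) matches \(c_{\mathrm{gate},\BP}\); the margin \(\gamma\) absorbs the approximation \(x\approx x_{\BP}\) for \(\epsilon\) close to \(\epsBP\). For larger \(\epsilon\), monotonicity of the core in \(\epsilon\) suffices. Finally, a stability argument shows that removing \(o(n)\) coordinates from \(R\) triggers only \(o(n)\) further peeling. The original core is robust because \(f_\epsilon'(x)<1\), i.e.\ perturbations of the residual contract; this follows the standard core-robustness results for random hypergraphs. Iterating between new checks and original-matrix peeling therefore stops at a core of size \((\epsilon y^\ell-o(1))n\).

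The main obstacle is Step 2 combined with the iteration. The events "\(h\) meets \(R\) once" are correlated with the local structure of \(R\). Moreover, after removing a few coordinates, new closure checks may become active, so the gating estimate must hold uniformly for slight perturbations of \(R\). I would try first to condition on the local neighborhoods, which are trees w.h.p., and compute the gating probability via the branching-process description of \(R\). I would then use a second-moment or bounded-differences bound to control all rounds at once.
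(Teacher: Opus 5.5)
Your easy direction and the closing monotone-coupling step match the paper: you run the argument at some $\epsilon_1$ close to $\epsBP$ and then use $E_{\epsilon_1}\subseteq E_\epsilon$. The hard direction, however, has a genuine gap, and it is exactly the obstacle you flag yourself.

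You gate a closure row by requiring it to meet the current residual in exactly one coordinate, at cost $(1-x)^w$. That event is not monotone under deletions: every removed coordinate makes more supports avoid the current set. So no first-round count controls the cascade, and you supply no mechanism that does. The computation is also inverted. The factor $(1-x)^w$ decreases in $w$, so it cannot bound $w$ from above. Heuristically, once the factor $p^s$ for the cycle variables is included, the number of cycle rows meeting $R$ exactly once is of order $\sum_s\bigl(bp(1-x)^q\bigr)^s=\sum_s f_\epsilon'(x)^s=O(1)$ for every $c$, so the first round never sees $c_{\mathrm{gate},\BP}$.

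The missing idea is Lemma~\ref{lem:cycle-gate}: if a cycle word $h_K$ with $\mu(K)=1$ meets a stopping set $S$ of $H_{0,n}$ exactly once, then every cycle variable lies in $S$. This condition only gets harder to satisfy as $S$ shrinks. Hence a bad set $X(W;R)$ defined from $R$ alone works simultaneously for every stopping set $S\subseteq R$ (Lemma~\ref{lem:closure-stopping}). A single deletion of $X(W_n;R)\cap R$, followed by re-peeling with $H_{0,n}$ alone, then yields a stopping set of the closure, with no iteration. What this requires is a resilience bound that holds simultaneously over all graph-dependent $D\subseteq R$ with $|D|\leq\delta_{\mathrm{seed}}n$ (Theorem~\ref{thm:core-resilience}), not an appeal to hypergraph core robustness.

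The threshold itself comes from counting gated cycles. There are about $b^s/s$ cycles with $s$ checks, and each has all its cycle variables in $R$ with probability about $p^s$, where $p=\epsilon y^{\ell-2}$. So their supports have expected total size $n^{c/c_{\mathrm{gate}}(\epsilon)+o(1)}$, up to a collision term. The form $-\log(1-x_{\BP})$ arises only through the tangency identity $bp_{\BP}(1-x_{\BP})^q=f_{\epsBP}'(x_{\BP})=1$.

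Step~1 is also false as stated. Skeletons with $\mu\geq2$ and weight at most $c\log n$ exist once $c>c_0$: theta-shaped ones alone have expected number $n^{c/c_0-1+o(1)}$. For small $\gamma$ the theorem's range reaches beyond $c_0$, because $p_{\BP}<1$ gives $c_{\mathrm{gate},\BP}>c_0$. So ``at most one cycle'' fails on part of the range. The factor $2$ in $2c_0$ does not come from these skeletons not existing. It comes from their total support being $n^{c/c_0-1+o(1)}$, which is $o(n)$ exactly when $c<2c_0$ (Lemma~\ref{lem:higher-excess-support}), and they must be deleted as well.

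Your count also presupposes two facts that require proof.
\begin{itemize}
\item Low-weight words have generating sets of size at most $W_n$. This is Theorem~\ref{thm:localization}, whose linear-size range needs a separate entropy bound.
\item Tree-shaped sums of checks, which are far more numerous than cycles, never meet a stopping set exactly once. This is Lemma~\ref{lem:tree-inert}; together with Proposition~\ref{prop:stopping-active-reduction}, it reduces every exact-one row to a connected cyclic skeleton.
\end{itemize}
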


The same row-weight range also precludes ML-exactness.  In fact, with high
probability over $G_n$, the low-weight dual closure has
a coverable stopping set of linear size.

\begin{theorem}
\label{thm:coverable}
Fix $3\leq\ell<r$ and $\gamma>0$.  There exists
$\eta=\eta(\ell,r,\gamma)>0$ such that, for every deterministic
sequence of positive integers $W_\bullet=(W_n)$ eventually satisfying
\[
    r\leq W_n\leq(c_*-\gamma)\log n,
\]
with probability $1-o(1)$ over $G_n$, the matrix
$H_{\leq W_n}(C_n)$ has a coverable stopping set $S_n$ with
\[
    |S_n|\geq\eta n.
\]
\end{theorem}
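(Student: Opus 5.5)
We will obtain $S_n$ as the residual of peeling with the original matrix at an erasure probability slightly above threshold. Fix $\epsilon\in(\epsBP,1)$ close to $\epsBP$, and consider the stopping set $R_0(E_\epsilon)$ of $H_{0,n}$. By density evolution, $|R_0(E_\epsilon)|/n$ concentrates near $\epsilon y^\ell>0$, where $(x,y)$ is the upper fixed point. The argument behind Theorem~\ref{thm:threshold} should show that adding all dual checks of weight at most $W_n$ removes only $o(n)$ further coordinates. The plan is to reuse that stability statement in the form
\[
|R_0(E_\epsilon)\setminus \Core_{H_{\leq W_n}(C_n)}(E_\epsilon)|=o(n)
\]
with high probability. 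This gives a stopping set $S'$ of $H_{\leq W_n}(C_n)$ of linear size.

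The main work is coverability. The residual core of a $(\ell,r)$-graph above threshold is a random graph whose variable nodes have degree $\ell$ and whose check nodes have degree at least $2$. The rank deficiency of $(H_{0,n})_{S'}$ is then governed by the number of variables exceeding the number of checks, which is positive above the MAP regime. So $S'$ itself is not coverable; we only need a coverable stopping subset. Since coverability is hereditary, we will extract one.

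First, we pass to a coverable subset of the erasure set. Take $\epsilon'$ slightly above $\epsBP$ with $\epsilon'<\epsMAP$. Uniquely decodable erasure patterns of positive density then exist: for $\epsilon'<\epsMAP$, $E_{\epsilon'}$ is coverable with high probability, or at least has rank deficiency $o(n)$ by the standard Maxwell/area-theorem results. If the deficiency is $o(n)$, we delete a minimal set $D$ of $o(n)$ erased coordinates to restore full column rank. Deleting erased coordinates can only shrink the core, and it shrinks it by an amount controlled by the same stability estimate. Concretely, we expect that un-erasing $o(n)$ coordinates causes only $o(n)$ further peeling when $f'_{\epsilon'}(x)<1$. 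This is the key use of the stable upper branch.

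Set $S_n:=\Core_{H_{\leq W_n}(C_n)}(E_{\epsilon'}\setminus D)$. It is a stopping set of $H_{\leq W_n}(C_n)$, and it is coverable because it is contained in a coverable set (Proposition~\ref{prop:rank-interface}). Its size is at least $|R_0(E_{\epsilon'})|-o(n)\geq \eta n$, with $\eta$ roughly half of $\epsilon' y^\ell$. The existence statement over $G_n$ follows by averaging over the BEC randomness.

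The main obstacle is the robustness step. We must show that neither the low-weight dual words nor the removal of $D$ peels a linear portion of the residual. For the dual words, the plan is to use the structure of weight-$\leq(c_*-\gamma)\log n$ dual codewords in a locally tree-like graph. Such a word is a sum of original checks supported on a small subgraph. The term $2c_0$ in $c_*$ reflects that a non-local dual word needs two cycles of length about $\log n/\log b$ each, with row weight growing like $q$ per check. For such a word to have exactly one coordinate in the residual, one of its checks must act as a "gate": all of its other neighbours are unerased. This happens with probability about $(1-x)^{W}$, and $c_{\mathrm{gate},\BP}$ is exactly the scale at which the expected number of such gates becomes $n^{-\Omega(\gamma)}$. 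The marginal issue is that the perturbed peeling must stay near the stable fixed point. This is handled by continuity as $\epsilon'\downarrow\epsBP$ together with $f'<1$, and it is where I would spend most effort, via a first-moment count of gated local dual words combined with a Wormald-type stability argument for the peeling process.
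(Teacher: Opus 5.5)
\textbf{Where the proposal is sound.} Your architecture is the paper's. You take $R=R_0(E_{\epsilon'})$ for some $\epsilon'\in(\epsBP,\epsMAP)$ close to $\epsBP$, show that the rows of $H_{\leq W_n}(C_n)$ destroy only a small part of it, and inherit coverability from the erasure set. Your coverability step is sound and slightly more economical than the paper's. The rank deficiency $|E_{\epsilon'}|-\rank\bigl((H_{0,n})_{E_{\epsilon'}}\bigr)$ is the dimension of the space of codewords supported in $E_{\epsilon'}$. It is at most the number of ambiguous coordinates, hence $o(n)$ w.h.p.\ below $\epsMAP$, and deleting that many erased coordinates restores full column rank. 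This lets you drop the linear-minimum-distance input of Lemma~\ref{lem:submap-unique}, provided your stability result holds simultaneously for every small deletion set, as Theorem~\ref{thm:core-resilience} does. (Your earlier remark that the residual itself is not coverable is false for $\epsilon$ near $\epsBP$.)

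\textbf{The gap.} It is the robustness step, which carries essentially the whole proof. You leave it as a plan whose stated ingredients do not work as described.
\begin{itemize}
\item Theorem~\ref{thm:threshold} cannot be used as a black box. It only says that the normalized core does not tend to $0$ in probability. That gives neither a w.h.p.\ linear stopping set at your particular $\epsilon'$ nor $|R_0(E_\epsilon)\setminus\Core_{H_{\leq W_n}(C_n)}(E_\epsilon)|=o(n)$.
\item The claim that every low-weight dual word is a sum of original checks on a small subgraph does not follow from local tree-likeness, since cancellation among many rows can produce low weight. It needs a first-moment argument over generating sets of every size, including linear ones (Theorem~\ref{thm:localization}).
\item Most importantly, counting words that meet the residual in exactly one coordinate does not control the cascade. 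After each peeling step, other low-weight words become exact-one for the smaller set. Heuristically, with $(x,y)$ the upper fixed point at $\epsilon'$ and $p=\epsilon'y^{\ell-2}$, the expected number of alternating $s$-cycles that are exact-one with respect to $R$ itself is of order $(bp(1-x)^q)^s=f'_{\epsilon'}(x)^s$. This is summable at every scale, so this count is not where $c_{\mathrm{gate}}$ comes from. Your gate condition (all other neighbours unerased) is also not necessary, since neighbours may be erased and later peeled.
\end{itemize}

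\textbf{The missing idea.} You need a deletion set determined by $R$ alone. After reducing an exact-one word to a connected cyclic skeleton, an alternating cycle can be exact-one with an $H_{0,n}$-stopping set $S\subseteq R$ only if all its cycle variables lie in $S$ (Lemma~\ref{lem:cycle-gate}). The union of supports of cycles whose cycle variables all lie in $R$ has expected size about $(bp)^{W_n/q}=n^{c/c_{\mathrm{gate}}(\epsilon')+o(1)}$ for $W_n=c\log n$. Since $bp\to(1-x_{\BP})^{-q}$ as $\epsilon'\downarrow\epsBP$ by tangency, this is the true origin of $c_{\mathrm{gate},\BP}$. Skeletons with $\mu\geq2$ cover only $n^{c/c_0-1+o(1)}$ coordinates. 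Deleting this $o(n)$ set from $R$ and re-peeling with $H_{0,n}$ alone gives a set that is automatically a stopping set of $H_{\leq W_n}(C_n)$ (Lemma~\ref{lem:closure-stopping}). So no cascade through redundant checks ever has to be analysed.

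\textbf{Uniform stability.} This deletion set, and your rank-repair set, depend on the graph and the erasures. The stability you need must therefore hold simultaneously for all $D\subseteq R$ with $|D|\leq\delta n$. A Wormald-type analysis of one peeling trajectory does not give that. The paper obtains it from subcriticality of the graph formed by residual degree-two checks (branching factor $f'_{\epsilon'}(x)<1$), together with a union bound over families of its components.
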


Theorem~\ref{thm:coverable}, together with
Propositions~\ref{prop:rank-interface} and
\ref{prop:budget-domination}, gives a lower bound on the maximum row
weight required for ML-exactness.

\begin{corollary}
\label{cor:wex}
For every fixed $\gamma>0$,
\[
    \Prob\!\left\{
        \wex(C_n)>(c_*-\gamma)\log n
    \right\}
    \longrightarrow1,
\]
where the probability is over $G_n$.
\end{corollary}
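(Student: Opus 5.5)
We derive Corollary~\ref{cor:wex} directly from Theorem~\ref{thm:coverable}, using Propositions~\ref{prop:rank-interface} and~\ref{prop:budget-domination}. Fix $\gamma>0$ and set $W_n:=\lfloor(c_*-\gamma)\log n\rfloor$. Since $c_*-\gamma$ may be nonpositive, first dispose of that case: if $c_*\le\gamma$, the event holds whenever $\wex(C_n)\geq 1$. This is true as soon as $C_n\neq\F_2^n$, since then no all-zero matrix is a parity-check matrix for $C_n$. The latter holds because $H_{0,n}$ has full row rank close to $m\geq1$; at minimum, any nonzero row $h_f$ is a nonzero dual word. If $c_*>\gamma$, then $W_n\to\infty$, so eventually $r\le W_n\le (c_*-\gamma)\log n$, and $W_\bullet$ is an admissible deterministic sequence for Theorem~\ref{thm:coverable}.

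By Theorem~\ref{thm:coverable}, on an event $\mathcal A_n$ with $\Prob(\mathcal A_n)=1-o(1)$, the matrix $H_{\le W_n}(C_n)$ has a coverable stopping set $S_n$ with $|S_n|\ge\eta n>0$. We claim that on $\mathcal A_n$ we have $\wex(C_n)>(c_*-\gamma)\log n$. Suppose instead that some ML-exact parity-check matrix $H$ for $C_n$ has all nonzero rows of weight at most $(c_*-\gamma)\log n$. Since weights are integers, the bound is at most $W_n$. By Proposition~\ref{prop:budget-domination}, every row of $H$ is a row of $H_{\le W_n}(C_n)$ (or zero). Hence $S_n$, being a stopping set of the larger matrix, is also a stopping set of $H$. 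The rank identity of the same proposition gives $\rank(H_{S_n})=\rank((H_{\le W_n}(C_n))_{S_n})=|S_n|$; coverability is code-intrinsic, since rowspan equals $C_n^\perp$. Then $S_n$ is a nonempty stopping set of $H$ with $\rank(H_{S_n})=|S_n|$, which contradicts ML-exactness of $H$ by Proposition~\ref{prop:rank-interface}.

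It follows that $\Prob\{\wex(C_n)>(c_*-\gamma)\log n\}\ge\Prob(\mathcal A_n)\to1$. The only delicate points are bookkeeping: replacing the real bound by the integer $W_n$ so that Theorem~\ref{thm:coverable} applies, noting that the minimum in \eqref{eq:wex-definition} is over a nonempty set (the full dual matrix $H_{\le n}(C_n)$ is ML-exact by the $k+1$ benchmark), and handling the trivial regime $c_*\le\gamma$. All the substantive difficulty sits in Theorem~\ref{thm:coverable}, which we take as given.
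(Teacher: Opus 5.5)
Your proof is correct and follows the paper's argument. You set $W_n=\lfloor(c_*-\gamma)\log n\rfloor$, invoke Theorem~\ref{thm:coverable}, use Propositions~\ref{prop:budget-domination} and~\ref{prop:rank-interface} to exclude any ML-exact matrix whose rows have weight at most $W_n$, and use integrality of row weights to convert the bound. You also handle the degenerate case $\gamma\ge c_*$ more carefully than the paper, which calls it immediate, by noting that $\wex(C_n)\ge1$ is needed when $\gamma=c_*$.
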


\begin{proof}
Suppose first that $0<\gamma<c_*$ and set
\(
    W_n:=\left\lfloor(c_*-\gamma)\log n\right\rfloor.
\)
For all sufficiently large $n$, we have $W_n\geq r$.  By
Theorem~\ref{thm:coverable}, with probability $1-o(1)$ the matrix
$H_{\leq W_n}(C_n)$ has a coverable stopping set.  On this event,
Propositions~\ref{prop:budget-domination} and
\ref{prop:rank-interface} rule out every ML-exact parity-check matrix
whose nonzero rows have weight at most $W_n$.  Thus
\(
    \wex(C_n)>W_n=\lfloor(c_*-\gamma)\log n\rfloor,
\)
which is equivalent to the asserted bound because $\wex(C_n)$ is
integer valued.
For $\gamma\geq c_*$, the claim is immediate.
\end{proof}

Thus, for every fixed $\gamma>0$, with probability tending to one
over $G_n$, every ML-exact parity-check matrix for $C_n$ contains a
row of weight exceeding $(c_*-\gamma)\log n$.

Corollary~\ref{cor:wex} only requires one row above the logarithmic
weight scale.  The following result shows that, for ML-exactness,
such heavier rows must collectively meet a linear number of
coordinates.

\begin{theorem}
\label{thm:heavy-support}
Fix $3\leq\ell<r$ and $\gamma>0$, and set
\(
    a_n:=(c_*-\gamma)\log n.
\)
For a parity-check matrix $H$ for $C_n$, let
\[
    U_{>a_n}(H)
    :=
    \bigcup_{\substack{h\text{ row of }H\\ \wt(h)>a_n}}
    \supp(h).
\]
There exists $\delta=\delta(\ell,r,\gamma)>0$ such that, with
probability $1-o(1)$ over $G_n$, every ML-exact parity-check matrix
$H$ for $C_n$ satisfies
\[
    |U_{>a_n}(H)|\geq\delta n.
\]
\end{theorem}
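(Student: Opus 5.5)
We reduce Theorem~\ref{thm:heavy-support} to a strengthened form of Theorem~\ref{thm:coverable}: with high probability, the low-weight dual closure $H_{\leq a_n}(C_n)$ has not just one linear-size coverable stopping set but a family of them, robust enough that heavy rows supported on few coordinates cannot destroy all of them. Fix an ML-exact $H$ and split its rows into light rows ($\wt\leq a_n$) and heavy rows ($\wt>a_n$), and write $U=U_{>a_n}(H)$. Every light row is a nonzero dual codeword of weight at most $a_n$, hence a row of $H_{\leq a_n}(C_n)$. Therefore, if $S$ is a stopping set of $H_{\leq a_n}(C_n)$ with $S\cap U=\emptyset$, then no heavy row meets $S$ at all, and no light row meets $S$ in exactly one coordinate. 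So $S$ is a stopping set of $H$. If $S$ is also nonempty and coverable (coverability depends only on $C_n$, as noted before Proposition~\ref{prop:rank-interface}), Proposition~\ref{prop:rank-interface} contradicts ML-exactness of $H$. It thus suffices to show the following: with high probability over $G_n$, every set $U\subseteq[n]$ with $|U|<\delta n$ is disjoint from some nonempty coverable stopping set of $H_{\leq a_n}(C_n)$.

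To get this, reuse the construction underlying Theorem~\ref{thm:coverable}. We expect $S_n$ to arise as (a large part of) the residual $R_0(E_\epsilon)$ for some $\epsilon$ slightly above $\epsBP$, made coverable. The key structural input from the threshold proof is that low-weight dual checks look locally like sums of original checks, so they cannot peel the $H_{0,n}$-residual. This makes the property ``stopping set of $H_{\leq a_n}(C_n)$'' stable under the residual construction. The plan is then as follows.
\begin{enumerate}
\item Given $U$, run the construction on the punctured erasure set $E_\epsilon\setminus U$, or equivalently set the coordinates of $U$ to known.
\item Use stability of the residual: since the upper fixed point satisfies $f_\epsilon'(x)<1$, removing $|U|\leq\delta n$ variables shrinks the $H_{0,n}$-residual by at most $K\delta n$, for a constant $K=K(\ell,r,\epsilon)$. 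This is the stability result mentioned in the outline of Section~\ref{sec:proof-locality-barrier}.
\item Conclude that the core of $H_{\leq a_n}(C_n)$ inside $E_\epsilon\setminus U$ still has size at least $(\eta-K\delta)n>0$ for $\delta$ small.
\end{enumerate}
Coverability persists for the smaller set, since subsets of coverable sets are coverable. This yields a nonempty coverable stopping set of $H_{\leq a_n}(C_n)$ disjoint from $U$.

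The main obstacle is uniformity over $U$. The sets $U$ are adversarial, and there are exponentially many of them, $\binom{n}{\delta n}$. So the statement ``the core of $E\setminus U$ is large'' cannot be obtained by a union bound over high-probability events for fixed $U$. We would instead use a deterministic, graph-level robustness statement. First fix one good realization: with high probability, there is an erasure set $E$ whose $H_{0,n}$-core $R$ is linear and coverable. On that event, and with high probability on the graph, the following must hold for all $U$ simultaneously:
\begin{itemize}
\item the low-weight dual checks agree with $H_{0,n}$-checks on $R$, in the sense that every dual word of weight at most $a_n$ meets $R\setminus U'$ in zero or at least two coordinates whenever $R\setminus U'$ is an $H_{0,n}$-stopping set;
\item deleting any $\delta n$ variables from the core of a $(\ell,r)$-regular expanding graph leaves an $H_{0,n}$-core of size at least $|R|-K\delta n$.
\end{itemize}
The second item should follow from a deterministic peeling-cascade bound. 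Each deletion triggers a cascade, and since the residual graph has a nondegenerate $2$-core (the analogue of $f'<1$), cascades have bounded average size. We would try to prove this via the subcritical branching structure of the residual degree distribution, together with the expansion of random regular graphs to control cascades uniformly. If that proves hard, we would fall back on an expected-count bound for ``fragile'' cores, meaning cores containing many variables whose removal collapses large parts of the core, which should suffice for the union over all $U$.
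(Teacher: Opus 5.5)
Your opening reduction is correct and matches the paper's. Nonzero rows of $H$ of weight at most $a_n$ are rows of $H_{\le\lfloor a_n\rfloor}(C_n)$, and heavier rows are supported in $U$. So a nonempty coverable stopping set of the closure that avoids $U$ is a coverable stopping set of $H$, which contradicts Proposition~\ref{prop:rank-interface}. You also correctly identify uniformity over adversarial $U$ as the difficulty.

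\textbf{The gap.} The problem is the bridge from $H_{0,n}$-cores to stopping sets of $H_{\le a_n}(C_n)$. Your first bullet asserts that whenever $R\setminus U'$ is a stopping set of $H_{0,n}$, every dual word of weight at most $a_n$ meets it in zero or at least two coordinates, so that no exceptional set is needed. This is not true. Take an alternating cycle $f_1,v_1,\ldots,f_s,v_s$ with all $v_i\in R$. The skeleton $K=\{f_1,\ldots,f_s\}$ gives a dual word of weight $qs\le a_n$ supported on $A_1\cup\cdots\cup A_s$. The $H_{0,n}$-stopping property of $R\setminus U'$ does not prevent $|(R\setminus U')\cap\supp(h_K)|=1$. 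Lemma~\ref{lem:cycle-gate} shows that an exact-one intersection forces exactly this configuration; it does not show that the configuration cannot occur. Such cycles are not scarce: the first-moment count in the proof of Lemma~\ref{lem:gated-active-zone} scales like $(bp)^s$ in the length, with $bp>1$. The paper also notes explicitly that $R_0(E_\epsilon)$ need not remain a stopping set once redundant checks are added. Your description of the threshold argument (low-weight checks ``cannot peel the $H_{0,n}$-residual'') has the same defect: they can, but only on $o(n)$ coordinates. Without this bridge, step~3 does not follow from step~2. A telling sign is that your argument never uses the hypothesis $a_n\le(c_*-\gamma)\log n$.

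\textbf{The missing ingredient.} What you need is the exceptional set $X(W;R)$ of \eqref{eq:cyclic-support-set}. It is the union of $\supp(h_K)$ over $\mu=1$ skeletons whose cycle variables lie in $R$, and over $\mu\ge2$ skeletons with $|K|,\wt(h_K)\le W$.
\begin{itemize}
\item Lemma~\ref{lem:closure-stopping} holds simultaneously over all $R$ and all $H_{0,n}$-stopping sets $S\subseteq R$. It says that any such $S$ avoiding $X(W;R)$ is a stopping set of the closure.
\item Lemmas~\ref{lem:gated-active-zone} and~\ref{lem:higher-excess-support} give $|X(W_n^-;R)|=o(n)$ with high probability. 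This is exactly where $c_*$ enters: one chooses $\epsilon_0\in(\epsBP,\epsMAP)$ close enough to $\epsBP$ that $c_{\mathrm{gate}}(\epsilon_0)>c_*-\gamma/2$. The condition $\epsilon_0<\epsMAP$ supplies coverability via Lemma~\ref{lem:submap-unique}.
\end{itemize}

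\textbf{The repair.} Fold $X$ into the deletion set. With $\delta=\delta_{\mathrm{seed}}/2$, take $D_H=(X(W_n^-;R)\cup U_H)\cap R$, so that $|D_H|<\delta_{\mathrm{seed}}n$. Then $S_H=\Core_{H_{0,n}}(R\setminus D_H)$ has size at least $\eta_{\mathrm{core}}n$ by Theorem~\ref{thm:core-resilience}. It avoids both $X$ and $U_H$, so it is a coverable stopping set of $H$. Because $X$ is computed once for the unpunctured residual $R$, and the resilience and closure statements are simultaneous in $D$ and $S$, no union bound over $U$ is needed.

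A smaller point: your linear shrinkage bound $|R|-K\delta n$ is more than the available estimates give. Against adversarial deletions it is more than is true. Deleting one vertex in each of the $\delta n$ largest components of the degree-two-check graph can cost order $\delta\log(1/\delta)\,n$. Only a lower bound of the form $\eta_{\mathrm{core}}n$ is needed, and that is what Theorem~\ref{thm:core-resilience} supplies.
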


Since
\[
    |U_{>a_n}(H)|
    \leq
    \sum_{\substack{h\text{ row of }H\\ \wt(h)>a_n}}
    \wt(h),
\]
the same high-probability event gives
\[
    \sum_{\substack{h\text{ row of }H\\ \wt(h)>a_n}}
    \wt(h)
    \geq
    \delta n
\]
for every ML-exact parity-check matrix $H$.
This incidence bound also gives a lower bound on the number of
heavier rows when their individual weights are bounded.  If every
nonzero row of $H$ has weight at most $L_n$, then
\[
    \#\bigl\{
        h\text{ row of }H:\wt(h)>a_n
    \bigr\}
    \geq
    \frac{\delta n}{L_n}.
\]
These conclusions continue to hold after zero rows and duplicate
copies are removed.

\section{Proofs of the main results}
\label{sec:proof-locality-barrier}

We now prove the three main results stated in
Section~\ref{sec:barrier}.  The common step is to construct a large
stopping set for the low-weight dual closure
$H_{\leq W_n}(C_n)$.

For $\epsilon>\epsBP$, the residual set $R_0(E_\epsilon)$ left by
the original matrix $H_{0,n}$ has linear size with high probability.
It is a stopping set of $H_{0,n}$, but it need not remain a stopping
set after redundant parity checks are added.
Section~\ref{sec:proof-locality-barrier}-A identifies a set of coordinates of vanishing normalized
size with the property that any exact-one intersection caused by a
logarithmic-weight dual row yields another exact-one intersection
supported within that set.  Section~\ref{sec:proof-locality-barrier}-B proves a separate stability
result: after deleting a sufficiently small subset of
$R_0(E_\epsilon)$, restarting the peeling process based on
$H_{0,n}$ still leaves a stopping set of linear size.
Section~\ref{sec:proof-locality-barrier}-C combines these two ingredients to prove the main
results.

\subsection{Localization and cyclic reduction of low-weight dual checks}

Because $H_{\leq W}(C_n)$ contains arbitrary low-weight dual words,
we first relate each such word to a small collection of the original
checks.  For $J\subseteq F_n$, define
\(
    h_J:=\sum_{f\in J} h_f \in \F_2^n,
\)
where the sum is taken over $\F_2$.  Thus $h_J$ is the redundant
parity check obtained by adding the original check rows indexed by
$J$.  Equivalently, a coordinate belongs to $\supp(h_J)$ if it is
contained in an odd number of checks in $J$.
For nonzero $h\in C_n^\perp$, define
\[
    \sigma(h)
    :=
    \min\{
        |J|:J\subseteq F_n,\ h_J=h
    \}.
\]
Thus $\sigma(h)$ is the smallest number of original check rows whose
binary sum equals $h$.
Notice that a small value of $\wt(h)$ does not by itself imply that
$\sigma(h)$ is small, since cancellations among many rows of
$H_{0,n}$ can reduce the weight of their sum.  What we need below is
the existence of a small generating set for every low-weight dual
word.  The following theorem gives this localization.

\begin{theorem}
\label{thm:localization}
Fix $A<\infty$ and $3\leq\ell<r$.  For every deterministic
integer-valued sequence $W_n\leq A\log n$, with probability
$1-o(1)$ over $G_n$, every nonzero $h\in C_n^\perp$ with
$\wt(h)\leq W_n$ satisfies
\(
    \sigma(h)\leq W_n.
\)
\end{theorem}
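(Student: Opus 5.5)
The plan is to fix a nonzero $h\in C_n^\perp$ with $\wt(h)\leq W_n$ and a minimizing set $J\subseteq F_n$, so that $h_J=h$ and $|J|=\sigma(h)$. I will show that, with high probability, no such minimizer can have $|J|>W_n$. The argument splits according to the size of $J$. For $|J|\leq\epsilon m$, with $\epsilon=\epsilon(\ell,r)>0$ small, I will use an expansion argument. For $|J|>\epsilon m$, I will use a first-moment count, which shows that with high probability no check set of linear size sums to a word of weight $O(\log n)$.

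\emph{Small $J$: expansion.} I will invoke the standard fact that the uniform simple $(\ell,r)$-regular Tanner graph is, with probability $1-o(1)$, a good check-side expander: for a fixed $\delta>0$ there is $\epsilon>0$ such that every $J\subseteq F_n$ with $|J|\leq\epsilon m$ has $|N(J)|\geq(r-1-\delta)|J|$. This follows from a routine first-moment union bound in the configuration model. Given this, let $s$ be the number of variables in $N(J)$ of degree exactly one into $J$. Every other neighbor absorbs at least two of the $r|J|$ edges, so
\[
r|J|\geq s+2(|N(J)|-s),
\]
which gives $s\geq 2|N(J)|-r|J|\geq(r-2-2\delta)|J|$. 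Degree-one variables have odd degree into $J$ and hence lie in $\supp(h_J)$. Since $r>\ell\geq3$ forces $r\geq4$, choosing $\delta<1/2$ yields
\[
|J|\leq s\leq\wt(h)\leq W_n.
\]
This case uses minimality of $J$ only through the assumption $|J|\leq\epsilon m$.

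\emph{Large $J$: first moment.} I will bound the expected number of pairs $(J,h_J)$ with $\epsilon m\leq|J|\leq m$ and $1\leq\wt(h_J)\leq A\log n$. In the configuration model, fix $|J|=\alpha m$ and a support set $T$ with $|T|=w$. The probability that exactly the variables in $T$ have odd degree into $J$ is a coefficient-extraction quantity; a saddle-point estimate bounds it by
\[
\exp\bigl(n\,\phi(\alpha)+O(w\log n)\bigr).
\]
Summing over the choices of $J$ and $T$ gives an exponent of the form $n\bigl(\tfrac{\ell}{r}H(\alpha)+\phi(\alpha)\bigr)+O(\log^2 n)$, where $H$ is the binary entropy function. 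The target is to show that the bracket is at most $-c<0$ uniformly on the compact range $\alpha\in[\epsilon,1-\epsilon']$. Heuristically, $\phi(\alpha)\approx-\log 2$ in the bulk while the entropy term is at most $\tfrac{\ell}{r}\log2<\log2$, so rate below one is what drives the negativity. The endpoint $\alpha\to1$ needs separate care, because $h_{F_n}$ is the parity of the variable degrees: it is $0$ for $\ell$ even and $\1$ for $\ell$ odd.
\begin{itemize}
\item If $\ell$ is even, replace $J$ by $F_n\setminus J$, which gives the same word. Minimality then forces $|J|\leq m/2$.
\item If $\ell$ is odd, the complement $J'=F_n\setminus J$ satisfies $h_{J'}=\1+h$, which has weight at least $n-W_n$. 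Since $\wt(h_{J'})\leq r|J'|$, this forces $|J'|\geq(n-W_n)/r$, so $\alpha$ stays bounded away from $1$.
\end{itemize}
Simplicity of the graph costs only a constant factor, because simple graphs occur with probability bounded away from zero in the configuration model. Markov's inequality then finishes this case.

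I expect the main obstacle to be the uniform saddle-point bound on $\phi(\alpha)$ across $[\epsilon,1-\epsilon']$, with the $O(\log n)$ odd-degree exceptions absorbed into the error term. The fallback is to bound the even-degree probability crudely by a generating-function estimate of the form $\inf_{z}\bigl(\cosh$-type expression$\bigr)$ and then check numerically and analytically that the exponent is negative in that range. The two regimes must also be matched by choosing $\epsilon$ from the expansion step before applying the first-moment step.
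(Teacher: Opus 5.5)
Your small-$J$ step is correct. It is essentially the paper's first regime in different packaging. The paper uses the same occupancy first moment that proves expansion to rule out sets $J$ with $W<|J|\le\delta_{\mathrm{loc}}n$ touching at most $(r|J|+W)/2$ variables. You instead invoke $(r-1-\delta)$-expansion and count degree-one neighbours; both versions need $r\ge4$. Your complement argument for the upper end of $\alpha$ and the simplicity transfer also match the paper.

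The gap is in the linear-size regime. Its entire content is the inequality $\sup_{\alpha\in[\epsilon,1-\epsilon']}\bigl(\frac{\ell}{r}h(\alpha)+\phi(\alpha)\bigr)<0$, with $h$ the binary entropy. You state this as a target but give no argument for it, and the heuristic you offer does not supply one.

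The value $\phi(1/2)=-\log 2$ is exact, but $\phi(\alpha)\to0$ as $\alpha\downarrow0$, and your expansion step forces the lower end $\alpha=\epsilon$ to be small. There the optimized saddle point gives $\phi(\alpha)=-\frac{\ell}{2}\alpha\log(1/\alpha)+O(\alpha)$, while the entropy term is $\frac{\ell}{r}\alpha\log(1/\alpha)+O(\alpha)$. Both vanish, so the sign rests on precise small-$\alpha$ asymptotics that you have not carried out.

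A non-optimized choice such as $x=\alpha/(1-\alpha)$ gives only $\phi(\alpha)\le\log\frac{1+(1-2\alpha)^{\ell}}{2}\approx-\ell\alpha$. This loses to the entropy term as $\alpha\downarrow0$; the resulting bound on the bracket is already positive for $\ell=3$, $r=4$, $\alpha=0.05$. Numerical checks cannot cover all pairs $3\le\ell<r$ either.

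The missing idea, which is how the paper handles this regime, is a direct combinatorial count that avoids saddle-point analysis entirely:
\begin{itemize}
\item Label the $\ell$ half-edges at each variable. Encode the matched set $Q$ of $\alpha\ell n$ variable-side half-edges by $(S_1,\dots,S_\ell)$, where $S_i$ is the set of variables whose $i$-th half-edge lies in $Q$.
\item The odd-degree set is $B=S_1\triangle\cdots\triangle S_\ell$, so $S_j$ is determined by $B$ and the remaining $S_i$.
\item Choose $j$ to maximize $h(|S_j|/n)$ and apply concavity of $h$. This bounds the number of such $Q$ with $|B|\le W$ by $e^{n(\ell-1)h(\alpha)+o(n)}$.
\end{itemize}
In your notation this says $\phi(\alpha)\le-h(\alpha)+o(1)$, so the first-moment exponent is at most $-(1-\ell/r)h(\alpha)+o(1)$. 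This is uniformly negative on every compact subinterval of $(0,1)$, so it meets your expansion regime at whatever $\epsilon$ that step produces. With this estimate in place, your argument goes through.
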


The proof is given in Appendix~\ref{app:dual-checks}.  Thus, with
probability $1-o(1)$ over $G_n$, every nonzero
$h\in C_n^\perp$ with $\wt(h)\leq W_n$ admits a set
$J\subseteq F_n$ such that
\(
    h=h_J,
    |J|\leq W_n.
\)

We next examine how such a generating set can produce an exact-one
intersection with a stopping set of $H_{0,n}$.  For  $J\subseteq F_n$,
let $T(J)$ denote the subgraph of $G_n$
consisting of the checks in $J$, all variables adjacent to them, and
all incident edges from the checks in $J$.
The connected components of $T(J)$ induce a partition
\(
    J
    =
    J_1\mathbin{\dot\cup}\cdots\mathbin{\dot\cup}J_k
\)
of $J$, where each $T(J_i)$ is a connected component
of $T(J)$.  In particular, the graphs $T(J_i)$ have disjoint variable
sets, and hence
\(
    \supp(h_J)
    =
    \biguplus_{i=1}^k \supp(h_{J_i}).
\)
Suppose now that $S$ is a stopping set of $H_{0,n}$ and
\(
    |S\cap\supp(h_J)|=1.
\)
Then there is a unique index $i$ for which
\(
    |S\cap\supp(h_{J_i})|=1.
\)
Denote the corresponding check set by $J_0$.  Replacing $J$ by
$J_0$ preserves the exact-one intersection and can only reduce the
number of checks and the weight of the generated row.
The graph $T(J_0)$ is connected and must contain a cycle.  Indeed,
Lemma~\ref{lem:tree-inert} in Appendix~\ref{app:dual-checks} rules out
the tree case: if $T(J_0)$ were a tree, then $h_{J_0}$ could not
meet a stopping set of $H_{0,n}$ in exactly one coordinate.

For $K\subseteq F_n$ and a variable $v\in V_n$, define
\(
    d_K(v):=|N(v)\cap K|.
\)
Let $B(K)$ be the bipartite graph with check-node set $K$,
variable-node set
\(
    \{v\in V_n:d_K(v)\geq2\},
\)
and all edges of $G_n$ between these nodes.  Thus $B(K)$ records
the variables shared by at least two checks in $K$, while variables
incident to only one check in $K$ are omitted.  We call $K$
a \emph{connected cyclic skeleton} if $B(K)$ is
nonempty and connected and every check node and every variable
node of $B(K)$ has degree at least two.
Proposition~\ref{prop:stopping-active-reduction} in
Appendix~\ref{app:dual-checks} shows that the exact-one component $J_0$
can be reduced to such a skeleton without increasing either the
number of checks or the weight of the generated dual word.
More precisely, there exists $K\subseteq J_0$ such that
\[
    |S\cap\supp(h_K)|=1,
    \qquad
    |K|\leq |J_0|\leq |J|,
    \qquad
    \wt(h_K)\leq\wt(h_{J_0})\leq\wt(h_J),
\]
and $K$ is a connected cyclic skeleton.

For a connected cyclic skeleton $K$, let
\(
    \mu(K):=|E(B(K))|-|V(B(K))|+1
\)
be the cyclomatic number of $B(K)$.  Theorem~\ref{thm:skeleton-identity}
in Appendix~\ref{app:dual-checks} shows that
\(
    |K|\leq(\wt(h_K)+2\mu(K)-2)/q.
\)
We treat separately the cases $\mu(K)=1$ and $\mu(K)\geq2$.

Suppose first that $\mu(K)=1$.  By
Theorem~\ref{thm:skeleton-identity}, $B(K)$ is an alternating cycle
in the simple Tanner graph.  Write its check nodes as
$f_1,\ldots,f_s$ and its variable nodes as
$v_1,\ldots,v_s$, labeled cyclically so that $f_i$ is adjacent to
$v_{i-1}$ and $v_i$, with indices taken modulo $s$.
For each $i$, let
\(
    A_i:=N(f_i)\setminus\{v_{i-1},v_i\}.
\)
Since $B(K)$ contains every variable adjacent to at least two checks
in $K$, and here $B(K)$ consists exactly of the cycle above, every
variable in $A_1\cup\cdots\cup A_s$ is adjacent to exactly one check
in $K$.  Hence the sets $A_1,\ldots,A_s$ are pairwise disjoint and
are disjoint from $\{v_1,\ldots,v_s\}$.
In the binary sum
\(
    h_K=\sum_{i=1}^s h_{f_i},
\)
each variable $v_i$ appears twice and cancels, whereas every variable
in $A_1\cup\cdots\cup A_s$ appears once.  Therefore
\[
    \supp(h_K)
    =
    A_1\mathbin{\dot\cup}\cdots\mathbin{\dot\cup}A_s,
    \qquad
    \wt(h_K)=qs.
\]

For the later count, we need one further consequence of the
stopping-set condition for these $\mu(K)=1$ configurations.

\begin{lemma}
\label{lem:cycle-gate}
Let $K$ be a connected cyclic skeleton with $\mu(K)=1$, with
$f_1,\ldots,f_s$, $v_1,\ldots,v_s$, and $A_1,\ldots,A_s$ as above.
If $S$ is a stopping set of $H_{0,n}$ and
\(
    |S\cap\supp(h_K)|=1,
\)
then
\(
    v_1,\ldots,v_s\in S.
\)
\end{lemma}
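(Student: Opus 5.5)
The plan is to run a propagation argument around the cycle, using only the stopping-set condition for the original checks $f_1,\ldots,f_s$. Suppose $S\cap\supp(h_K)=\{u\}$, and relabel cyclically so that $u\in A_1$. This is possible because $\supp(h_K)$ is the disjoint union of $A_1,\ldots,A_s$. Every other set $A_i$ with $i\neq1$ is then disjoint from $S$. We also have $|S\cap A_1|=1$ exactly.

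First consider the check $f_1$. Its neighborhood is $N(f_1)=\{v_0,v_1\}\dot\cup A_1$, where $v_0=v_s$, and it meets $S$ in $u$. Because $S$ is a stopping set of $H_{0,n}$, the row $h_{f_1}$ cannot meet $S$ in exactly one coordinate. Hence $|S\cap\{v_s,v_1\}|\geq1$, so at least one cycle variable adjacent to $f_1$ lies in $S$.

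Next, propagate along the cycle in both directions. Suppose $v_i\in S$ for some $i$, and consider the next check $f_{i+1}$, with $N(f_{i+1})=\{v_i,v_{i+1}\}\dot\cup A_{i+1}$. If $i+1\neq1$, then $A_{i+1}\cap S=\emptyset$. The intersection $S\cap N(f_{i+1})$ therefore contains $v_i$ and is contained in $\{v_i,v_{i+1}\}$. The stopping-set condition rules out size one, so $v_{i+1}\in S$. The same argument run backwards, using $f_i$ and $N(f_i)=\{v_{i-1},v_i\}\dot\cup A_i$ with $i\neq1$, gives $v_{i-1}\in S$.

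Now combine the two steps. Starting from whichever of $v_1$ or $v_s$ lies in $S$, propagate through the checks $f_2,\ldots,f_s$, avoiding only $f_1$. If $v_1\in S$, forward propagation through $f_2,\ldots,f_s$ yields $v_2,\ldots,v_s\in S$. If $v_s\in S$, backward propagation through $f_s,\ldots,f_2$ yields $v_{s-1},\ldots,v_1\in S$. Either way, all of $v_1,\ldots,v_s$ lie in $S$.

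The only point needing care is the bookkeeping: the neighborhoods must decompose exactly as $\{v_{i-1},v_i\}\dot\cup A_i$, the $v_i$ must be distinct, and $s\geq2$ must hold. All three come from the alternating-cycle structure in a simple Tanner graph, as established before the lemma. Beyond that, I expect no real obstacle, since the argument is purely combinatorial.
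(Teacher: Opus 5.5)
Your proof is correct and uses the same ingredients as the paper. At every check $f_i$ with $i\neq j$, the stopping-set condition forces $v_{i-1}\in S\Leftrightarrow v_i\in S$ along the path left after deleting $f_j$, and the condition at $f_j$ itself rules out the case that no cycle variable lies in $S$; you only change the order, applying the $f_j$ condition first to place $v_s$ or $v_1$ in $S$ and then propagating, whereas the paper first derives ``all or none'' and then excludes ``none''.
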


\begin{proof}
Let $u$ be the unique element of $S\cap\supp(h_K)$.  Since
\(
    \supp(h_K)
    =
    A_1\mathbin{\dot\cup}\cdots\mathbin{\dot\cup}A_s,
\)
there is a unique index $j$ such that $u\in A_j$.
Thus $A_j\cap S=\{u\}$ and $A_i\cap S=\varnothing$ for every $i\neq j$.

Fix $i\neq j$.  None of the variables in $A_i$ belongs to $S$, so
the only neighbors of $f_i$ that can belong to $S$ are
$v_{i-1}$ and $v_i$.  Since $S$ is a stopping set, the check $f_i$
cannot have exactly one neighbor in $S$.  Therefore either both
$v_{i-1}$ and $v_i$ belong to $S$, or neither does. Equivalently,
\[
    v_{i-1}\in S
    \quad\Longleftrightarrow\quad
    v_i\in S.
\]
Removing the check $f_j$ from the alternating cycle leaves a path
through all of $v_1,\ldots,v_s$.  The equivalence above therefore
implies that either all of these variables belong to $S$ or none of
them does.

The latter possibility is impossible.  If
$v_1,\ldots,v_s\notin S$, then among the neighbors of $f_j$ the only
one in $S$ is $u\in A_j$, contradicting the stopping-set property.
Therefore $v_1,\ldots,v_s\in S$.
\end{proof}

The lemma will be applied with $S\subseteq R_0(E_\epsilon)$.
Consequently, if $K$ is a connected cyclic skeleton with
$\mu(K)=1$ and
\(
    |S\cap\supp(h_K)|=1,
\)
then every variable node of $B(K)$ must belong to
$R_0(E_\epsilon)$. We now bound the union
of the supports of all skeletons that satisfy this containment
condition.

For $R\subseteq V_n$ and $W\geq1$, let
$X_{\mathrm{cyc}}(W;R)$ be the union of $\supp(h_K)$ over all
connected cyclic skeletons $K$ such that
\(
    \mu(K)=1,
    \wt(h_K)\leq W,
\)
and every variable node of $B(K)$ belongs to $R$.  For the random
set left after peeling at erasure probability $\epsilon$, write
\[
    X_{\mathrm{cyc}}^{(\epsilon)}(W)
    :=
    X_{\mathrm{cyc}}\bigl(W;R_0(E_\epsilon)\bigr).
\]

Fix $\epsilon>\epsBP$, and let $(x,y)$ be the nonzero fixed point
reached by density evolution initialized at $x_0=\epsilon$.  Set
\(
    p:=\epsilon y^{\ell-2}=x/y
\)
and
\(
    c_{\mathrm{gate}}(\epsilon):=q/\log(bp).
\)
We have the following estimation on the size of \(X_{\mathrm{cyc}}^{(\epsilon)}(W)\).

\begin{lemma}
\label{lem:gated-active-zone}
Fix $\epsilon>\epsBP$ and suppose that
\(
    W_n=c\log n+O(1).
\)
Then, for all sufficiently large \(n\),
\begin{equation}
    \E
    |X_{\mathrm{cyc}}^{(\epsilon)}(W_n)|
    \leq
    n^{c/c_{\mathrm{gate}}(\epsilon)+o(1)}
    +
    n^{c/c_0-1+o(1)}.
    \label{eq:gated-cycle-asymptotic}
\end{equation}
Consequently, if
\(
    c<\min\{c_{\mathrm{gate}}(\epsilon),2c_0\},
\)
then
\[
    \frac{|X_{\mathrm{cyc}}^{(\epsilon)}(W_n)|}{n}
    \xrightarrow{\Prob}0.
\]
\end{lemma}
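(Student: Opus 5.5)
We bound the expectation by a first-moment count over alternating cycles and split according to cycle length. For \(\mu(K)=1\), \(B(K)\) is an alternating cycle with checks \(f_1,\ldots,f_s\) and variables \(v_1,\ldots,v_s\), and \(\wt(h_K)=qs\). The condition \(\wt(h_K)\le W_n\) therefore forces \(s\le W_n/q=(c/q)\log n+O(1)\). By the union bound,
\[
    \E|X^{(\epsilon)}_{\mathrm{cyc}}(W_n)|
    \le
    \sum_{s\le W_n/q} qs\cdot
    \E\#\{\text{alternating }2s\text{-cycles }K:\ v_1,\ldots,v_s\in R_0(E_\epsilon)\}.
\]
The expected number of alternating \(2s\)-cycles in the simple \((\ell,r)\)-regular ensemble is \((1+o(1))b^s/(2s)\), uniformly for \(s=O(\log n)\). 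This follows from the configuration-model count \(n^s m^s (\ell(\ell-1))^s(r(r-1))^s/((\ell n)^{2s}\cdot 2s)\) and the standard transfer to the simple ensemble, whose acceptance probability is bounded below. The weight factor \(qs\) is only polynomial in \(\log n\), so it is absorbed into \(n^{o(1)}\).

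The core step is to estimate, conditionally on the cycle being present, the probability that all \(s\) cycle variables survive peeling. The guiding heuristic is density evolution along the cycle. Each \(v_i\) sees its \(\ell-2\) external checks, each of which is independently erased at the fixed point with probability \(y\), and \(v_i\) itself is erased with probability \(\epsilon\). Requiring all of these for every \(i\) gives probability about \((\epsilon y^{\ell-2})^s=p^s\). The two cycle checks adjacent to \(v_i\) are then consistent with being in the core, since their other cycle neighbors are also in the core. To make this rigorous for \(s\le C\log n\), I would condition on the cycle and explore depth-\(t\) neighborhoods of the \(s\) cycle variables through the external checks only. With high probability these neighborhoods are trees and pairwise disjoint, apart from a defect probability of order \(s^2(\text{poly}\log n)/n\). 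I would then use the monotonicity of the core in \(t\): the event \(v_i\in R_0\) implies that \(v_i\) survives \(t\) rounds of local peeling, where the cycle checks are treated as always unsatisfied, which only enlarges the event. The local survival probability is \(p_t\to p\) as \(t\to\infty\). Taking \(t\) to grow slowly, the contribution is \(\sum_s b^s(p+o(1))^s\le n^{(c/q)\log(bp)+o(1)}=n^{c/c_{\mathrm{gate}}(\epsilon)+o(1)}\). If \(bp\le1\), this term is only \(n^{o(1)}\), which is still fine.

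The second term accounts for the failure of local independence. Configurations in which the external neighborhoods overlap, creating an extra cycle, cost a factor \(1/n\) but lose the \(p^s\) gating. Bounding their gating factor crudely by \(1\) and enumerating the cycle together with one extra short path, the count is at most \(b^{s}\cdot b^{O(\text{depth})}/n\). This gives at most \(n^{c/c_0-1+o(1)}\), since \(b^{W_n/q}=n^{c/c_0}\). This yields \eqref{eq:gated-cycle-asymptotic}. For the consequence, if \(c<c_{\mathrm{gate}}(\epsilon)\) the first term is \(n^{1-\delta}\), and if \(c<2c_0\) the second is \(n^{c/c_0-1}\le n^{1-\delta}\). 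Markov's inequality then gives \(|X|/n\to0\) in probability.

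I expect the main obstacle to be the coupling step. The events \(v_i\in R_0(E_\epsilon)\) are global, and the probability \(p^s\) has to be justified uniformly for \(s\) up to logarithmic length. The delicate points are the choice of exploration depth, which must balance \(p_t^s\) against \(p^s\) and requires \((p_t/p)^{s}=n^{o(1)}\), and the proof that the dependence introduced by conditioning on the cycle only perturbs the local trees with probability \(O(\mathrm{polylog}\,n/n)\) per variable.
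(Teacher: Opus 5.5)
Your proposal is correct and follows essentially the same route as the paper. The shared steps are: a first-moment count of alternating cycles (weight $qs$, expected number about $b^s/(2s)$); a finite-depth necessary condition for all cycle variables to lie in $R_0(E_\epsilon)$, evaluated on depth-$T$ tree neighborhoods with $T\asymp\log\log n$ to obtain $p_T^s$; and the trivial bound one on the collision event, which costs $O(s^2b^{2T}/n)$ and yields the $n^{c/c_0-1+o(1)}$ term. Note also that your case $bp\le 1$ never occurs, since $y<(r-1)x$ gives $bp>\ell-1>1$.
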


The proof is given in Appendix~\ref{app:dual-checks}.

It remains to control connected cyclic skeletons with
\(\mu(K)\geq2\).  For \(W\geq1\), let \(X_{\geq2}(W)\) be the union
of \(\supp(h_K)\) over all such skeletons satisfying
\(|K|\leq W\) and \(\wt(h_K)\leq W\).

\begin{lemma}
\label{lem:higher-excess-support}
Suppose that
\(
    W_n=c\log n+O(1)
\)
for a fixed \(c>0\).  Then, for all sufficiently large \(n\),
\[
    \E |X_{\geq2}(W_n)|
    \leq
    n^{c/c_0-1+o(1)}.
\]
Consequently, if \(c<2c_0\), then
\[
    \frac{|X_{\geq2}(W_n)|}{n}
    \xrightarrow{\Prob}0.
\]
\end{lemma}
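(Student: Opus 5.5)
We bound $\E|X_{\geq2}(W_n)|$ by a first-moment count over labeled subgraph configurations, and then apply Markov's inequality. Every skeleton $K$ in the definition of $X_{\geq2}(W_n)$ satisfies $|K|\leq W_n$ and $\mu(K)\geq2$. By Theorem~\ref{thm:skeleton-identity}, also $q|K|\leq \wt(h_K)+2\mu(K)-2\leq W_n+2\mu(K)-2$. Write $s:=|K|$, $t:=|V(B(K))|$ and $e:=|E(B(K))|$, so that $\mu=e-t+1$.

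The count runs over the subgraph $T(K)$, which consists of the $s$ checks, all $rs$ incident edges, and the variables they reach. In $T(K)$ there are $t$ shared variables, carrying $e$ edges, and $rs-e$ private variables of $K$-degree one. The total number of variables is therefore $t+rs-e=rs-\mu+1-s+\dots$; more usefully, the number of vertices of $T(K)$ minus its number of edges equals $1-\mu$. Standard configuration-model estimates for the uniform simple ensemble (conditioning on simplicity costs only a constant factor) give the following: the expected number of copies of a fixed connected labeled structure with $s$ checks, $N_v$ variables and $rs$ edges is at most $C^{s}\,n^{s+N_v-rs}\cdot(\text{combinatorial factor})=n^{1-\mu}\cdot(\text{growth factor})^{s}$.

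The growth factor is the key quantity. Exploring the skeleton from a root, each new check contributes roughly $(r-1)(\ell-1)=b$ choices up to polynomial factors in $s$. Closing the $\mu$ extra cycles costs $n^{-1}$ each, with polynomial multiplicity in $s$. The expected number of rooted skeletons with given $s$ and $\mu$ is thus at most $n\cdot b^{s}\,(Cs)^{O(\mu)}n^{-\mu}$. Each contributes at most $W_n$ support coordinates, so
\[
\E|X_{\geq2}(W_n)|\leq \sum_{\mu\geq2}\ \sum_{s\leq (W_n+2\mu-2)/q} W_n\, n^{1-\mu}\, b^{s}(C\log n)^{O(\mu)}.
\]

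For fixed $\mu$, the inner sum is dominated by its largest term, $b^{(W_n+2\mu-2)/q}=n^{c/c_0+o(1)}\,b^{2(\mu-1)/q}$. The $\mu=2$ term therefore gives $n^{c/c_0-1+o(1)}$. Each further unit of $\mu$ multiplies the bound by $n^{-1}(C\log n)^{O(1)}b^{2/q}$, so the sum over $\mu$ is geometric and dominated by $\mu=2$. We also need to cap $\mu$: since $s\leq W_n$ and $B(K)$ has at most $\ell s$ variables and $rs$ edges, $\mu\leq rW_n=O(\log n)$, and no infinite tail arises.

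Two points need care. First, the enumeration of skeleton shapes must be organized so that the multiplicity is $b^{s}\cdot\mathrm{poly}(s)^{\mu}$ rather than $s!$. For this we plan to encode $B(K)$ by a spanning tree plus $\mu-1$ extra edges. We must then check that attaching $K$-private variables introduces no extra free factor of $n$. Each of the $r-d$ remaining slots of a check leads to a new variable, and each such variable's remaining edges are unconstrained in $T(K)$, so it costs no $n$ gain or loss. Second, the exponent bookkeeping must yield exactly $n^{1-\mu}$ with branching $b$; this is the main obstacle, and we would verify it by direct pair counting in the configuration model.

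Finally, if $c<2c_0$, then $c/c_0-1<1$, so $\E|X_{\geq2}(W_n)|/n\leq n^{c/c_0-2+o(1)}\to0$. Markov's inequality then gives $|X_{\geq2}(W_n)|/n\xrightarrow{\Prob}0$.
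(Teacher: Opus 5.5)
Your plan follows the paper's proof: a first-moment count over shapes of the shared-incidence graph \(B(K)\), a contribution \(n^{1-\mu}b^{s}\) per shape, \((Cs)^{O(\mu)}\) shapes, summation over \(s\le (W_n+2\mu-2)/q\), a geometric series in \(\mu\) dominated by \(\mu=2\), and Markov. The gap is the step that fixes the exponent: you must show that each shape contributes \(n^{1-\mu}b^{s}\) with losses of order \(C^{\mu}\) only, never \(C^{s}\). You leave this as ``we would verify it'', and your displayed formulation would not deliver it. The estimate ``at most \(C^{s}n^{s+N_v-rs}\cdot(\text{combinatorial factor})\)'' already concedes a per-check constant. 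Any such factor turns \(b^{W_n/q}\) into \((Cb)^{W_n/q}\), giving exponent \(c\log(Cb)/q-1\) instead of \(c/c_0-1\), and the conclusion for every \(c<2c_0\) is lost. If you count labeled embeddings of \(T(K)\), as your displayed estimate suggests, you incur exactly such a loss. Assigning all \(r\) half-edges at a check contributes \(r!\) instead of \((r)_{d_{B(K)}(f)}\), an overcount by \((r-d_{B(K)}(f))!\) per check from permuting the private variables. Along degree-two stretches the growth per check is then \((r-1)!(\ell-1)>b\), since \(r\geq4\).

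The repair, which is what the paper does, is to embed \(B(K)\) alone. The other half-edges at the checks are matched somewhere with probability one, so private variables should not be counted at all. For a shape \(F\) with \(s\) checks, \(p\) variables and \(e\) edges,
\(\E N(F)=\frac{(m)_s(n)_p}{(\ell n)_e}\prod_f(r)_{d_F(f)}\prod_v(\ell)_{d_F(v)}\).
The bounds \((r)_d\le r(r-1)r^{d-2}\) and \((\ell)_d\le\ell(\ell-1)\ell^{d-2}\) give \(n^{1-\mu}b^{s}r^{\Delta_{\mathrm c}}((\ell-1)/\ell)^{p-s}\), with \(\Delta_{\mathrm c}=e-2s\) and \(\Delta_{\mathrm v}=e-2p\). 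The identities \(\Delta_{\mathrm c}+\Delta_{\mathrm v}=2\mu-2\) and \(|p-s|\le\mu-1\) are what make the leftover factors \(C^{\mu}\).

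Two smaller points need fixing as well.
First, a spanning tree of \(B(K)\) leaves \(\mu\), not \(\mu-1\), non-tree edges. With \(\mu-1\) closures you would get \(n^{2-\mu}\), and the \(\mu=2\) term would only allow \(c<c_0\).
Second, a spanning-tree encoding yields \((Cs)^{O(\mu)}\) shapes only once you note that the tree's branching excess satisfies \(\sum_x(d_T(x)-2)_+\le\sum_x(d_{B(K)}(x)-2)=2\mu-2\). Hence it has \(O(\mu)\) leaves and branch vertices. The paper instead suppresses degree-two paths to a kernel with at most \(2\mu-2\) vertices and \(3\mu-3\) edges and records the path lengths, giving \((C\mu)^{6\mu}(Cs)^{3\mu}\) shapes.

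With these pieces in place, your summation and Markov step go through as written.
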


The proof is also given in Appendix~\ref{app:dual-checks}.

We now combine the two types of cyclic skeleton.  For
\(R\subseteq V_n\), define
\begin{equation}
    X(W;R)
    :=
    X_{\mathrm{cyc}}(W;R)
    \cup
    X_{\geq2}(W).
    \label{eq:cyclic-support-set}
\end{equation}
By Lemmas~\ref{lem:gated-active-zone}
and~\ref{lem:higher-excess-support}, if
\(
    W_n=c\log n+O(1)
\)
and
\(
    c<\min\{c_{\mathrm{gate}}(\epsilon),2c_0\},
\)
then
\[
    \frac{|X(W_n;R_0(E_\epsilon))|}{n}
    \xrightarrow{\Prob}0.
\]

The role of $X(W;R)$ is captured by the following consequence.

\begin{lemma}
\label{lem:closure-stopping}
Fix $A<\infty$.  For every deterministic sequence of positive
integers satisfying $W_n\leq A\log n$ for all sufficiently large
$n$, with probability $1-o(1)$ over $G_n$, the following holds
for every $R\subseteq V_n$ and every stopping set
$S\subseteq R$ of $H_{0,n}$:
\[
    S\cap X(W_n;R)=\varnothing
    \quad\Longrightarrow\quad
    S\text{ is a stopping set of }H_{\leq W_n}(C_n).
\]
\end{lemma}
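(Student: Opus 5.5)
We argue by contraposition on the high-probability event $\mathcal G_n$ of Theorem~\ref{thm:localization}, applied with the given $A$. On this event, every nonzero $h\in C_n^\perp$ with $\wt(h)\leq W_n$ can be written as $h=h_J$ with $|J|\leq W_n$. Since $\mathcal G_n$ depends only on $G_n$, the conclusion will hold simultaneously for all $R$ and all $S$. So fix $R\subseteq V_n$ and a stopping set $S\subseteq R$ of $H_{0,n}$ with $S\cap X(W_n;R)=\varnothing$. Suppose, for contradiction, that some row $h$ of $H_{\leq W_n}(C_n)$ satisfies $|S\cap\supp(h)|=1$. We will produce a skeleton whose support lies in $X(W_n;R)$ and still meets $S$, which contradicts the disjointness.

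The reduction chain is the one already described before Lemma~\ref{lem:cycle-gate}. First write $h=h_J$ with $|J|\leq W_n$. The components of $T(J)$ give a partition $J=J_1\dot\cup\cdots\dot\cup J_k$ with disjoint supports, so there is a unique component $J_0$ with $|S\cap\supp(h_{J_0})|=1$. Lemma~\ref{lem:tree-inert} rules out $T(J_0)$ being a tree. Proposition~\ref{prop:stopping-active-reduction} then yields a connected cyclic skeleton $K\subseteq J_0$ with
\[
|S\cap\supp(h_K)|=1,\qquad |K|\leq W_n,\qquad \wt(h_K)\leq\wt(h)\leq W_n .
\]
Let $u$ be the unique element of $S\cap\supp(h_K)$.

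We now split into two cases according to $\mu(K)$.

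If $\mu(K)\geq 2$, then $K$ meets both size constraints in the definition of $X_{\geq2}(W_n)$. Hence $\supp(h_K)\subseteq X_{\geq2}(W_n)\subseteq X(W_n;R)$, and in particular $u\in S\cap X(W_n;R)$, a contradiction.

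If $\mu(K)=1$, Lemma~\ref{lem:cycle-gate} applies because $S$ is a stopping set of $H_{0,n}$ meeting $\supp(h_K)$ exactly once. It shows that every variable node $v_1,\ldots,v_s$ of $B(K)$ lies in $S\subseteq R$. Together with $\wt(h_K)\leq W_n$, this places $\supp(h_K)$ inside $X_{\mathrm{cyc}}(W_n;R)$. Again $u\in S\cap X(W_n;R)$, a contradiction.

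Therefore no row of $H_{\leq W_n}(C_n)$ meets $S$ in exactly one coordinate, so $S$ is a stopping set of $H_{\leq W_n}(C_n)$.

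The main point needing care is the bookkeeping of the bound $|K|\leq W_n$, which is used in the definition of $X_{\geq2}$. It comes from $|K|\leq|J_0|\leq|J|\leq W_n$ together with the localization bound $\sigma(h)\leq W_n$. This is exactly why the statement is conditioned on the event of Theorem~\ref{thm:localization} rather than holding deterministically. The $\mu(K)=1$ case needs no size bound beyond the weight bound. We also check the degenerate case $W_n<r$ (finitely many $n$, or vacuous): every nonzero dual word has weight at least the minimum dual distance, so this case is harmless asymptotically. The rest is immediate from the cited appendix results.
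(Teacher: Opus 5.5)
Your proof is correct and is essentially the paper's argument: you work on the localization event of Theorem~\ref{thm:localization}, reduce an exact-one low-weight row to a connected cyclic skeleton $K$ (component split, then Proposition~\ref{prop:stopping-active-reduction}), and then place $\supp(h_K)$ in $X_{\mathrm{cyc}}(W_n;R)$ via Lemma~\ref{lem:cycle-gate} when $\mu(K)=1$, or in $X_{\geq2}(W_n)$ when $\mu(K)\geq2$. Your closing remark about the case $W_n<r$ is unnecessary, since nothing in the argument uses $W_n\geq r$.
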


\begin{proof}
By Theorem~\ref{thm:localization}, with probability $1-o(1)$ over
$G_n$, every nonzero $h\in C_n^\perp$ with $\wt(h)\leq W_n$
admits a set $J\subseteq F_n$ such that
\(
    h=h_J
\)
and
\(
    |J|\leq W_n.
\)
Fix a graph for which this conclusion holds, and suppose that
$S\subseteq R$ is a stopping set of $H_{0,n}$ disjoint from
$X(W_n;R)$.

If $S$ were not a stopping set of $H_{\leq W_n}(C_n)$, some
nonzero $h\in C_n^\perp$ with $\wt(h)\leq W_n$ would satisfy
\(
    |S\cap\supp(h)|=1.
\)
Choose $J$ as above.  The reduction already established in this
subsection then gives a connected cyclic skeleton $K$ satisfying
\(
    |S\cap\supp(h_K)|=1,
    |K|\leq W_n,
\)
and
\(
    \wt(h_K)\leq W_n.
\)
If $\mu(K)=1$, Lemma~\ref{lem:cycle-gate} implies that every
variable node of $B(K)$ belongs to $S\subseteq R$, and hence
\(
    \supp(h_K)\subseteq X_{\mathrm{cyc}}(W_n;R).
\)
If $\mu(K)\geq2$, then
\(
    \supp(h_K)\subseteq X_{\geq2}(W_n)
\)
by definition.  In either case,
\(
    \supp(h_K)\subseteq X(W_n;R),
\)
contradicting
\(
    |S\cap\supp(h_K)|=1.
\)
\end{proof}

\subsection{Residual stability under deletions}

Fix $\epsilon>\epsBP$ and write
\(
    R:=R_0(E_\epsilon).
\)
Although $R$ is a stopping set of $H_{0,n}$, removing some of its
coordinates and running the peeling process again can remove
additional coordinates.  The next theorem shows that this effect is
controlled when the initial deletion is sufficiently small.

\begin{theorem}
\label{thm:core-resilience}
Fix $\epsilon\in(\epsBP,1]$ and let
\(
    R:=R_0(E_\epsilon).
\)
There are constants
$\delta_{\mathrm{seed}},\eta_{\mathrm{core}}>0$, depending only on
$(\ell,r,\epsilon)$, such that, with probability $1-o(1)$,
simultaneously for every $D\subseteq R$,
\begin{equation}
    |D|\leq\delta_{\mathrm{seed}}n
    \quad\Longrightarrow\quad
    |\Core_{H_{0,n}}(R\setminus D)|
    \geq\eta_{\mathrm{core}}n.
    \label{eq:linear-resilience-main}
\end{equation}

\end{theorem}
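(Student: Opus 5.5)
The plan is to pass to a slightly smaller erasure probability and use the fact that the peeling residual at that probability is nested inside \(R\). Fix \(\epsilon>\epsBP\) and choose \(\epsilon'\in(\epsBP,\epsilon)\). Couple the erasure sets so that \(E_{\epsilon'}\subseteq E_\epsilon\), and write \(R':=R_0(E_{\epsilon'})\). Since \(\Core\) is monotone, \(R'\subseteq R\) and \(\Core_{H_{0,n}}(R\setminus D)\supseteq\Core_{H_{0,n}}(R'\setminus D)\). It is therefore enough to control deletions from a typical residual. Deleting \(D\) from \(R\) affects the core only through \(D\cap R'\), so it suffices to show that, with high probability, the stopping set \(R'\) contains a robust core: a subset that survives the deletion of any \(\delta_{\mathrm{seed}}n\) of its coordinates and still has linear size.

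The core of the argument is an expansion property of the residual subgraph. Let \(G[R]\) be the Tanner subgraph induced by \(R\) and its neighboring checks. By density evolution and the standard structure of the stopping-set core of the configuration model, \(G[R]\) is, conditionally on its degree profile, a uniformly random bipartite graph. In that graph every check has degree \(0\) or at least \(2\) in \(R\), and variables in \(R\) have full degree \(\ell\). Moreover, the stable upper branch satisfies \(f_\epsilon'(x)<1\). This inequality should translate into a strict subcriticality of the peeling cascade inside the core: the expected number of new degree-one checks created by removing one variable is \((\ell-1)\cdot(\text{fraction of degree-2 checks among check-edges})<1\).

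Concretely, I would estimate that \((\ell-1)\) times the probability that a check edge lies on a degree-2 check equals \(f_\epsilon'(x)\). From this I would derive a uniform, deterministic-looking statement via a first-moment count over the random core graph. For every set \(D\) of size at most \(\delta n\), the set \(P(D)\) peeled by restarting from \(R\setminus D\) satisfies \(|P(D)|\leq K|D|\), where \(K=K(\ell,r,\epsilon)\). The union bound runs over pairs \((D,P)\) with \(P\supseteq D\), \(|P|\leq\eta' n\), and \(P\) closed under the cascade. Closure means that each removed variable other than those in \(D\) is forced by a check which, once \(P\) is gone, has exactly one remaining neighbor. Such a closed set \(P\) is a subgraph in which many check edges land on degree-2 checks of the core. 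Counting these configurations gives an exponent \(|P|\bigl(H(\cdot)+\log(\text{branching})\bigr)\), which is negative for \(|P|/n\) small when the branching factor is below one.

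To avoid dealing with \(P\) of arbitrary size, I would argue directly on the peeling order. Removing coordinates one at a time, any cascade that grows beyond \(K\delta n\) passes through an intermediate closed set of size exactly \(\lceil K\delta n\rceil\). The count then only needs to exclude sizes in \([K\delta n,2K\delta n]\). Once this bound holds, \(|\Core_{H_{0,n}}(R\setminus D)|\geq|R|-K\delta_{\mathrm{seed}}n\geq\eta_{\mathrm{core}}n\). Here \(\eta_{\mathrm{core}}\) is chosen below \(\epsilon y^\ell/2\) and \(\delta_{\mathrm{seed}}\) is small, and the bound \(|R|\geq\epsilon y^\ell n/2\) is used with high probability, as recalled in Section~\ref{sec:model}.

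The main obstacle is the uniform first-moment bound over all \(D\) simultaneously. The number of choices of \(D\) is \(\binom{n}{\delta n}\), so the bound must beat this entropy. It works only because the cascade from \(D\) must create many degree-two incidences. The ratio between cascade size and seed size must be taken large enough that the probability cost of the forced structure dominates \(\binom{n}{|D|}\binom{n}{|P|}\).

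If the direct count proves delicate, I would first try a fallback using expansion of the random \((\ell,r)\) Tanner graph. For sets \(P\) of size at most \(\alpha n\), the number of neighboring checks is at least \((\ell-1-\beta)|P|\). Combined with the lower bound on the density of degree-\(\geq3\) checks in the core, which follows from \(f_\epsilon'(x)<1\), this should give a deterministic lower bound on the number of checks touching \(P\) at least twice inside \(R\). That bound is incompatible with \(P\setminus D\) being peeled unless \(|D|\gtrsim|P|\).
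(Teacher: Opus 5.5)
\textbf{There is a genuine gap.} You have the right ingredients: the residual graph is uniform given its degree profile, and $(\ell-1)\theta=f_\epsilon'(x)<1$, where $\theta$ is the fraction of residual variable half-edges that land on checks with exactly two incidences in $R$. The quantitative step you build on them does not work.

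\emph{The linear bound is false.} Let $G_2$ be the multigraph on $R$ whose edges are these degree-two checks. Deleting one vertex of a component of $G_2$ peels the whole component. For $\epsilon<1$ this subcritical graph has, for every fixed $s$, linearly many components of size at least $s$. So $|P(D)|\le K|D|$ with $K=K(\ell,r,\epsilon)$ already fails for $|D|=1$. Taking one vertex from each of $\delta n$ components of size at least $s$ shows that $|P(D)|/|D|$ is unbounded as $\delta\to0$. The correct target is $|P(D)|\le\eta(\delta)n$ with $\eta(\delta)\to0$, of order $\delta\log(1/\delta)$.

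\emph{The first-moment exponent does not close.} Choosing $P$ costs about $\log(n/|P|)$ of entropy per element, which grows as $|P|/n\to0$, so a constant $\log(\text{branching})<0$ cannot offset it. A degree-two peeling check costs a factor of order $(|P|/|R|)^2$. This only cancels, at the $\log(n/|P|)$ level, the entropy of the check and of the vertex it peels. What remains is the seed entropy $|D|\log(n/|P|)$ plus $O(|P|)$ terms, and for sets that are not closed these terms are not made negative by $\nu<1$. The expected number of connected vertex sets of size $p$ through a given vertex of a subcritical graph grows exponentially in $p$ once $\nu$ exceeds a threshold below one ($1/2$ when $\ell=3$). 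Moreover $\nu=f_\epsilon'(x)\to1$ as $\epsilon\downarrow\epsBP$. The decay must come from closure, meaning that no edge of $G_2$ leaves $P$. Intermediate sets of a peeling sequence are not closed, so stopping the cascade at size $\lceil K\delta n\rceil$ throws away exactly that constraint.

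\emph{The expansion fallback fails for the same reason.} A tree component of $G_2$ with $p$ vertices supplies $p-1$ checks, each meeting it twice, and is peeled from a single seed. This is consistent with any expansion bound for $G_n$. The reduction to $\epsilon'<\epsilon$ is harmless but vacuous, since it poses the same problem for $R'$.

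\textbf{The missing idea} is to separate the two peeling mechanisms, as the paper does. Degree-two residual checks are absorbed into the components of $G_2$, each of which is either entirely peeled or entirely kept. A subcritical breadth-first exploration gives exponential tails for component sizes, and an Azuma bound gives $\sum_B e^{a|B|}\le Mn$ with high probability. By Jensen's inequality, any $k$ components then cover at most $(k/a)\log(Mn/k)$ variables.

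The first moment is used only for checks with at least three residual incidences. There it works, because each such check costs three factors of order $|U|/n$ against one factor of entropy. With high probability, for $k=\lfloor\beta n\rfloor$, no union $U$ of $k$ components contains $\lceil 3k/4\rceil$ such checks whose residual incidences all lie in $U$.

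A deterministic argument then finishes the proof. Every peeled component that misses $D$ is started by its own such check. That check's other residual neighbors lie in components that meet $D$ or were started earlier. Hence if $|D|\le k/4$, fewer than $k$ components are peeled. The peeled set then has size at most $(\beta n/a)\log(2M/\beta)$, a small fraction of $|R|$ once $\beta$ is small.
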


The proof is given in Appendix~\ref{app:residual-stability}.
The statement is simultaneous over all eligible deletion sets \(D\);
in particular, \(D\) may depend on the realized Tanner graph and
erasure pattern.

\subsection{Large stopping sets and the main proofs}

We now combine the two ingredients established in
Sections~\ref{sec:proof-locality-barrier}-A and~\ref{sec:proof-locality-barrier}-B.  We first prepare the large stopping set
used in the proofs of Theorems~\ref{thm:threshold}
and~\ref{thm:coverable}.  Only the case
$0<\gamma<c_*$ requires consideration: if $\gamma\geq c_*$, there
is no sequence satisfying
\(
    r\leq W_n\leq(c_*-\gamma)\log n
\)
for all sufficiently large $n$.
Assume therefore that $0<\gamma<c_*$, and let $(W_n)$ satisfy
\(
    r\leq W_n\leq(c_*-\gamma)\log n
\)
eventually.  Set
\(
    \bar c:=c_*-\gamma/2
\)
and
\(
    \widehat W_n:=\lfloor\bar c\log n\rfloor.
\)
Then $W_n\leq\widehat W_n$ eventually and
\(
    \bar c
    <
    \min\{c_{\mathrm{gate},\BP},2c_0\}.
\)
Let \((x(\epsilon),y(\epsilon))\) denote the upper nonzero
density-evolution fixed point.  By the continuity of the upper
fixed-point branch recalled in Section~\ref{sec:model},
\(
    x(\epsilon)\to x_{\BP}
\)
and
\(
    y(\epsilon)\to y_{\BP}
\)
as
\(
    \epsilon\downarrow\epsBP.
\)
At the BP tangency,
\(
    1
    =
    f_{\epsBP}'(x_{\BP})
    =
    b\frac{x_{\BP}}{y_{\BP}}(1-x_{\BP})^q.
\)
Therefore
\[
    c_{\mathrm{gate}}(\epsilon)
    =
    \frac{q}{\log\!\bigl(bx(\epsilon)/y(\epsilon)\bigr)}
    \longrightarrow
    \frac{q}{\log(bx_{\BP}/y_{\BP})}
    =
    \frac{1}{-\log(1-x_{\BP})}
    =
    c_{\mathrm{gate},\BP}.
\]

Now fix any $\epsilon_0>\epsBP$ such that
\(
    \bar c<c_{\mathrm{gate}}(\epsilon_0).
\)
Let $\delta_{\mathrm{seed}}$ and $\eta_{\mathrm{core}}$ be the
constants from Theorem~\ref{thm:core-resilience} at
$\epsilon_0$, and define
\[
    R:=R_0(E_{\epsilon_0}),
    \qquad
    D:=X(W_n;R)\cap R,
    \qquad
    S:=\Core_{H_{0,n}}(R\setminus D).
\]
Since $W_n\leq\widehat W_n$ and $X(W;R)$ is monotone in $W$, the
estimates of Section~\ref{sec:proof-locality-barrier}-A applied at
\(
    \widehat W_n=\bar c\log n+O(1)
\)
give
\(
    |X(W_n;R)|/n\xrightarrow{\Prob}0.
\)
Hence $|D|\leq\delta_{\mathrm{seed}}n$ with probability tending to
one, and Theorem~\ref{thm:core-resilience} gives
\(
    |S|\geq\eta_{\mathrm{core}}n
\)
with probability tending to one.  By construction,
\(
    S\subseteq R
\)
and
\(
    S\cap X(W_n;R)=\varnothing.
\)
Lemma~\ref{lem:closure-stopping} therefore shows that $S$ is also a
stopping set of $H_{\leq W_n}(C_n)$.
Thus, for every fixed $\epsilon_0>\epsBP$ satisfying
\(
    \bar c<c_{\mathrm{gate}}(\epsilon_0),
\)
there is a constant $\eta(\epsilon_0)>0$ such that, with probability
tending to one, $H_{\leq W_n}(C_n)$ has a stopping set $S$ with
\(
    |S|\geq\eta(\epsilon_0)n
\)
and
\(
    S\subseteq R_0(E_{\epsilon_0}).
\)

We now prove the main results stated in
Section~\ref{sec:barrier}.

\begin{proof}[Proof of Theorem~\ref{thm:threshold}]
Fix first $\epsilon<\epsBP$.  Since $W_n\geq r$ eventually,
$H_{\leq W_n}(C_n)$ contains every row of $H_{0,n}$.  Therefore,
for every erasure set $E$,
\(
    \Core_{H_{\leq W_n}(C_n)}(E)
    \subseteq
    \Core_{H_{0,n}}(E).
\)
Recall from Section~\ref{sec:model} that, for every fixed
$\epsilon<\epsBP$,
\(
    |R_0(E_\epsilon)|/n\xrightarrow{\Prob}0.
\)
The same is therefore true for
\(
    \Core_{H_{\leq W_n}(C_n)}(E_\epsilon),
\)
i.e.,
\[
    \frac{|\Core_{H_{\leq W_n}(C_n)}(E_\epsilon)|}{n}
    \xrightarrow{\Prob}0.
\]

Now fix $\epsilon>\epsBP$.  Choose
\(
    \epsilon_1\in(\epsBP,\epsilon)
\)
sufficiently close to $\epsBP$ that
\(
    \bar c<c_{\mathrm{gate}}(\epsilon_1).
\)
Applying the large-stopping-set argument above with
$\epsilon_0=\epsilon_1$, there is a constant $\eta_1>0$ such that,
with probability tending to one, $H_{\leq W_n}(C_n)$ has a stopping
set $S$ satisfying
\(
    |S|\geq\eta_1n
\)
and
\(
    S\subseteq R_0(E_{\epsilon_1})
    \subseteq E_{\epsilon_1}.
\)

Couple the two erasure sets monotonically.  Let
$(U_v)_{v\in V_n}$ be i.i.d.\ uniform random variables on $[0,1]$,
independent of $G_n$, and define
\(
    E_\alpha:=\{v\in V_n:U_v\leq\alpha\},
    \alpha\in[0,1].
\)
For each $\alpha$, the set $E_\alpha$ has the BEC$(\alpha)$
distribution, and since $\epsilon_1<\epsilon$,
\(
    E_{\epsilon_1}\subseteq E_\epsilon
\)
almost surely.  Therefore, whenever the stopping set $S$ above
exists,
\(
    S\subseteq E_{\epsilon_1}\subseteq E_\epsilon
\)
almost surely.
Since $S$ is a stopping set of $H_{\leq W_n}(C_n)$ and
$\Core_{H_{\leq W_n}(C_n)}(E_\epsilon)$ is the largest such stopping
set contained in $E_\epsilon$, we have
\(
    \left|
        \Core_{H_{\leq W_n}(C_n)}(E_\epsilon)
    \right|
    \geq |S|
    \geq\eta_1 n.
\)
Thus the normalized residual for $H_{\leq W_n}(C_n)$ does not
converge to zero at erasure probability $\epsilon$.

Thus every fixed $\epsilon<\epsBP$ belongs to the set defining
$\epsilon_{\mathrm{res}}(W_\bullet)$, whereas every fixed
$\epsilon>\epsBP$ lies outside it.  Hence
\[
    \epsilon_{\mathrm{res}}(W_\bullet)=\epsBP.
\]
\end{proof}

For the proof of Theorem~\ref{thm:coverable}, let
\(\epsMAP=\epsMAP(\ell,r)\) denote the bit-MAP threshold of the
\((\ell,r)\)-regular ensemble on the BEC.  For
\(3\leq\ell<r\), the regular ensemble has the standard strict
BP--MAP gap~\cite{KudekarRichardsonUrbanke2011,RichardsonUrbanke2008},
namely, \(\epsBP<\epsMAP\).

We also need the following standard consequence of bit-MAP decoding
below the MAP threshold and the linear minimum distance of the regular
ensemble.

\begin{lemma}
\label{lem:submap-unique}
For every fixed
\(\epsilon_0\in(\epsBP,\epsMAP)\),
\[
    \Prob\!\left\{
        E_{\epsilon_0}
        \text{ is not uniquely decodable for }C_n
    \right\}
    \longrightarrow0.
\]
\end{lemma}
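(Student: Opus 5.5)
The plan is to reduce unique decodability of $E_{\epsilon_0}$ to two standard ingredients. The first is that the bit-MAP residual erasure probability vanishes below $\epsMAP$. The second is that the regular ensemble has no low-weight codewords beyond a vanishing fraction. Write $E=E_{\epsilon_0}$. The set $E$ fails to be uniquely decodable exactly when some nonzero codeword $c\in C_n$ has $\supp(c)\subseteq E$, that is, when $\rank((H_{0,n})_E)<|E|$. So it suffices to show that, with probability $1-o(1)$, no nonzero codeword is supported inside $E$.

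\textbf{Step 1 (small-support codewords).} I would first show that there is $\theta=\theta(\ell,r)>0$ such that, with probability $1-o(1)$ over $G_n$, $C_n$ has no nonzero codeword of weight at most $\theta n$. For $\ell\geq3$ this is the classical linear-minimum-distance property of regular Gallager ensembles. It follows from a first-moment bound on the expected number of weight-$w$ codewords (equivalently, stopping sets of $H_{0,n}$ of size $w$). That bound is summable over $1\leq w\leq\theta n$ and tends to zero; for $w=O(1)$ it is $O(n^{-(\ell/2-1)w})$, and in the simple-graph ensemble it is handled by conditioning the configuration model on simplicity, which has probability bounded away from zero. I would quote this from Richardson--Urbanke rather than reprove it.

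\textbf{Step 2 (MAP residual is sublinear).} By the definition of the bit-MAP threshold and $\epsilon_0<\epsMAP$, the expected fraction of coordinates of $E$ not determined by the unerased ones tends to zero~\cite{RichardsonUrbanke2008,KudekarRichardsonUrbanke2011}. The undetermined coordinates of $E$ are exactly
\[
U(E):=\bigcup\{\supp(c):c\in C_n,\ \supp(c)\subseteq E\},
\]
the union of supports of codewords lying inside $E$. Hence $|U(E)|/n\to0$ in probability by Markov's inequality.

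\textbf{Step 3 (combination).} Suppose some nonzero codeword $c$ has $\supp(c)\subseteq E$. Then $\supp(c)\subseteq U(E)$, so $\wt(c)\leq|U(E)|$. On the high-probability event $|U(E)|\leq\theta n$, this contradicts Step 1 unless no such $c$ exists. A union bound over the two failure events then gives the lemma.

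The main obstacle is Step 2. One must make sure that the MAP threshold for the simple ensemble is defined and known in the form ``bit-MAP erasure fraction $\to0$'' rather than only via the conditional-entropy or Maxwell-construction characterization. If only the entropy statement $H(X\mid Y)/n\to0$ is available, I would note that $H(X\mid Y)=\E[\dim\{c\in C_n:\supp(c)\subseteq E\}]$. The vanishing of this normalized dimension does not by itself bound $|U(E)|$. In that case I would instead use the known fact that below $\epsMAP$ the bit-MAP EXIT value is $0$, which directly gives $\E|U(E)|/n\to0$.
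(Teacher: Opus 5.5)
Your proposal is correct and follows the paper's proof essentially step for step. The paper likewise combines linear minimum distance of the regular ensemble (citing Yang et al.\ rather than Richardson--Urbanke) with $\frac1n\E|\mathsf{Amb}(C_n,E_{\epsilon_0})|\to0$ below $\epsMAP$, obtained from the BEC EXIT identity and M\'easson--Montanari--Urbanke's bit-MAP threshold convention, which is exactly the fallback you describe, plus a transfer to the simple ensemble by conditioning on simplicity; it then applies Markov's inequality and contains the failure event in the union of the two bad events. One small slip: counting weight-$w$ codewords is not equivalent to counting size-$w$ stopping sets, since codeword supports are stopping sets but not conversely. The stopping-set bound is therefore stronger than what you need, but it still suffices.
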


\begin{proof}
For a code \(C\) and an erasure set \(E\), let
\(\mathsf{Amb}(C,E)\subseteq E\) denote the set of coordinates
that are not uniquely determined by the unerased coordinates.
Equivalently,
\(
    i\in\mathsf{Amb}(C,E)
\)
if and only if there is a nonzero \(z\in C\) with \(\supp(z)\subseteq E\) and \(z_i=1\).

For every fixed \(\epsilon_0<\epsMAP\), the finite-length BEC
EXIT identity and the bit-MAP threshold convention of~\cite[Sec.~II and Definition~2]{MeassonMontanariUrbanke2008}
give
\[
    \frac{1}{n}
    \E\bigl|\mathsf{Amb}(C_n,E_{\epsilon_0})\bigr|
    \longrightarrow0
\]
in the regular configuration ensemble.  Definition~2 in~\cite{MeassonMontanariUrbanke2008} uses a
\(\limsup\) in the blocklength, so its vanishing below
\(\epsMAP\) gives convergence of this nonnegative normalized
quantity to zero.  Conditioning on simplicity then gives the same
conclusion for the ensemble used here, as in
Appendix~\ref{app:dual-checks}.
Consequently, for every fixed \(\delta>0\), Markov's inequality gives
\[
    \Prob\!\left\{
        |\mathsf{Amb}(C_n,E_{\epsilon_0})|
        \geq\delta n
    \right\}
    \longrightarrow0.
\]
For the binary \((\ell,r)\)-regular configuration ensemble, the
individual-code minimum-distance estimate for regular LDPC ensembles
applies with \(q=2\), \(c=\ell\), and \(d=r\)~\cite[Theorem~6.2, Remark~6.3, and (51)]{YangEtAl2011}.
In particular, for every fixed
\(
    0<\delta_{\mathrm{dist}}<x_0(2,\ell,r),
\)
\[
    \Prob\!\left\{
        d(C_n)\leq\delta_{\mathrm{dist}}n
    \right\}
    \longrightarrow0.
\]
For \(q=2\), the ensemble in~\cite{YangEtAl2011} is the regular
bipartite configuration model used here before conditioning on
simplicity: the edge labels are trivial, while parallel edges are
allowed.  Conditioning on simplicity therefore gives the same
conclusion for the uniform simple ensemble, by the simplicity
transfer in Appendix~\ref{app:dual-checks}.

Now suppose that \(E_{\epsilon_0}\) is not uniquely decodable.
Then there is a nonzero codeword \(z\in C_n\) with
\(\supp(z)\subseteq E_{\epsilon_0}\).  If
\(d(C_n)\geq\delta_{\mathrm{dist}}n\), then
\(
    |\supp(z)|\geq\delta_{\mathrm{dist}}n.
\)
Every coordinate in \(\supp(z)\) is ambiguous, and hence
\(
    |\mathsf{Amb}(C_n,E_{\epsilon_0})|
    \geq\delta_{\mathrm{dist}}n.
\)
Therefore
\[
    \{E_{\epsilon_0}\text{ is not uniquely decodable}\}
    \subseteq
    \{d(C_n)<\delta_{\mathrm{dist}}n\}
    \cup
    \{|\mathsf{Amb}(C_n,E_{\epsilon_0})|
        \geq\delta_{\mathrm{dist}}n\}.
\]
Both events on the right have probability tending to zero, which
proves the claim.
\end{proof}

\begin{proof}[Proof of Theorem~\ref{thm:coverable}]
Choose
\(
    \epsilon_0\in(\epsBP,\epsMAP)
\)
sufficiently close to $\epsBP$ that
\(
    \bar c<c_{\mathrm{gate}}(\epsilon_0).
\)
Applying the same large-stopping-set argument at this value of
$\epsilon_0$, there is a constant $\eta_0>0$ such that, with
probability tending to one, $H_{\leq W_n}(C_n)$ has a stopping set
$S$ satisfying
\(
    |S|\geq\eta_0n
\)
and
\(
    S\subseteq R_0(E_{\epsilon_0})
    \subseteq E_{\epsilon_0}.
\)
By Lemma~\ref{lem:submap-unique}, with probability tending to one,
\(E_{\epsilon_0}\) is uniquely decodable for \(C_n\).  Hence
\(
    \rank\bigl((H_{0,n})_{E_{\epsilon_0}}\bigr)
    =
    |E_{\epsilon_0}|.
\)
Since
\(
    S\subseteq E_{\epsilon_0},
\)
the columns of $(H_{0,n})_S$ are a subfamily of a linearly
independent family.  Therefore
\(
    \rank((H_{0,n})_S)=|S|.
\)
For all sufficiently large $n$, we have $W_n\geq r$.  Applying
Proposition~\ref{prop:budget-domination} with
$H=H_{0,n}$ and $W=W_n$ gives
\[
    \rank\bigl((H_{\leq W_n}(C_n))_S\bigr)
    =
    \rank((H_{0,n})_S)
    =
    |S|.
\]
Thus $S$ is a coverable stopping set of
$H_{\leq W_n}(C_n)$.

It remains to remove the auxiliary erasure randomness.  The
existence of a coverable stopping set of size at least $\eta_0 n$
is a property of $G_n$ alone.  Since the joint choice of
$(G_n,E_{\epsilon_0})$ produces such a set with probability tending
to one, the probability over $G_n$ that no such set exists tends to
zero.  Taking $\eta=\eta_0$ proves the theorem.
\end{proof}

Theorem~\ref{thm:heavy-support} uses the same two ingredients in a
slightly different way.  Suppose that the rows heavier than $a_n$
collectively meet only a small set of coordinates.  We include
those coordinates in the deletion set before applying
Theorem~\ref{thm:core-resilience}.  The resulting stopping set
avoids every heavier row, while Lemma~\ref{lem:closure-stopping}
controls the remaining rows.

\begin{proof}[Proof of Theorem~\ref{thm:heavy-support}]
Recall that
\(
    a_n=(c_*-\gamma)\log n.
\)
If $\gamma\geq c_*$, then $a_n\leq0$.  For every parity-check
matrix $H$ for $C_n$, the union of the supports of its nonzero rows
is $V_n$: otherwise $\operatorname{rowspan}H=C_n^\perp$ would
vanish at some coordinate, whereas $H_{0,n}$ has a row containing
that coordinate.  Hence $U_{>a_n}(H)=V_n$, and the theorem is
immediate.  We therefore assume $0<\gamma<c_*$.

Set
\(
    W_n^-:=\lfloor a_n\rfloor,
    \bar c:=c_*-\gamma/2,
\)
and
\(
    \widehat W_n:=\lfloor\bar c\log n\rfloor.
\)
Since row weights are integers, the rows of weight at most $a_n$
are precisely those of weight at most $W_n^-$.  Moreover,
\(
    W_n^-\leq\widehat W_n
\)
for all sufficiently large $n$.
Choose
\(
    \epsilon_0\in(\epsBP,\epsMAP)
\)
sufficiently close to $\epsBP$ that
\(
    \bar c<c_{\mathrm{gate}}(\epsilon_0).
\)
Let
\(
    R:=R_0(E_{\epsilon_0}).
\)
By monotonicity in the weight budget and the estimates of
Section~\ref{sec:proof-locality-barrier}-A applied at $\widehat W_n$,
\(
    |X(W_n^-;R)|/n
    \xrightarrow{\Prob}0.
\)

Let $\delta_{\mathrm{seed}}>0$ be the constant from
Theorem~\ref{thm:core-resilience} at $\epsilon_0$, and set
\(
    \delta:=\delta_{\mathrm{seed}}/2.
\)
Since
\(
    |X(W_n^-;R)|/n\xrightarrow{\Prob}0,
\)
\(
    |X(W_n^-;R)|\leq\delta n
\)
with probability tending to one.
Intersect this event with the high-probability events from
Lemma~\ref{lem:closure-stopping} at the budget \(W_n^-\),
Theorem~\ref{thm:core-resilience} at \(\epsilon_0\), and
Lemma~\ref{lem:submap-unique}.  Fix a realization in this
intersection.

Suppose, for contradiction, that there exists an ML-exact
parity-check matrix $H$ for $C_n$ such that
\(
    |U_{>a_n}(H)|<\delta n.
\)
Write
\(
    U_H:=U_{>a_n}(H)
\)
and define
\(
    D_H
    :=
    \bigl(X(W_n^-;R)\cup U_H\bigr)\cap R
\)
and
\(
    S_H
    :=
    \Core_{H_{0,n}}(R\setminus D_H).
\)
Since
\(
    |D_H|
    \leq
    |X(W_n^-;R)|+|U_H|
    <
    2\delta n
    =
    \delta_{\mathrm{seed}}n,
\)
Theorem~\ref{thm:core-resilience} gives
\(
    |S_H|\geq\eta_{\mathrm{core}}n.
\)
By construction,
\(
    S_H\cap X(W_n^-;R)=\varnothing
\)
and
\(
    S_H\cap U_H=\varnothing.
\)
Since $S_H$ is a stopping set of $H_{0,n}$ contained in $R$,
Lemma~\ref{lem:closure-stopping}, applied at the budget $W_n^-$,
shows that $S_H$ is also a stopping set of
$H_{\leq W_n^-}(C_n)$.

Now consider any nonzero row $h$ of $H$.  If
\(
    \wt(h)\leq a_n,
\)
then, since row weights are integers,
\(
    \wt(h)\leq W_n^-.
\)
Hence $h$ is a row of $H_{\leq W_n^-}(C_n)$ and cannot meet $S_H$
in exactly one coordinate.  If instead
\(
    \wt(h)>a_n,
\)
then
\(
    \supp(h)\subseteq U_H,
\)
so $h$ does not meet $S_H$ at all.  Thus $S_H$ is also a stopping set
of $H$.
Moreover,
\(
    S_H\subseteq R \subseteq E_{\epsilon_0}.
\)
Since $E_{\epsilon_0}$ is uniquely decodable for $C_n$, the columns
of $(H_{0,n})_{S_H}$ are linearly independent, and hence
\(
    \rank((H_{0,n})_{S_H})=|S_H|.
\)
Because
\(
    \operatorname{rowspan}H
    =
    C_n^\perp
    =
    \operatorname{rowspan}H_{0,n},
\)
restriction to the coordinates in $S_H$ gives
\(
    \rank(H_{S_H})=|S_H|.
\)
Thus $S_H$ is a nonempty stopping set of $H$ with
\(
    \rank(H_{S_H})=|S_H|.
\)
By Proposition~\ref{prop:rank-interface}, $H$ is not ML-exact on
the BEC.  This contradicts the assumption that $H$ is ML-exact.
Therefore, on the fixed high-probability event, every ML-exact
parity-check matrix $H$ for $C_n$ satisfies
\[
    |U_{>a_n}(H)|\geq\delta n.
\]

Finally,
the high-probability event used above depends on
$(G_n,E_{\epsilon_0})$, whereas the conclusion of the theorem is
stated only in terms of $G_n$.
If a graph admitted an ML-exact parity-check matrix $H$ with
\(
    |U_{>a_n}(H)|<\delta n,
\)
then none of the auxiliary erasure patterns for that graph could
belong to the high-probability event used above, because the
preceding deterministic argument would give a contradiction.
Therefore the probability of such a graph is bounded by the failure
probability of that joint event, which tends to zero.  This proves
the theorem.
\end{proof}

\section{Concluding remarks}
\label{sec:discussion}
\label{sec:conclusion}

We have shown that, for random regular LDPC codes on the BEC, allowing
arbitrarily many redundant checks does not remove the gap between peeling
and unique recovery when their row weights are restricted to the
logarithmic range considered here.  Even the complete family of all dual
words below the weight cutoff leaves the normalized-residual threshold at
\(\epsilon_{\mathrm{BP}}\) and still has linear-size stopping sets that are
uniquely decodable.  Thus row weight remains a genuine constraint after
the number and selection of redundant checks have been removed from the
problem.

The representation results make this constraint more global.  Any
parity-check representation whose peeling decoder succeeds on every
uniquely decodable erasure pattern must contain checks above the
logarithmic scale, and the supports of those checks must collectively
involve a linear number of coordinates.  At finite lengths, low-weight
redundant checks can nevertheless provide substantial gains, as illustrated
in Fig.~\ref{fig:correctable-erasure-failure}.  The finite-length benefit
and the asymptotic exactness requirement therefore describe different
aspects of the same row-weight restriction.

The asymptotic order of the row weight required for exact peeling remains
open.  Corollary~\ref{cor:wex} gives a logarithmic lower bound with high
probability, whereas the general weight-\((k+1)\) construction gives a
linear upper bound at fixed rate.  The coefficient \(c_*\) specifies the
range established by the present analysis; determining the asymptotic
order of \(\wex(C_n)\) is a natural next question.

\appendix
\subsection{Localization and cyclic structure of low-weight dual checks}
\label{app:dual-checks}

This appendix supplies the proofs deferred from
Section~\ref{sec:proof-locality-barrier}-A.  We prove the localization
of low-weight dual checks, the reduction of an exact-one intersection
to a connected cyclic skeleton, the structural identities for such
skeletons, and the separate probabilistic bounds for the cases
\(\mu=1\) and \(\mu\geq2\).

The probabilistic estimates below are first carried out in the
bipartite configuration model.  Each variable node has \(\ell\)
labeled half-edges and each check node has \(r\) labeled half-edges,
with \(\ell n=rm\).  A uniformly random perfect matching between the
two sides produces a bipartite multigraph.  We interpret incidences
with multiplicity and take all row sums over \(\F_2\).  On the event
that the multigraph is simple, this agrees with the Tanner-graph and
binary-matrix conventions used in
Sections~\ref{sec:model}--\ref{sec:proof-locality-barrier}.
For \(x\in\mathbb N\) and \(k\geq0\), we write
\((x)_k:=x(x-1)\cdots(x-k+1)\), with \((x)_0:=1\), for the
falling factorial.

Conditioning the configuration model on simplicity gives the uniform
simple ensemble of Section~\ref{sec:model}.  For fixed \((\ell,r)\),
both degree sequences are uniformly bounded, so the conditions for a
simplicity probability bounded away from zero are satisfied~\cite[Theorem~1.1 and Remark~1.3]{BlanchetStauffer2013}.
Thus an event of probability \(o(1)\) in the configuration model still
has probability \(o(1)\) after conditioning on simplicity.

We now give the proof of Theorem~\ref{thm:localization}.

\begin{proof}[Proof of Theorem~\ref{thm:localization}]
Write \(W=W_n\).  If the conclusion fails, then some nonzero
\(h\in C_n^\perp\) with \(\wt(h)\leq W\) has a minimum generating
set \(J\subseteq F_n\) of size \(s=\sigma(h)>W\).  Since
\(J\subseteq F_n\), we always have \(s\leq m=\ell n/r\), so there
are only \(O(n)\) possible values of \(s\).  We rule out such
minimum generating sets in two ranges: first
\(W<s\leq\delta_{\mathrm{loc}}n\), and then
\(s>\delta_{\mathrm{loc}}n\).

For \(s>W\), let
\(\mathcal J_{s,W}:=\{J\subseteq F_n:|J|=s,\ \wt(h_J)\leq W\}\)
and \(G_{s,W}:=|\mathcal J_{s,W}|\).
Fix \(J\in\mathcal J_{s,W}\), let \(t\) be the number of variable
nodes incident to the checks in \(J\), and write \(w=\wt(h_J)\).
Among these \(t\) variables, exactly \(w\) have odd degree in \(T(J)\).
The remaining \(t-w\) variables have positive even degree in \(T(J)\),
and hence degree in \(T(J)\) at least two.  Since the checks in \(J\)
have \(rs\) incident half-edges,
\(rs\geq w+2(t-w)=2t-w\).  In particular,
\(t\leq(rs+W)/2\).
For a fixed set \(J\), its \(rs\) check-side half-edges are matched
uniformly to an \(rs\)-subset of the \(\ell n\) variable-side
half-edges.  Therefore
\begin{equation}
 \E G_{s,W}
 \leq
 \binom ms
 \sum_{t\leq(rs+W)/2}
 \binom nt
 \frac{\binom{\ell t}{rs}}{\binom{\ell n}{rs}} .
 \label{eq:app-local-occupancy}
\end{equation}
A nonzero summand necessarily satisfies \(\ell t\geq rs\), because
the \(t\) touched variables contain only \(\ell t\) half-edges.
Thus \(t\geq rs/\ell\).  Since \(W<s\), we also have
\(t<(r+1)s/2\), so \(u:=t/s\) is bounded above and below by positive
constants depending only on \((\ell,r)\).
Using
\(\binom ms\leq(em/s)^s\),
\(\binom nt\leq(en/t)^t\), and
\(
 \binom{\ell t}{rs}/\binom{\ell n}{rs}
 \leq \left(t/n\right)^{rs},
\)
each nonzero summand in \eqref{eq:app-local-occupancy} is at most
\(\bigl((e\ell/r)e^u u^{r-u}\bigr)^s
 (s/n)^{rs-s-t}\).
Since \(u\) ranges over a fixed interval depending only on
\((\ell,r)\), the factor
\((e\ell/r)e^u u^{r-u}\) is uniformly bounded by a constant
\(C_{\mathrm{loc}}=C_{\mathrm{loc}}(\ell,r)\).
There are \(O(s)\) possible values of \(t\), which can be absorbed
by increasing \(C_{\mathrm{loc}}\).  Since
\(t<(r+1)s/2\),
\(
 rs-s-t\geq (r-3)s/2,
\)
and hence
\[
 \E G_{s,W}
 \leq
 \left[
 C_{\mathrm{loc}}
 \left(\frac{s}{n}\right)^{(r-3)/2}
 \right]^s .
\]
Choose \(\delta_{\mathrm{loc}}>0\) sufficiently small that
\(r\delta_{\mathrm{loc}}/\ell<1/2\) and
\(C_{\mathrm{loc}}\delta_{\mathrm{loc}}^{(r-3)/2}<1/2\).
Since \(r\geq4\), the preceding estimate gives
\[
 \sum_{W<s\leq\delta_{\mathrm{loc}}n}\E G_{s,W}=o(1).
\]
Indeed, if \(s\leq\sqrt n\), then
\(C_{\mathrm{loc}}(s/n)^{(r-3)/2}
 \leq C_{\mathrm{loc}}n^{-(r-3)/4}=o(1)\).
If instead
\(\sqrt n<s\leq\delta_{\mathrm{loc}}n\), then
\(C_{\mathrm{loc}}(s/n)^{(r-3)/2}
 \leq C_{\mathrm{loc}}\delta_{\mathrm{loc}}^{(r-3)/2}<1/2\).
Summing the corresponding bounds on \(\E G_{s,W}\) over the two
ranges proves the estimate above.

It remains to consider generating sets with
\(s>\delta_{\mathrm{loc}}n\).  Write \(s=\alpha m\).  For a fixed
set of \(s\) checks, the \(rs= \alpha rm = \alpha\ell n\) half-edges incident to
those checks are matched to a uniformly random
\(\alpha\ell n\)-subset of the \(\ell n\) variable-side half-edges.
Let \(N_{\alpha,W}\) denote the number of such subsets
\(Q\) for which at most \(W\) variable nodes are incident to an odd
number of half-edges in \(Q\).  These are precisely the variable
nodes belonging to \(\supp(h_J)\), and hence
\[
    \E G_{s,W}
    =
    \binom ms
    \frac{N_{\alpha,W}}
         {\binom{\ell n}{\alpha\ell n}}.
\]
To bound \(N_{\alpha,W}\), label the \(\ell\) half-edges at every
variable node by \(1,\ldots,\ell\).  For a subset \(Q\) counted by
\(N_{\alpha,W}\), let \(S_i\subseteq V_n\) be the set of variables
whose \(i\)-th half-edge belongs to \(Q\).  Then
\(|S_1|+\cdots+|S_\ell|=\alpha\ell n\), and the variables incident to
an odd number of half-edges in \(Q\) form the symmetric difference
\(S_1\triangle\cdots\triangle S_\ell\).
Put \(k_i:=|S_i|\) and \(\beta_i:=k_i/n\), so that
\(\beta_1+\cdots+\beta_\ell=\alpha\ell\).  For fixed
\(k_1,\ldots,k_\ell\), choose \(j\) so that
\(h(\beta_j)\) is maximal, where
\(h(x)=-x\log x-(1-x)\log(1-x)\) is the binary entropy.
Once the symmetric difference
\(B:=S_1\triangle\cdots\triangle S_\ell\) and the sets
\(S_i\), \(i\ne j\), are specified, the remaining set is determined
by
\(S_j=B\triangle(\bigtriangleup_{i\ne j}S_i)\).
Thus the number of choices with these cardinalities and
\(|B|\leq W\) is at most
\[
    \left(\sum_{w=0}^{W}\binom nw\right)
    \prod_{i\ne j}\binom n{k_i}.
\]
Because \(W=O(\log n)\),
\(\sum_{w=0}^{W}\binom nw\leq(W+1)n^W=e^{o(n)}\).
Also, \(\binom n{k_i}\leq e^{nh(\beta_i)}\).  By the choice of
\(j\),
\[
    \sum_{i\ne j}h(\beta_i)
    \leq
    \frac{\ell-1}{\ell}\sum_{i=1}^{\ell}h(\beta_i)
    \leq
    (\ell-1)h(\alpha),
\]
where the second inequality follows from the concavity of \(h\) and
\(\sum_i\beta_i=\alpha \ell \).  There are at most
\((n+1)^\ell=e^{o(n)}\) possible choices of the cardinalities
\(k_1,\ldots,k_\ell\).  Hence
\(
    N_{\alpha,W}
    \leq (n+1)^\ell (W+1)n^W e^{n(\ell-1)h(\alpha)} =
    e^{n(\ell-1)h(\alpha)+o(n)}.
\)

We next restrict the possible values of \(\alpha\) for a minimum
generating set.  Since \(s>\delta_{\mathrm{loc}}n\),
\(\alpha>r\delta_{\mathrm{loc}}/\ell\).  For the upper bound, sum
all \(m\) original check rows over \(\F_2\).  If \(\ell\) is even,
this sum is zero, so \(F_n\setminus J\) generates the same dual word
as \(J\).  Minimality of \(J\) therefore gives
\(\alpha\leq1/2\).
If \(\ell\) is odd, the sum of all original rows is \(\1\), so the
complement \(F_n\setminus J\) generates \(\1+h_J\).  Since \(\wt(h_J)\leq W\),
\(
    n-W
    \leq
    \wt(\1+h_J)
    \leq
    r(m-|J|) = r(m-s),
\)
and hence
\(\alpha\leq1-1/\ell+W/(\ell n)\).  As \(W=O(\log n)\), this is at
most \(1-1/(2\ell)\) for all sufficiently large \(n\).
Thus the relevant values of \(\alpha\) lie in a fixed closed
subinterval of \((0,1)\).
Stirling's formula can therefore be
applied uniformly in this range.  Using \(m=\ell n/r\) together
with the preceding bound on \(N_{\alpha,W}\), we obtain
\[
    \frac1n\log\E G_{s,W}
    \leq
    -\left(1-\frac{\ell}{r}\right)h(\alpha)+o(1).
\]
Since \(r>\ell\) and \(h(\alpha)\) is bounded away from zero on the
relevant interval, there is a constant \(\kappa>0\) such that
\(\E G_{s,W}\leq e^{-\kappa n}\) for all sufficiently large \(n\)
and every remaining value of \(s\).
Since there are only \(O(n)\) possible values of \(s\), summing
\(\E G_{s,W}\) over the values that can occur for a minimum
generating set in this range gives \(o(1)\).

Combining the two size ranges, the sum of \(\E G_{s,W}\) over all
sizes \(s>W\) that can occur for a minimum generator of a
counterexample is \(o(1)\).  By Markov's inequality, with probability
\(1-o(1)\) in the configuration model no such generating set exists.
Thus every nonzero dual word \(h\) with \(\wt(h)\leq W\) satisfies
\(\sigma(h)\leq W\).  Conditioning on simplicity gives the same
conclusion for the uniform simple ensemble.
\end{proof}

We now prove the part of the cyclic reduction used in
Section~\ref{sec:proof-locality-barrier}-A that is deferred to the
appendix.  The main text has already reduced an exact-one
intersection to a connected graph \(T(J)\).  Lemma~\ref{lem:tree-inert}
rules out the case in which \(T(J)\) is a tree.  Proposition~\ref{prop:stopping-active-reduction}
then removes the tree branches attached to the cyclic part and leaves
a connected cyclic skeleton.

\begin{lemma}
\label{lem:tree-inert}
If \(S\) is a stopping set of the original parity-check matrix and
\(T(J)\) is a tree, then
\(
    |S\cap\supp(h_J)|\ne1.
\)
\end{lemma}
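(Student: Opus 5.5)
We argue by a leaf-peeling induction on the tree \(T(J)\), which we regard as a bipartite tree on the checks in \(J\) together with their neighboring variables. In the simple Tanner graph, \(T(J)\) being a tree means two things: any two checks in \(J\) share at most one variable, and the check--variable incidence structure is acyclic. A variable \(v\) has degree \(d_J(v)\) in \(T(J)\), and \(v\in\supp(h_J)\) exactly when \(d_J(v)\) is odd.

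Suppose, for contradiction, that \(S\cap\supp(h_J)=\{u\}\). We induct on \(|J|\).

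If \(|J|=1\), then \(h_J\) is an original row, and the stopping-set property of \(S\) forbids an exact-one intersection.

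For \(|J|\geq2\), we would like to remove a check whose deletion preserves the hypothesis. We use the following choice.

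1. Form the tree \(\Gamma\) whose nodes are the checks in \(J\). Two checks are adjacent in \(\Gamma\) if they share a variable, with shared variables of degree \(\geq3\) handled by a star expansion. In practice it is cleaner to work directly in \(T(J)\).
2. Root \(T(J)\) at \(u\).
3. Pick a check \(f\in J\) at maximal depth. All its children are variables, and those variables have \(d_J=1\), since any further check below them would be deeper. So \(f\) has exactly one parent variable \(p\), and every other neighbor of \(f\) is a leaf variable.

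Next we inspect the coordinates of \(f\).

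1. Every leaf child of \(f\) lies in \(\supp(h_J)\).
2. No leaf child of \(f\) equals \(u\), because \(u\) is the root and \(f\) is not adjacent to \(u\) as a child. The exception is when \(u\) itself is the parent \(p\); we treat that case below.
3. Hence the leaf children of \(f\) are not in \(S\).
4. Since \(S\) is a stopping set and \(f\)'s only possibly-\(S\) neighbor is \(p\), we must have \(p\notin S\).

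Set \(J'=J\setminus\{f\}\). The tree \(T(J')\) is still a tree; removing the pendant branch at \(p\) preserves connectivity. The support changes only on \(N(f)\): the leaf children drop out, and the parity of \(p\) flips. None of these coordinates is in \(S\), because \(p\notin S\). Therefore \(S\cap\supp(h_{J'})=S\cap\supp(h_J)=\{u\}\), and induction gives a contradiction.

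It remains to handle the corner case \(p=u\). Here every check at maximal depth hangs from \(u\), so the depth of the tree is one: \(T(J)\) is a star of checks around \(u\). In a star, every check's other neighbors are leaves not in \(S\), while \(u\in S\). Any single such check \(f\) therefore meets \(S\) exactly in \(u\), contradicting the stopping-set property directly.

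The main care point is this case analysis, in particular making sure that \(u\) never sits among the leaf children that are removed. Rooting at \(u\) makes this automatic.
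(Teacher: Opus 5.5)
Your proof is correct, but it takes a different route from the paper's.

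\textbf{The paper's argument.} It is direct. Delete from \(T(J)\) every variable not in \(S\), then the resulting isolated checks. The result \(T'\) is a forest in which every surviving check has degree at least two, by the stopping-set property. Hence every leaf of a nontrivial component of \(T'\) is a variable of \(S\) whose full degree in \(T(J)\) is one, so it lies in \(\supp(h_J)\). A nontrivial tree has at least two leaves, so \(|S\cap\supp(h_J)|\geq 2\) whenever some variable of \(S\) touches \(J\), and the intersection is empty otherwise.

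\textbf{Your argument.} You assume \(S\cap\supp(h_J)=\{u\}\), root at \(u\), and peel a deepest check \(f\). Its children are leaves lying in \(\supp(h_J)\setminus\{u\}\), hence outside \(S\). This forces \(p\notin S\), so \(N(f)\cap S=\varnothing\). Therefore \(\supp(h_{J\setminus\{f\}})=\supp(h_J)\triangle N(f)\) meets \(S\) in the same single point, while \(T(J\setminus\{f\})\) remains a tree.

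\textbf{Comparison.} Your step is the same kind of pendant-branch removal the paper uses later in Proposition~\ref{prop:stopping-active-reduction}, so your route is more uniform with that proposition. The paper's argument needs no induction or choice of root. It also gives slightly more: the intersection is empty only when no variable of \(S\) is adjacent to a check of \(J\). Finally, it works verbatim when \(T(J)\) is only a forest, whereas yours uses connectivity to root at \(u\).

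\textbf{Cosmetic points.}
\begin{itemize}
\item The case \(p=u\) needs no separate star analysis. Since \(u\in S\), your step~4 already shows \(f\) would meet \(S\) exactly once, which is a contradiction.
\item The ``exception'' remark in your step~2 is misplaced, because leaf children of \(f\) are never \(u\) regardless.
\item The auxiliary tree \(\Gamma\) in your step~1 is never used and can be dropped.
\end{itemize}
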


\begin{proof}
Remove from \(T(J)\) all variable nodes that do not belong to \(S\),
together with their incident edges, and then remove the resulting
isolated check nodes.  Denote the remaining graph by \(T'\).
Since \(S\) is a stopping set, every check node has either no neighbor
in \(S\) or at least two neighbors in \(S\).  Therefore every check
remaining in \(T'\) has degree at least two.  Since \(T(J)\) is a
tree, \(T'\) is a forest.

If \(T'\) has no edge, then no variable in \(S\) is incident to a
check in \(J\), and hence
\(S\cap\supp(h_J)=\varnothing\).
Suppose instead that \(T'\) contains an edge.  Then some nontrivial
component of \(T'\) is a tree and therefore has at least two leaves.
No check node can be a leaf in \(T'\), since every remaining check
has degree at least two.  Hence these leaves are variable nodes.
For every variable \(v\in S\), all edges of \(T(J)\) incident to
\(v\) remain in \(T'\).  Thus every variable leaf of \(T'\) has
degree one in \(T(J)\).  Such a variable is incident to exactly one
check in \(J\), and therefore belongs to \(\supp(h_J)\).  Hence
\(
    |S\cap\supp(h_J)|\geq2.
\)
In either case,
\(
    |S\cap\supp(h_J)|\ne1.
\)
\end{proof}

\begin{proposition}
\label{prop:stopping-active-reduction}
Let \(S\) be a stopping set of the original parity-check matrix and
let \(J\subseteq F_n\) be such that \(T(J)\) is connected and
\(
    |S\cap\supp(h_J)|=1.
\)
Then there is a connected cyclic skeleton \(K\subseteq J\) such that
\[
    |S\cap\supp(h_K)|=1,
    \qquad
    |K|\leq |J|,
    \qquad
    \wt(h_K)\leq\wt(h_J).
\]
\end{proposition}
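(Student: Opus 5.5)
Let \(S\) and \(J\) be as in the statement; we will construct \(K\) by an iterative pruning procedure. Each step removes a check from the current set \(J'\) while preserving three invariants: \(T(J')\) is connected, \(|S\cap\supp(h_{J'})|=1\), and \(\wt(h_{J'})\) does not increase. We start from \(J'=J\). By Lemma~\ref{lem:tree-inert}, \(T(J')\) is never a tree, so it always contains a cycle. We stop when \(J'\) is a connected cyclic skeleton, and set \(K:=J'\). Since checks are only removed, \(|K|\leq|J|\) is automatic.

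The pruning step works as follows. Suppose \(J'\) is not a connected cyclic skeleton. Here \(B(J')\) is the graph on checks and on variables of \(d_{J'}\)-degree at least two, and every such variable has degree at least two in \(B(J')\) by construction. So the only ways \(J'\) can fail to be a skeleton are these:
\begin{itemize}
\item[(i)] some check \(f\in J'\) shares at most one variable with \(J'\setminus\{f\}\), i.e.\ \(f\) has degree at most one in \(B(J')\);
\item[(ii)] \(B(J')\) is disconnected.
\end{itemize}
Case (ii) cannot occur while \(T(J')\) is connected. Any path in \(T(J')\) between two checks passes through variables adjacent to two consecutive checks on the path, and such variables lie in \(B(J')\).

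So consider case (i), with \(f\) a leaf or isolated check of \(B(J')\). Then \(T(J')\) is the union of \(T(J'\setminus\{f\})\) and the star at \(f\), glued along at most one variable \(v\). If the two pieces share no variable, then \(J'=\{f\}\) by connectivity, and \(T(J')\) is a tree, which is impossible. So they share exactly the variable \(v\). Then
\[
\supp(h_{J'})=\supp(h_{J'\setminus\{f\}})\,\triangle\,\supp(h_f),
\]
and the variables of \(\supp(h_f)\setminus\{v\}\) are private to \(f\). We want to show that \(|S\cap\supp(h_{J'\setminus\{f\}})|=1\). Set \(P:=N(f)\setminus\{v\}\) and split on whether \(v\in S\).
\begin{itemize}
\item If \(v\notin S\), the stopping property at \(f\) gives \(|S\cap P|\neq1\). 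If \(|S\cap P|\geq2\), these private variables would all lie in \(S\cap\supp(h_{J'})\), a contradiction. Hence \(S\cap P=\varnothing\), and \(S\) meets \(\supp(h_{J'})\) and \(\supp(h_{J'\setminus\{f\}})\) in the same single point.
\item If \(v\in S\), the stopping property gives \(|S\cap P|\geq1\). These private points lie in \(S\cap\supp(h_{J'})\), so exactly one \(u\in P\) is in \(S\). If \(v\in\supp(h_{J'})\), we would get two points of \(S\) there, so \(v\notin\supp(h_{J'})\). Toggling back, \(v\in\supp(h_{J'\setminus\{f\}})\), and \(S\cap\supp(h_{J'\setminus\{f\}})=\{v\}\).
\end{itemize}

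We also need the weight not to increase. Removing \(f\) deletes the \(r-1\) private variables from the support and toggles \(v\) only, so \(\wt(h_{J'\setminus\{f\}})\leq\wt(h_{J'})-(r-1)+1<\wt(h_{J'})\). For connectivity, \(T(J'\setminus\{f\})\) remains connected because the star at \(f\) was attached at the single variable \(v\). Thus \(J'\setminus\{f\}\) satisfies all three invariants.

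The procedure strictly decreases \(|J'|\), so it terminates. It cannot terminate at a single check: \(T(\{f\})\) is a star, hence a tree, and Lemma~\ref{lem:tree-inert} excludes this. So it terminates at a \(J'\) in which every check has \(B\)-degree at least two and \(B(J')\) is connected and nonempty. That is, \(J'\) is a connected cyclic skeleton \(K\). The main obstacle is the case \(v\in S\), where the exact-one point migrates from \(u\) to \(v\), together with checking that connectivity of \(B(J')\) follows from connectivity of \(T(J')\); both were handled above.
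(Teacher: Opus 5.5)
Your proof is correct and is essentially the paper's argument: both strip the pendant tree parts of $T(J)$, each glued to the rest at a single variable $v$, check by a case split at $v$ that the exact-one intersection survives, and note that the weight drops by at least $q$ per removed check. The differences are only in granularity and bookkeeping. You remove one pendant check at a time, use the stopping condition at that check directly, and stop as soon as the skeleton property holds. The paper instead removes whole branches of the two-core decomposition at once, using Lemma~\ref{lem:tree-inert} and an Euler-count weight identity, and then verifies $B(K)=\mathcal C$.
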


\begin{proof}
By Lemma~\ref{lem:tree-inert}, the connected graph \(T(J)\) cannot
be a tree, and hence contains a cycle.
Let \(\mathcal C\) be the two-core of \(T(J)\), obtained by repeatedly
deleting vertices of degree zero or one.  Since \(T(J)\) is connected
and contains a cycle, \(\mathcal C\) is nonempty and connected.
Every component of \(T(J)-V(\mathcal C)\) is a tree joined to
\(\mathcal C\) by exactly one edge.

We now show that every such tree component containing at least one
check can be removed.
We remove these components one at a time.  Initially set
\(\widehat J=J\).  Consider a component of
\(T(J)-V(\mathcal C)\) that contains at least one check, and let
\(L\) be the set of its check nodes.  Put
\(
    K':=\widehat J\setminus L.
\)
Then
\(
    \widehat J=K'\mathbin{\dot\cup}L.
\)
The component is joined to \(\mathcal C\) by a unique edge.
If the endpoint of this edge in \(\mathcal C\) is a variable node,
let \(v\) be that variable.  If that endpoint is a check node, let
\(v\) be the variable endpoint outside \(\mathcal C\).  In either
case \(T(L)\) is a tree and
\(
    N(K')\cap N(L)=\{v\}.
\)
Let \(t:=|L|\), and for \(1\leq a\leq\ell\), let \(n_a\) be the
number of variable nodes having degree \(a\) in \(T(L)\).  The
attachment variable \(v\) is included in this count, with its degree
measured in \(T(L)\). Every check in \(L\) has degree \(r\) in \(T(L)\),
so the handshaking lemma gives
\(
    rt=\sum_{a=1}^{\ell} a n_a.
\)
Since \(T(L)\) is a tree, Euler's identity also gives
\(
    rt
    =
    t+\sum_{a=1}^{\ell}n_a-1.
\)
Subtracting these two identities yields
\[
    \sum_{a=2}^{\ell}(a-1)n_a=t-1.
\]
A variable belongs to \(\supp(h_L)\) exactly when its degree in
\(T(L)\) is odd.  Hence
\[
    \wt(h_L)
    =
    \sum_{\substack{1\leq a\leq\ell\\ a\text{ odd}}} n_a
    =
    n_1+
    \sum_{a=2}^{\ell}\1\{a\text{ odd}\}n_a.
\]
Again from Euler's identity,
\(
    n_1
    =
    (r-1)t+1-\sum_{a=2}^{\ell}n_a.
\)
Using
\(
    t-1=\sum_{a=2}^{\ell}(a-1)n_a
\),
we therefore obtain
\[
    \wt(h_L)
    =
    (r-2)t+2
    +
    \sum_{a=2}^{\ell}
    \bigl(a-2+\1\{a\text{ odd}\}\bigr)n_a
    \geq
    (r-2)t+2
    =
    qt+2.
\]

It remains to check that removing \(L\) preserves the exact-one
intersection.  Put
\(
    A:=\supp(h_{K'})
\)
and
\(
    D:=\supp(h_L).
\)
Since \(N(K')\cap N(L)=\{v\}\), we have
\(
    A\cap D\subseteq\{v\}
\)
and
\(
    \supp(h_{\widehat J})=A\triangle D.
\)
Write
\[
    a:=|S\cap A|,
    \qquad
    d:=|S\cap D|,
    \qquad
    c:=|S\cap A\cap D|\in\{0,1\}.
\]
Since the current row meets \(S\) in exactly one coordinate,
\(
    a+d-2c=1.
\)
Because \(T(L)\) is a tree, Lemma~\ref{lem:tree-inert} gives
\(d\ne1\).  If \(c=0\), then \(a+d=1\), and therefore
\((a,d)=(1,0)\).  If \(c=1\), then \(a,d\geq1\) and
\(a+d=3\); since \(d\ne1\), necessarily
\((a,d)=(1,2)\).  Thus in both cases
\(
    |S\cap\supp(h_{K'})|=1.
\)
Hence removing the branch \(L\) preserves the exact-one
intersection.
The removal also decreases the row weight.  Indeed,
\(
    \wt(h_{\widehat J})
    =
    |A\triangle D|
    =
    |A|+|D|-2|A\cap D|
    \geq
    |A|+|D|-2
    \geq
    |A|+qt.
\)
Therefore
\(
    \wt(h_{K'})
    \leq
    \wt(h_{\widehat J})-qt.
\)
After removing \(L\), replace \(\widehat J\) by \(K'\) and proceed
to the next check-containing component.

Remove the check-containing components of
\(T(J)-V(\mathcal C)\) one at a time in this way.  After each
removal the exact-one intersection is preserved, while the number
of checks strictly decreases, so the procedure terminates.  Let
\(K\) be the remaining set of checks; equivalently, \(K\) is
the check-node set of \(\mathcal C\).

Every variable node of \(\mathcal C\) has degree at least two in
\(\mathcal C\).  Conversely, every variable of \(T(K)\) outside
\(\mathcal C\) has degree one in \(T(K)\).  Indeed, if such a
variable were adjacent to at least two checks in \(K\), adjoining it
and those edges to \(\mathcal C\) would give a larger subgraph of
minimum degree at least two, contradicting the maximality of the
two-core.  It follows from the definition of \(B(K)\) that
\(
    B(K)=\mathcal C.
\)
Thus \(B(K)\) is nonempty and connected and has minimum degree at
least two on both sides.  Hence \(K\) is a connected cyclic
skeleton.

Summing the weight decrease over the removed branches gives, in
particular,
\[
    |S\cap\supp(h_K)|=1,
    \qquad
    |K|\leq|J|,
    \qquad
    \wt(h_K)\leq\wt(h_J).
\]
\end{proof}

We next prove the structural identities used in
Section~\ref{sec:proof-locality-barrier}-A after the reduction to a
connected cyclic skeleton.  They relate the number of checks in the
connected cyclic skeleton, the weight of their binary sum, and the cyclomatic number
of the shared-incidence graph.  The first identity is the cyclic
analogue of the tree-branch calculation above, with the cyclomatic
number accounting for the additional edges.  These identities give
the size bound used in the case \(\mu\geq2\) and show that
\(\mu=1\) corresponds exactly to an alternating cycle.

\begin{theorem}
\label{thm:skeleton-identity}
Let \(K\) be a connected cyclic skeleton, and put
\(
    s:=|K|,
    w:=\wt(h_K),
    \mu:=|E(B(K))|-|V(B(K))|+1.
\)
For \(1\leq a\leq\ell\), let \(n_a\) be the number of variable nodes
having degree \(a\) in \(T(K)\).  Then
\begin{equation}
    w
    =
    qs+2-2\mu
    +
    \sum_{a=2}^{\ell}
    \bigl(a-2+\1\{a\text{ odd}\}\bigr)n_a.
    \label{eq:app-skeleton-weight}
\end{equation}
\begin{equation}
    \sum_{f\in K}\bigl(d_{B(K)}(f)-2\bigr)
    +
    \sum_{a=2}^{\ell}(a-2)n_a
    =
    2\mu-2.
    \label{eq:app-skeleton-excess}
\end{equation}
In particular,
\[
    1\leq\mu\leq(r-1)s,
    \qquad
    s\leq\frac{w+2\mu-2}{q}.
\]
If \(\mu=1\), then \(B(K)\) is an alternating cycle and
\(w=qs\).
\end{theorem}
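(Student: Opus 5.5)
The plan is to derive both identities from two counting relations, handshaking and Euler's formula, applied to \(T(K)\) and to \(B(K)\). I then read off the inequalities and the cycle characterization. Write \(N:=\sum_{a=1}^{\ell}n_a\) for the number of variables of \(T(K)\) and \(N_2:=\sum_{a\ge2}n_a\) for the number of variable nodes of \(B(K)\).

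The first step handles \(B(K)\). Every check in \(K\) is in \(B(K)\), since \(B(K)\) is connected and nonempty and every check of a skeleton has \(B\)-degree at least two. The variable nodes of \(B(K)\) are exactly those with \(a\ge2\), and each keeps all \(a\) of its edges into \(K\). Hence
\[
|V(B(K))|=s+N_2,\qquad |E(B(K))|=\sum_{a\ge2}a\,n_a=\sum_{f\in K}d_{B(K)}(f),
\]
and therefore \(\mu=\sum_{a\ge2}a n_a-s-N_2+1\). Writing \(\sum_a a n_a=2N_2+\sum_{a\ge2}(a-2)n_a\) and also \(\sum_f d_B(f)=2s+\sum_f(d_B(f)-2)\), I add the two expressions for \(|E(B(K))|\) and subtract \(2|V(B(K))|\). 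This gives \(2\mu-2\) on one side and the left-hand side of \eqref{eq:app-skeleton-excess} on the other, which is the excess identity.

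The second step handles \(T(K)\). Handshaking gives \(rs=n_1+\sum_{a\ge2}a n_a\). Since \(T(K)\) is \(B(K)\) with \(n_1\) pendant variables attached, its cyclomatic number is also \(\mu\), so \(rs=s+N-1+\mu\). The weight is \(w=n_1+\sum_{a\ge2}\1\{a\text{ odd}\}n_a\). I eliminate \(n_1=rs-\sum_{a\ge2}a n_a\) and use \(\sum_{a\ge2}(a-1)n_a=s-1+\mu\), obtained by subtracting the two relations. This mirrors the tree-branch computation in Proposition~\ref{prop:stopping-active-reduction}, with \(t-1\) replaced by \(s-1+\mu\), and it yields \eqref{eq:app-skeleton-weight}.

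For the consequences: every coefficient \(a-2+\1\{a\text{ odd}\}\) is nonnegative, so \eqref{eq:app-skeleton-weight} gives \(w\ge qs+2-2\mu\), that is, \(s\le(w+2\mu-2)/q\). The bound \(\mu\ge1\) holds because \(B(K)\) is connected with minimum degree at least two, so it contains a cycle. The bound \(\mu\le(r-1)s\) comes from \(\mu=|E(B)|-|V(B)|+1\le rs-s-N_2+1\) together with \(N_2\ge1\). If \(\mu=1\), then \eqref{eq:app-skeleton-excess} forces every term to vanish, since each is nonnegative: each check has \(B\)-degree exactly \(2\), and each shared variable has degree exactly \(2\) (because \(n_a=0\) for \(a\ge3\)). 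A connected graph with all degrees equal to \(2\) is a cycle, and it alternates because it is bipartite. In that case the sum in \eqref{eq:app-skeleton-weight} reduces to \(n_2\cdot0\), so \(w=qs\).

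The main obstacle is bookkeeping rather than ideas. The key points are to verify that \(T(K)\) and \(B(K)\) have the same cyclomatic number, and to handle multigraph issues. In the simple Tanner graph no variable is joined twice to the same check, so degrees in \(T(K)\) count distinct checks of \(K\), as used above.
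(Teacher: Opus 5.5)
Your proposal is correct and follows essentially the same route as the paper. The counts of vertices and edges of \(B(K)\) give \(\mu=\sum_{a\ge2}(a-1)n_a-s+1\), eliminating \(n_1\) via \(rs=n_1+\sum_{a\ge2}an_a\) gives the weight identity, and summing \(d_{B(K)}(x)-2\) over \(V(B(K))\) gives the excess identity; nonnegativity of the terms then yields the bounds and the cycle characterization. Your detour through the cyclomatic number of \(T(K)\) is valid but redundant, since \(\sum_{a\ge2}(a-1)n_a=s-1+\mu\) already follows from your first step.
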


\begin{proof}
Since \(B(K)\) contains precisely the variable nodes having degree
at least two in \(T(K)\),
\(
    |E(B(K))|
    =
    \sum_{a=2}^{\ell}a n_a,
    |V(B(K))|
    =
    s+\sum_{a=2}^{\ell}n_a.
\)
Hence the definition of the cyclomatic number gives
\[
    \mu
    =
    \sum_{a=2}^{\ell}(a-1)n_a-s+1.
\]
On the other hand, the \(s\) checks in \(K\) have \(rs\) incident
edges in \(T(K)\), and therefore
\(
    rs
    =
    \sum_{a=1}^{\ell}a n_a
    =
    n_1+\sum_{a=2}^{\ell}a n_a.
\)
A variable belongs to \(\supp(h_K)\) exactly when its degree in
\(T(K)\) is odd.  Thus
\[
    w
    =
    \sum_{\substack{1\leq a\leq\ell\\ a\text{ odd}}}n_a
    =
    n_1+
    \sum_{a=2}^{\ell}\1\{a\text{ odd}\}n_a.
\]
Eliminating \(n_1\) with the preceding edge count and then using the
formula for \(\mu\) gives
\[
    w
    =
    qs+2-2\mu
    +
    \sum_{a=2}^{\ell}
    \bigl(a-2+\1\{a\text{ odd}\}\bigr)n_a,
\]
which proves \eqref{eq:app-skeleton-weight}.

For the second identity, let \(V_K\) denote the variable-node set
of \(B(K)\).  Thus
\(
    V(B(K))=K\mathbin{\dot\cup}V_K.
\)
For \(v\in V_K\), its degree in \(B(K)\) is the same as its degree
in \(T(K)\).  Hence
\[
    \sum_{v\in V_K}\bigl(d_{B(K)}(v)-2\bigr)
    =
    \sum_{a=2}^{\ell}(a-2)n_a.
\]
Adding the corresponding contribution from the check nodes gives
\[
\begin{aligned}
    \sum_{f\in K}\bigl(d_{B(K)}(f)-2\bigr)
    +\sum_{a=2}^{\ell}(a-2)n_a &=
    \sum_{x\in V(B(K))}\bigl(d_{B(K)}(x)-2\bigr)\\
    &=
    2|E(B(K))|-2|V(B(K))|
    =
    2\mu-2,
\end{aligned}
\]
where the second equality is the handshaking lemma and the last uses
the definition of \(\mu\).  This proves
\eqref{eq:app-skeleton-excess}.

Because \(B(K)\) is connected and contains a cycle, \(\mu\geq1\).
Also, \(B(K)\) has at least \(s+1\) vertices and at most \(rs\)
edges, so
\[
    \mu
    =
    |E(B(K))|-|V(B(K))|+1
    \leq
    (r-1)s.
\]
Every term in the sum on the right-hand side of
\eqref{eq:app-skeleton-weight} is nonnegative.  Therefore
\(
    w\geq qs+2-2\mu,
\)
which gives
\[
    s\leq\frac{w+2\mu-2}{q}.
\]

Finally, suppose that \(\mu=1\).  The left-hand side of
\eqref{eq:app-skeleton-excess} is then zero.  Since every term in
that expression is nonnegative, every check node and every variable node in
\(B(K)\) has degree two.  As \(B(K)\) is connected, it is an
alternating cycle.  Equation~\eqref{eq:app-skeleton-weight} then
reduces to
\(
    w=qs.
\)
The same argument applies in the configuration model with edge
multiplicity; there the cycle may consist of two parallel edges,
a possibility excluded when the Tanner graph is simple.
\end{proof}

We now prove the two probabilistic estimates used in
Section~\ref{sec:proof-locality-barrier}-A for the two possible
ranges of the cyclomatic number \(\mu=1\) and \(\mu\geq 2\).
Throughout the next two proofs,
\(C<\infty\) denotes a positive constant depending only on the fixed
parameters in the lemma under consideration; its value may change
from line to line.

\begin{proof}[Proof of Lemma~\ref{lem:gated-active-zone}]
Write \(W:=W_n\).  We first work in the configuration model with
independent BEC erasures.  In this model, \(R_0(E_\epsilon)\) denotes
the terminal peeling residual with check degrees counted by incidences
with multiplicity, and \(X_{\mathrm{cyc}}^{(\epsilon)}(W)\) is
interpreted with the same multigraph convention.  On the simplicity
event these agree with the corresponding quantities in the uniform
simple ensemble.

By Theorem~\ref{thm:skeleton-identity}, every connected cyclic
skeleton with \(\mu=1\) is an alternating cycle.  If such a skeleton
contains \(s\) checks, then its generated row has weight \(qs\).
Since only rows of weight at most \(W\) contribute to
\(X_{\mathrm{cyc}}^{(\epsilon)}(W)\), in the configuration model we
need only consider
\(
    1\leq s\leq \lfloor W/q\rfloor.
\)
The case \(s=1\) is the two-parallel-edge cycle permitted by the
configuration model and disappears after conditioning on simplicity.
Since \(W=c\log n+O(1)\), every value of \(s\) considered below is
\(O(\log n)\).

Fix such an integer \(s\).  A prescribed cycle \(\Gamma\) consists
of \(s\) distinct check nodes and \(s\) distinct variable nodes in
alternating cyclic order, together with the two half-edges used by
the cycle at each node.  Let \(Z_s^\circ\) be the number of
prescribed cycles of this form that occur in the configuration.
We first compute \(\E Z_s^\circ\).  There are
\(
    (m)_s(n)_s/(2s)
\)
alternating cyclic arrangements of the node labels.  Indeed, ordered
lists of the \(s\) check labels and the \(s\) variable labels specify
a starting point and an orientation, and each unoriented cycle is
represented \(2s\) times.  At every check node the two cycle
incidences use an ordered pair of distinct half-edges, giving
\(r(r-1)\) choices, while at every variable node they give
\(\ell(\ell-1)\) choices.  Thus the number of half-edge-specified
candidate cycles is
\[
    \frac{(m)_s(n)_s}{2s}
    [r(r-1)\ell(\ell-1)]^s.
\]
For any one such candidate, the \(2s\) required check-to-variable
pairings occur in a uniformly random configuration with probability
\(
    1/(\ell n)_{2s}.
\)
Therefore, since \(m=\ell n/r\) and \(s=O(\log n)\), the standard
falling-factorial estimate gives, uniformly in the range above,
\begin{equation}
    \E Z_s^\circ
    =
    \frac{(m)_s(n)_s[r(r-1)\ell(\ell-1)]^s}
         {2s(\ell n)_{2s}}
    \leq
    C\frac{b^s}{s}.
    \label{eq:app-cycle-count}
\end{equation}
The count above may also include cycles whose check nodes have
additional shared incidences outside the prescribed cycle, so not
every counted cycle is itself a connected cyclic skeleton.  This
only enlarges the counted family and is harmless for an upper bound.

We next bound the probability that all variable nodes of a fixed
prescribed cycle \(\Gamma\) survive the original peeling process.
To express a finite-depth necessary condition for this event, index
the usual BEC messages by incidences.  For an incidence \(a\), let
\(v(a)\) and \(f(a)\) denote its variable and check endpoints, and
write \(I(x)\) for the set of incidences at a node \(x\).  Set
\[
    U_a^{(0)}=e_{v(a)},
\]
with updates
\begin{align*}
    V_a^{(t)}
    &=
    1-\prod_{a'\in I(f(a))\setminus\{a\}}
    \bigl(1-U_{a'}^{(t)}\bigr),\\
    U_a^{(t+1)}
    &=
    e_{v(a)}
    \prod_{a'\in I(v(a))\setminus\{a\}}
    V_{a'}^{(t)}.
\end{align*}
Here \(e_v\) is the channel-erasure indicator, and a message equal to
one is an erasure message.
Suppose that every variable node of \(\Gamma\) belongs to the
terminal residual \(R_0(E_\epsilon)\).  Each such variable is erased.
Consider any one of its \(\ell-2\) incidences not used by the
prescribed cycle, and let \(f\) be the check at its other end.
Since \(R_0(E_\epsilon)\) is a stopping set, \(f\) has residual
degree at least two, with incidences counted with multiplicity.
Thus, besides the incidence under consideration, \(f\) has at least
one other incidence whose variable endpoint belongs to
\(R_0(E_\epsilon)\).  Consequently, the check-to-variable message
along the original incidence remains equal to one at every finite
iteration.  The same holds for each of the \(\ell-2\) incidences
outside the prescribed cycle.
It follows that, for every fixed \(T\), membership of all cycle
variables in \(R_0(E_\epsilon)\) implies the following finite-depth
condition: every cycle variable is erased and all of its
\(\ell-2\) incoming messages through incidences not used by
\(\Gamma\) are equal to one at iteration \(T\).

We now estimate the probability of this necessary condition by exposing
the depth-\(T\) computation neighborhoods that determine these
messages.  The exploration reveals \(O(sb^T)\) half-edges in total.
A collision occurs if an exposed branch revisits a Tanner-graph node,
meets a branch grown from another cycle variable, or returns to a
node of \(\Gamma\).
Consider the required pairings in the order in which they are
exposed.  Conditional on all pairings revealed so far, the remaining
unmatched half-edges are still uniformly paired.  At the \(j\)-th
exposure, only \(O(s+j)\) partner half-edges can create a collision,
whereas the number of available partner half-edges is of order
\(n\).  Thus the conditional collision probability at that exposure
is
\(
    O\!\left(\frac{s+j}{n}\right).
\)
If \(M_T=O(sb^T)\) denotes the total number of exposures, a union
bound then gives
\begin{equation}
     \Prob\{
        \text{collision}\mid\Gamma\text{ occurs}
    \}
    \leq
    \sum_{j=1}^{M_T}
    O\!\left(\frac{s+j}{n}\right)
    =
    O\!\left(
        \frac{sM_T+M_T^2}{n}
    \right)
    \leq
    C\frac{s^2b^{2T}}{n}.
    \label{eq:app-cycle-collision}
\end{equation}

Suppose now that no collision occurs.  The depth-\(T\) computation
neighborhoods associated with the \(s\) cycle variables are then
disjoint regular trees.  They involve disjoint sets of channel
variables, and the BEC erasures on these sets are independent.
Moreover, the erasure pattern is independent of the random matching,
so conditioning on the collision-free matching event does not change
the distribution of the channel erasures.
By the density-evolution recursion of Section~\ref{sec:model}, a
check-to-variable message at iteration \(T\) is erased with
probability \(y_T\).  For one cycle variable, the finite-depth
condition requires its own channel symbol to be erased, with
probability \(\epsilon\), and each of its \(\ell-2\) external
incoming messages to be erased, each with probability \(y_T\).
These events depend on disjoint branches.  Thus one cycle variable
satisfies the finite-depth condition with probability
\(
    p_T:=\epsilon y_T^{\ell-2}.
\)
Because the computation neighborhoods rooted at distinct cycle
variables are disjoint, all \(s\) cycle variables satisfy the
condition simultaneously with probability \(p_T^s\).
If all cycle variables belong to \(R_0(E_\epsilon)\), then the
finite-depth condition must hold.  On the collision-free event its
probability is \(p_T^s\), while on the collision event we use the
trivial upper bound one.  Combining these two cases with
\eqref{eq:app-cycle-collision} gives
\begin{equation}
    \Prob\{
        \text{all variable nodes of }\Gamma
        \text{ belong to }R_0(E_\epsilon)
        \mid \Gamma\text{ occurs}
    \}
    \leq
    p_T^s
    +
    C\frac{s^2b^{2T}}{n}.
    \label{eq:app-cycle-survival}
\end{equation}
For \(1\leq s\leq\lfloor W/q\rfloor\), let \(N_s\) be the number of
cycles counted by \(Z_s^\circ\) whose variable nodes all belong to
\(R_0(E_\epsilon)\).  By \eqref{eq:app-cycle-survival},
\begin{align*}
    \E N_s
    &=
    \sum_{\Gamma}
    \Prob\{
        \Gamma\text{ occurs and all of its variable nodes belong to }
        R_0(E_\epsilon)
    \}\\
    &\leq
    \left(
        p_T^s
        +
        C\frac{s^2b^{2T}}{n}
    \right)
    \sum_{\Gamma}\Prob\{\Gamma\text{ occurs}\}\\
    &=
    \E Z_s^\circ
    \left(
        p_T^s
        +
        C\frac{s^2b^{2T}}{n}
    \right),
\end{align*}
where the sums run over all prescribed cycles \(\Gamma\) with
\(s\) check nodes and \(s\) variable nodes.

Every connected cyclic skeleton contributing to
\(X_{\mathrm{cyc}}^{(\epsilon)}(W)\) with \(s\) checks determines one
of the cycles counted by \(N_s\), and by
Theorem~\ref{thm:skeleton-identity} its support has size exactly
\(qs\).  The cardinality of a union is at most the sum of the
cardinalities of its constituent supports.  Hence
\[
    |X_{\mathrm{cyc}}^{(\epsilon)}(W)|
    \leq
    \sum_{s=1}^{\lfloor W/q\rfloor} qs\,N_s.
\]
Taking expectations, applying the preceding bound on \(\E N_s\), and
then using \eqref{eq:app-cycle-count}, we obtain
\begin{equation}
    \E|X_{\mathrm{cyc}}^{(\epsilon)}(W)|
    \leq
    C\sum_{s=1}^{\lfloor W/q\rfloor}(bp_T)^s
    +
    \frac{Cb^{2T}}{n}
    \sum_{s=1}^{\lfloor W/q\rfloor}s^2b^s.
    \label{eq:app-finite-depth-cycles}
\end{equation}
Conditioning on simplicity changes the preceding bound only by a
constant factor, so \eqref{eq:app-finite-depth-cycles} also holds in
the uniform simple ensemble after adjusting \(C\).

It remains to choose the depth \(T\).  By the upper-fixed-point
stability recalled in Section~\ref{sec:model},
\(f_\epsilon'(x)<1\).  Since \(f_\epsilon\) is continuously
differentiable and \(x_t\to x\), the convergence is geometric once
\(x_t\) enters a sufficiently small neighborhood of \(x\).
Hence, with
\(p=\epsilon y^{\ell-2}=x/y\), there are
\(\kappa\in(0,1)\) and \(C<\infty\) such that
\[
    |p_T-p|\leq C\kappa^T.
\]
Choose
\(
    T=\lceil \alpha\log\log n\rceil
\)
with \(\alpha\) large enough that
\(
    C\kappa^T=O((\log n)^{-2}).
\)
Then
\(
    |p_T-p|=O((\log n)^{-2})
\)
and
\(
    b^{2T}=(\log n)^{O(1)}.
\)
Since
\(y=1-(1-x)^{r-1}<(r-1)x\), we have
\(bp>\ell-1>1\).  Thus \(bp_T>1\) for all sufficiently large \(n\).
As \(W=c\log n+O(1)\), the two sums in
\eqref{eq:app-finite-depth-cycles} satisfy
\[
    \sum_{s=1}^{\lfloor W/q\rfloor}(bp_T)^s
    \leq
    n^{(c/q)\log(bp)+o(1)},
    \qquad
    \frac{b^{2T}}{n}
    \sum_{s=1}^{\lfloor W/q\rfloor}s^2b^s
    \leq
    n^{(c/q)\log b-1+o(1)}.
\]
The first bound follows from \(bp_T>1\) and
\(\log(bp_T)=\log(bp)+o(1/\log n)\).
For the second sum, since \(b>1\),
\(
    \sum_{s=1}^{\lfloor W/q\rfloor}s^2b^s
    \leq
    C W^2 b^{W/q}.
\)
Because \(W=O(\log n)\) and \(b^{2T}=(\log n)^{O(1)}\), the
polynomial factors are \(n^{o(1)}\), while
\(b^{W/q}=n^{(c/q)\log b+o(1)}\).

Using the definitions
\(c_{\mathrm{gate}}(\epsilon)=q/\log(bp)\) and
\(c_0=q/\log b\), \eqref{eq:app-finite-depth-cycles} now gives
\[
    \E|X_{\mathrm{cyc}}^{(\epsilon)}(W_n)|
    \leq
    n^{c/c_{\mathrm{gate}}(\epsilon)+o(1)}
    +
    n^{c/c_0-1+o(1)}.
\]
This proves \eqref{eq:gated-cycle-asymptotic}.  If
\(c<\min\{c_{\mathrm{gate}}(\epsilon),2c_0\}\), the right-hand side is
\(o(n)\), and the convergence in probability follows immediately
from Markov's inequality.
\end{proof}

\begin{proof}[Proof of Lemma~\ref{lem:higher-excess-support}]
Write \(W:=W_n\) and work first in the configuration model.
Every skeleton contributing to \(X_{\geq2}(W)\) has at most \(W\)
checks, so throughout this proof
\(
    s=O(\log n).
\)

For \(s\geq1\) and \(\mu\geq2\), let \(Z_{s,\mu}\) denote the
number of connected cyclic skeletons with \(s\) checks and
cyclomatic number \(\mu\).  Let \(\mathcal F_{s,\mu}\) be the
collection of isomorphism types of connected bipartite multigraphs
with \(s\) check-side vertices, minimum degree at least two, and
cyclomatic number \(\mu\), subject to the degree bounds \(r\) on the
check side and \(\ell\) on the variable side.
For \(F\in\mathcal F_{s,\mu}\), choose an arbitrary labeling of its
vertices and edges, and let \(N(F)\) be the number of injective
half-edge embeddings of this labeled graph in the configuration
model.  Every skeleton counted by \(Z_{s,\mu}\) gives at least one
such embedding of its abstract shared-incidence graph, and hence
\[
    Z_{s,\mu}
    \leq
    \sum_{F\in\mathcal F_{s,\mu}} N(F).
\]

Fix \(F\in\mathcal F_{s,\mu}\), and let \(p\) and \(e\) denote its
numbers of variable-side vertices and edges, respectively.
As in the cycle count in the proof of
Lemma~\ref{lem:gated-active-zone}, first choose distinct check nodes
and variable nodes for the vertices of \(F\), and then choose the
half-edges used by its edges.  The number of half-edge-specified
candidates is
\[
    (m)_s(n)_p
    \prod_{f\in F}(r)_{d_F(f)}
    \prod_{v\in F}(\ell)_{d_F(v)}.
\]
For each candidate, the \(e\) prescribed pairings occur with
probability \(1/(\ell n)_e\).  Therefore
\[
    \E N(F)
    =
    \frac{(m)_s(n)_p}{(\ell n)_e}
    \prod_{f\in F}(r)_{d_F(f)}
    \prod_{v\in F}(\ell)_{d_F(v)}.
\]
The event counted by \(N(F)\) requires the prescribed edges of
\(F\) to occur but does not require them to be all the shared
incidences among the chosen nodes.  Thus \(N(F)\) may count
embeddings whose full shared-incidence graph is larger than \(F\),
which is harmless in the upper bound above.

To estimate this expression, put
\(
    \Delta_{\mathrm c}:=e-2s
\)
and
\(
    \Delta_{\mathrm v}:=e-2p.
\)
Since every vertex of \(F\) has degree at least two, both quantities
are nonnegative.  From
\(
    \mu=e-s-p+1
\)
we have
\(
    \Delta_{\mathrm c}+\Delta_{\mathrm v}=2\mu-2
\)
and
\(
    2(p-s)=\Delta_{\mathrm c}-\Delta_{\mathrm v}.
\)
In particular,
\(
    |p-s|\leq (\Delta_{\mathrm c}+\Delta_{\mathrm v})/2 = \mu-1.
\)
Only graphs \(F\) whose check-side degrees are at most \(r\) and
variable-side degrees are at most \(\ell\) can occur.  Hence
\(e\leq rs=O(\log n)\), and since every variable-side vertex has
degree at least two, also \(p=O(\log n)\).  Therefore the
falling-factorial terms in the preceding embedding count satisfy
\[
    \frac{(m)_s(n)_p}{(\ell n)_e}
    =
    \frac{m^s n^p}{(\ell n)^e}e^{o(1)}
\]
uniformly over the range considered here.
Since every degree in \(F\) is at least two,
\[
    \prod_{f\in F}(r)_{d_F(f)}
    \leq
    [r(r-1)]^s r^{\Delta_{\mathrm c}},
    \qquad
    \prod_{v\in F}(\ell)_{d_F(v)}
    \leq
    [\ell(\ell-1)]^p \ell^{\Delta_{\mathrm v}}.
\]
Substituting these bounds and \(m=\ell n/r\) into the embedding
count gives
\[
\begin{aligned}
    \E N(F)
    &\leq
    e^{o(1)}
    \frac{(\ell n/r)^s n^p}{(\ell n)^e}
    [r(r-1)]^s r^{\Delta_{\mathrm c}}
    [\ell(\ell-1)]^p\ell^{\Delta_{\mathrm v}}\\
    &=
    e^{o(1)}
    n^{1-\mu}b^s
    r^{\Delta_{\mathrm c}}
    \left(\frac{\ell-1}{\ell}\right)^{p-s}.
\end{aligned}
\]
Here we used \(s+p-e=1-\mu\) and
\(\Delta_{\mathrm v}=e-2p\).
Finally,
\(
    \Delta_{\mathrm c}\leq2\mu-2
\)
and
\(
    |p-s|\leq\mu-1
\),
so the last two factors are bounded by \(C^\mu\).  Thus
\[
    \E N(F)
    \leq
    n^{1-\mu}b^sC^\mu.
\]

It remains to bound \(|\mathcal F_{s,\mu}|\). The point is that, after
suppressing degree-two paths, the remaining branching structure has
size depending only on \(\mu\).
Suppress every maximal path whose internal vertices have degree two,
and let \(K_F\) be the resulting kernel.  Since \(\mu\geq2\),
\(K_F\) is nonempty and every vertex of \(K_F\) has degree at least
three.  Moreover,
\(
    \sum_{x\in V(F)}(d_F(x)-2)=2\mu-2.
\)
Suppression does not change this total excess or the cyclomatic
number.  Hence \(K_F\) has at most \(2\mu-2\) vertices and, since it
is connected with cyclomatic number \(\mu\), at most \(3\mu-3\)
edges.

We use only a crude upper bound on the number of possible kernels.
Temporarily number their vertices and edges, record the bipartite
type of each vertex, and record the two, not necessarily distinct,
endpoints of every edge.  Since the kernel has \(O(\mu)\) vertices
and edges, the number of such descriptions is at most
\(
    (C\mu)^{6\mu}.
\)
Now fix a kernel with \(k\leq3\mu-3\) edges.  The original graph
\(F\) is obtained by replacing each kernel edge by an alternating
degree-two path.  Record for each kernel edge the number of internal
check-side vertices on that path.  Since the total number of checks
is \(s\), the number of possible choices is at most
\(
    \binom{s+k}{k}\leq(Cs)^k.
\)
Once these numbers and the endpoint types are fixed, the numbers of
internal variable-side vertices are determined by alternation.
Thus, for fixed \(s\) and \(\mu\), the number of graphs that need to
be considered is at most
\(
    (C\mu)^{6\mu}(Cs)^{3\mu}.
\)
Combining the fixed-\(F\) estimate with the preceding bound on
\(|\mathcal F_{s,\mu}|\), we obtain
\begin{equation}
\begin{aligned}
    \E Z_{s,\mu}
    &\leq
    \sum_{F\in\mathcal F_{s,\mu}}\E N(F)\\
    &\leq
    n^{1-\mu}b^s
    C^\mu(C\mu)^{6\mu}(Cs)^{3\mu}\\
    &\leq
    n^{1-\mu}b^s(Cs)^{12\mu}.
\end{aligned}
\label{eq:app-skeleton-count}
\end{equation}

We now use \eqref{eq:app-skeleton-count} to bound the support union.
Each skeleton contributing to \(X_{\geq2}(W)\) has
\(\wt(h_K)\leq W\), and hence contributes at most \(W\) coordinates
to the union.  Such a skeleton also has \(s\leq W\) by definition.
Moreover, Theorem~\ref{thm:skeleton-identity} gives
\[
    s\leq\frac{W+2\mu-2}{q},
    \qquad
    \mu\leq(r-1)s\leq(r-1)W.
\]
Therefore, by the union bound and
\eqref{eq:app-skeleton-count},
\[
\begin{aligned}
    \E|X_{\geq2}(W)|
    &\leq
    W
    \sum_{\mu=2}^{(r-1)W}
    \sum_{s\leq\min\{W,(W+2\mu-2)/q\}}
    \E Z_{s,\mu}\\
    &\leq
    W
    \sum_{\mu=2}^{(r-1)W}
    n^{1-\mu}(CW)^{12\mu}
    \sum_{s\leq\min\{W,(W+2\mu-2)/q\}} b^s.
\end{aligned}
\]
For fixed \(\mu\), the last sum is geometric.  Since \(b>1\),
\[
    \sum_{s\leq\min\{W,(W+2\mu-2)/q\}} b^s
    \leq
    C b^{(W+2\mu-2)/q}
    =
    C b^{W/q}b^{2(\mu-1)/q}.
\]
Hence
\[
    \E|X_{\geq2}(W)|
    \leq
    C W b^{W/q}
    \sum_{\mu=2}^{(r-1)W}
    n^{1-\mu}
    b^{2(\mu-1)/q}
    (CW)^{12\mu}.
\]
The summands in the \(\mu\)-sum form a geometric progression with
common ratio
\(
    \rho_n
    :=
    n^{-1}b^{2/q}(CW)^{12}.
\)
Since \(W=O(\log n)\), we have \(\rho_n=o(1)\).  In particular,
\(\rho_n\leq1/2\) for all sufficiently large \(n\), so the geometric
sum is at most twice its first term, corresponding to \(\mu=2\).
Hence
\begin{equation}
    \E|X_{\geq2}(W)|
    \leq
    C n^{-1}b^{W/q}W^C.
    \label{eq:app-higher-support}
\end{equation}
As in the proof of Lemma~\ref{lem:gated-active-zone}, conditioning
on simplicity changes this bound only by a constant factor.
Therefore \eqref{eq:app-higher-support} also holds in the uniform
simple ensemble.

Finally, as \(W_n=c\log n+O(1)\),
\(
    b^{W_n/q}
    =
    n^{(c/q)\log b+o(1)}
    =
    n^{c/c_0+o(1)},
\)
while \(W_n^{O(1)}=n^{o(1)}\).  Therefore
\[
    \E|X_{\geq2}(W_n)|
    \leq
    n^{c/c_0-1+o(1)}.
\]
If \(c<2c_0\), then for every fixed \(\delta>0\), Markov's inequality gives
\[
    \Prob\!\left\{
        |X_{\geq2}(W_n)|>\delta n
    \right\}
    \leq
    \frac{\E|X_{\geq2}(W_n)|}{\delta n}
    =
    n^{c/c_0-2+o(1)}
    \frac1\delta
    \longrightarrow0.
\]
This proves the lemma.
\end{proof}

\subsection{Residual stability under deletions}
\label{app:residual-stability}

This appendix proves Theorem~\ref{thm:core-resilience}.

\begin{proof}[Proof of Theorem~\ref{thm:core-resilience}]
Fix \(\epsilon\in(\epsBP,1]\) and write
\(
    R:=R_0(E_\epsilon)
\)
and
\(
    N:=|R|.
\)
As in Appendix~\ref{app:dual-checks}, we first work in the bipartite
configuration model, with incidences counted with multiplicity.  Let
\((x,y)\) be the nonzero fixed point reached by the density-evolution
recursion initialized at \(x_0=\epsilon\).

For \(2\leq d\leq r\), let \(M_d\) be the number of check nodes having
exactly \(d\) incidences with the variable nodes in \(R\).
The terminal residual degree profile is given in
Eqs.~(9)--(10) of M\'easson, Montanari, and Urbanke~\cite{MeassonMontanariUrbanke2008}, 
with concentration at continuity points in their Lemma~6.
For the regular ensemble considered here, the upper nonzero
fixed-point branch is continuous for
\(\epsilon\in(\epsBP,1)\), as recalled in
Section~\ref{sec:model}.  Hence the profile appearing in those
equations is continuous throughout this interval, and their
Lemma~6 applies.  Specializing to the \((\ell,r)\)-regular ensemble
gives jointly
\begin{equation}
 \frac{N}{n}\xrightarrow{\Prob}\rho:=xy>0,
 \qquad
 \frac{M_d}{n}\xrightarrow{\Prob}
   \frac{\ell}{r}\binom rd x^d(1-x)^{r-d},
 \quad 2\leq d\leq r.
 \label{eq:app-residual-profile}
\end{equation}
For \(\epsilon=1\), the same relations hold directly:
\(R=V_n\), \(M_r=m\), and \(M_d=0\) for \(2\leq d<r\).

Construct a multigraph \(G_2\) with vertex set \(R\) as follows.
For each check node counted by \(M_2\), let \(u,v\in R\) be the
variable endpoints of its two incidences with \(R\), and add one edge
\(uv\) to \(G_2\).  Loops and parallel edges are allowed in the
configuration model.  Let \(\mathcal C_2\) denote the connected
components of \(G_2\).
If one variable in a component of \(G_2\) is removed, the check nodes
represented by the edges of that component allow the remaining
variables in the component to be peeled successively.  We therefore
need to control the sizes of these components.

M\'easson, Montanari, and Urbanke state that the
terminal residual graph is uniformly distributed conditional on its
degree profile in Section~V of~\cite{MeassonMontanariUrbanke2008}.
In the configuration-model formulation, after conditioning further on
the labeled residual set \(R\) and on the residual degree of each
check node, the compatible residual half-edge pairings are therefore
uniform.  Consequently, the \(2M_2\) variable-side half-edges paired
to the check nodes counted by \(M_2\) form a uniformly chosen subset
of size \(2M_2\) among the \(\ell N\) variable-side half-edges
incident to \(R\).
Write \(\theta_n:=2M_2/(\ell N)\).  For \(v\in R\), let \(K_v\) be
its degree in \(G_2\).  Since \(K_v\) counts how many of the \(\ell\)
half-edges at \(v\) belong to the uniformly chosen subset above, its
conditional distribution is hypergeometric and, since \(\ell\) is
fixed, is asymptotically \(\operatorname{Bin}(\ell,\theta_n)\).
Standard concentration for sampling without replacement shows that the
empirical degree distribution of \(G_2\) is therefore asymptotically
\(\operatorname{Bin}(\ell,\theta_n)\).  By \eqref{eq:app-residual-profile},
\(\theta_n\to\theta:=(r-1)x(1-x)^{r-2}/y\) in probability, so this
distribution converges to \(\operatorname{Bin}(\ell,\theta)\).  In an
exploration of a connected component of \(G_2\), every variable after
the first is reached along one edge of \(G_2\); applying the same
sampling argument to its remaining \(\ell-1\) half-edges shows that
the number of further edges incident to that variable has limiting
distribution \(\operatorname{Bin}(\ell-1,\theta)\), with mean
\[
    (\ell-1)\theta=f_\epsilon'(x)<1,
\]
where the strict inequality is the upper-fixed-point stability
recalled in Section~\ref{sec:model}.

We now bound the total number of variables contained in an arbitrary
collection of connected components of \(G_2\).
If \(\epsilon=1\), then \(M_2=0\) and \(G_2\) has no edges, so the
component bounds below are immediate.
Suppose now that \(\epsilon<1\).  For a fixed degree sequence
\((K_v)_{v\in R}\), the mean number of other \(G_2\)-edges incident
to the variable endpoint of a uniformly chosen \(G_2\)-half-edge is
\[
    \nu_n
    :=
    \frac{\sum_{v\in R}K_v(K_v-1)}
         {\sum_{v\in R}K_v}.
\]
Since the degrees are bounded and their empirical distribution
converges to \(\operatorname{Bin}(\ell,\theta)\), while
\(\theta>0\) for \(\epsilon<1\),
\(
    \nu_n
    \xrightarrow{\Prob}
    (\ell-1)\theta
    <1.
\)
Hence there are constants \(a_0,\delta>0\) such that, with
probability \(1-o(1)\),
\(
    \sum_{v\in R}K_v=2M_2\geq a_0 n,
\)
\(
    K_v\leq\ell
\)
for every \(v\in R\), and
\(
    \nu_n\leq1-\delta.
\)
Indeed, the first bound follows from
\eqref{eq:app-residual-profile}, since \(M_2/n\) converges to a
strictly positive constant when \(\epsilon<1\).
Fix a degree sequence satisfying these three bounds.  Conditional on
this degree sequence, \(G_2\) is a uniform configuration multigraph
with degrees \((K_v)_{v\in R}\).  Until the conditioning is removed
below, probabilities and expectations are taken over this uniform
pairing.

Let \(K_{\mathrm{tail}}<\infty\) be arbitrary and explore the
component of a vertex \(v\) by breadth first search.  Before
\(K_{\mathrm{tail}}\log n\) vertices have been exposed, only
\(O(\log n)\) half-edges have been removed.  Because
\(K_v\leq\ell\) and
\(
    \sum_{v\in R}K_v\geq a_0n,
\)
the conditional mean number of new half-edges revealed when an
unexplored half-edge is followed differs from \(\nu_n\) by
\(O(\log n/n)\), uniformly over the exploration history.
Since \(\nu_n\leq1-\delta\), this conditional mean is therefore at
most \(1-\delta/2\) for all sufficiently large \(n\).
Let \(X\) denote the number of new half-edges revealed in one such
step.  Then \(0\leq X\leq\ell-1\).  The bounded support and the
uniform gap below one imply that there are constants
\(\lambda,\kappa>0\), depending only on \((\ell,\delta)\), such that
\[
    \E\!\left[
        e^{\lambda(X-1)}
        \,\middle|\,
        \text{exploration history}
    \right]
    \leq e^{-\kappa}.
\]
Collisions during the exploration can only decrease the number of
active half-edges.  Since \(\lambda\) and \(\kappa\) do not depend on
\(K_{\mathrm{tail}}\), the resulting exponential-supermartingale
argument gives constants \(A_{\mathrm{tail}},\tau>0\), depending
only on \((\ell,\delta)\), such that, for every fixed
\(K_{\mathrm{tail}}<\infty\), all sufficiently large \(n\), and every
\(v\in R\),
\[
    \Prob\{|B(v)|\geq t\}
    \leq
    A_{\mathrm{tail}}e^{-\tau t},
    \qquad
    1\leq t\leq K_{\mathrm{tail}}\log n.
\]
The root contributes at most \(\ell\) initial active half-edges,
which is absorbed into \(A_{\mathrm{tail}}\).
See, for example, \cite[Sec.~5.1]{IbrahimiEtAl2015} for the standard
subcritical breadth-first-search comparison.
We may now choose \(K_{\mathrm{tail}}>2/\tau\) and put
\(
    L:=\lfloor K_{\mathrm{tail}}\log n\rfloor.
\)
A union bound over the variables in \(R\) gives
\[
\begin{aligned}
    \Prob\!\left\{
        \max_{B\in\mathcal C_2}|B|\geq L
    \right\}
    &\leq
    \sum_{v\in R}\Prob\{|B(v)|\geq L\}\\
    &\leq
    nA_{\mathrm{tail}}e^{-\tau L}\\
    &\leq
    A_{\mathrm{tail}}e^\tau n^{1-\tau K_{\mathrm{tail}}}
    =
    o(1),
\end{aligned}
\]
Thus every component of
\(G_2\) has fewer than \(L\) variables with probability \(1-o(1)\).

Choose \(a\) with
\(
    0<a<1/(2K_{\mathrm{tail}}).
\)
Since \(K_{\mathrm{tail}}>2/\tau\), we also have \(a<\tau\).  Define
\[
    F_L
    :=
    \sum_{B\in\mathcal C_2}
    e^{a\min\{|B|,L\}}.
\]
For every \(v\in R\), the preceding exponential tail gives
\[
\begin{aligned}
    \E e^{a\min\{|B(v)|,L\}}
    &=
    1+(e^a-1)
    \sum_{t=1}^{L}
        e^{a(t-1)}
        \Prob\{|B(v)|\geq t\}\\
    &\leq
    1+A_{\mathrm{tail}}(e^a-1)
    \sum_{t=1}^{\infty}
        e^{a(t-1)}e^{-\tau t}\\
    &=
    1+A_{\mathrm{tail}}(1-e^{-a})
        \sum_{t=1}^{\infty}e^{-(\tau-a)t}\\
    &=
    1+
    A_{\mathrm{tail}}(1-e^{-a})
    \frac{e^{-(\tau-a)}}{1-e^{-(\tau-a)}}\\
    &\leq
    M_{\mathrm{exp}},
\end{aligned}
\]
for a constant \(M_{\mathrm{exp}}<\infty\).  Also,
\[
    F_L
    =
    \sum_{B\in\mathcal C_2}
        e^{a\min\{|B|,L\}}
    \leq
    \sum_{v\in R}
        e^{a\min\{|B(v)|,L\}},
\]
because a component \(B\) contributes one term to the first sum and
\(|B|\) identical terms to the second.  Therefore
\[
    \E F_L
    \leq
    M_{\mathrm{exp}}N
    \leq
    M_{\mathrm{exp}}n.
\]

Expose the uniform pairing defining \(G_2\) one pair at a time.
Consider two possible choices for the partner of the half-edge being
exposed.  The two resulting completions can be coupled so that they
differ only by a switching of two pairs.  After those two pairs are
removed, the affected part consists of at most four connected
pieces.  Hence, in either completion, at most four component terms
can be affected, and the two values of \(F_L\) differ by at most
\(
    8e^{aL}.
\)
There are \(M_2\leq\ell n/2\) exposure steps.  The
Azuma--Hoeffding inequality therefore gives, for every fixed \(s>0\),
\[
\begin{aligned}
    \Prob\!\left\{
        F_L-\E F_L\geq sn
    \right\}
    &\leq
    \exp\!\left\{
        -\frac{s^2n^2}
        {2M_2(8e^{aL})^2}
    \right\}\\
    &\leq
    \exp\!\left\{
        -\xi_1n e^{-2aL}
    \right\}\\
    &\leq
    \exp\!\left\{
        -\xi_2n^{\,1-2aK_{\mathrm{tail}}}
    \right\},
\end{aligned}
\]
for constants \(\xi_1,\xi_2>0\).  Since \(2aK_{\mathrm{tail}}<1\), the
last expression is \(o(1)\).  Taking \(s=1\) and using \(\E F_L\leq M_{\mathrm{exp}}n\), we obtain
\(
    F_L\leq (M_{\mathrm{exp}}+1)n
\)
with probability \(1-o(1)\), uniformly over all degree sequences in
the high-probability set fixed above.

The degree sequence belongs to that set with probability \(1-o(1)\).
Removing the conditioning, and combining the last estimate with the
largest-component bound, there is therefore a constant \(M_{\mathrm{comp}}<\infty\)
such that, with probability \(1-o(1)\),
\[
    \sum_{B\in\mathcal C_2}e^{a|B|}
    \leq
    M_{\mathrm{comp}}n.
\]
Indeed, on the event that every component has size less than \(L\),
the truncation in \(F_L\) is inactive.

Now let \(\mathcal S\subseteq\mathcal C_2\) be any nonempty collection of
components.  Since its components are disjoint, Jensen's inequality
gives
\[
\begin{aligned}
    \exp\!\left\{
        \frac{a}{|\mathcal S|}
        \left|
            \bigcup_{B\in\mathcal S}B
        \right|
    \right\}
    &=
    \exp\!\left\{
        \frac{a}{|\mathcal S|}
        \sum_{B\in\mathcal S}|B|
    \right\}\\
    &\leq
    \frac{1}{|\mathcal S|}
    \sum_{B\in\mathcal S}e^{a|B|}\\
    &\leq
    \frac{M_{\mathrm{comp}}n}{|\mathcal S|}.
\end{aligned}
\]
Taking logarithms yields, simultaneously for every nonempty
\(\mathcal S\subseteq\mathcal C_2\),
\begin{equation}
    \left|
        \bigcup_{B\in\mathcal S}B
    \right|
    \leq
    \frac{|\mathcal S|}{a}
    \log\frac{M_{\mathrm{comp}}n}{|\mathcal S|}.
    \label{eq:app-block-union}
\end{equation}
For \(\epsilon=1\), \(G_2\) has no edges, so
\eqref{eq:app-block-union} is immediate.  Thus the bound holds with
probability \(1-o(1)\) for every fixed
\(\epsilon\in(\epsBP,1]\), and we treat both cases together below.

We next consider a union of \(k\) components of \(G_2\) and bound
the probability that many check nodes having at least three
incidences with \(R\) have all the variable endpoints of those
incidences in that union.  The pairings involving the degree-two check nodes have
already been exposed.  Conditional uniformity therefore implies that
the remaining unmatched half-edges are still uniformly matched.
Let
\(
    L_+:=\ell N-2M_2
\)
be the number of remaining variable-side half-edges.  Since
\(\theta_n=2M_2/(\ell N)\), we have
\[
    \frac{L_+}{n}
    =
    \ell\frac{N}{n}(1-\theta_n)
    \xrightarrow{\Prob}
    \ell\rho(1-\theta)>0.
\]
Thus there is a constant \(c_+>0\) such that
\(
    L_+\geq c_+n
\)
with probability \(1-o(1)\).
Fix an integer \(k\geq1\) and put
\(
    t:=\lceil3k/4\rceil.
\)
Let \(\mathcal S\subseteq\mathcal C_2\) be any family of \(k\)
components and set
\(
    U:=\bigcup_{B\in\mathcal S}B.
\)
We call a check node \emph{internal to \(U\)} if it has at least
three incidences with \(R\) and every one of those incidences has its
variable endpoint in \(U\).
On the event in \eqref{eq:app-block-union},
\[
    |U|
    \leq
    \frac{k}{a}\log\frac{M_{\mathrm{comp}}n}{k}.
\]
Hence the number \(L_U\) of unexposed variable-side half-edges
incident to variables in \(U\) satisfies
\[
    L_U
    \leq
    \ell|U|
    \leq
    \frac{\ell k}{a}\log\frac{M_{\mathrm{comp}}n}{k}.
\]

Consider \(t\) specified check nodes,
each having at least three incidences with \(R\), and let \(D_0\) be
the total number of their incidences with \(R\).  Then \(D_0\geq3t\).
Fix a realization of \(G_2\) for which
\eqref{eq:app-block-union} holds and \(L_+\geq c_+n\).
Define
\[
    C_{\mathrm{int}}:=\frac{\ell}{a c_+}.
\]
Conditional on this exposed information, the remaining matching is
uniform.  Hence
\[
\begin{aligned}
    \Prob\{
        \text{all \(t\) checks are internal to \(U\)}
    \}
    &\leq
    \frac{(L_U)_{D_0}}{(L_+)_{D_0}}\\
    &=
    \prod_{i=0}^{D_0-1}
        \frac{L_U-i}{L_+-i}\\
    &\leq
    \left(\frac{L_U}{L_+}\right)^{D_0}\\
    &\leq
    \left[
        C_{\mathrm{int}}\frac{k}{n}
        \log\frac{M_{\mathrm{comp}}n}{k}
    \right]^{D_0},
\end{aligned}
\]
There are at most \(\binom nk\) possible families of \(k\)
components and at most \(\binom mt\) choices of the \(t\) check
nodes.  Thus, whenever \(k\) satisfies
\[
    C_{\mathrm{int}}\frac{k}{n}
    \log\frac{M_{\mathrm{comp}}n}{k}\leq1,
\]
the inequality \(D_0\geq3t\) gives
\[
\begin{aligned}
 &\Prob\{
    \text{for some family of \(k\) components, at least \(t\)
    check nodes are internal to its union}
  \}\\
 &\qquad\leq
 \binom nk\binom mt
 \left[
     C_{\mathrm{int}}\frac{k}{n}\log\frac{M_{\mathrm{comp}}n}{k}
 \right]^{3t}.
\end{aligned}
\]
Using
\(
    \binom Ns\leq(eN/s)^s
\)
and \(m=\ell n/r\), the logarithm of the right-hand side, divided by
\(k\), is at most
\begin{equation}
\begin{aligned}
    &\log\frac{en}{k}
    +\frac{t}{k}\log\frac{em}{t}
    +3\frac{t}{k}
        \log\left(
            C_{\mathrm{int}}\frac{k}{n}
            \log\frac{M_{\mathrm{comp}}n}{k}
        \right)\\
    &\qquad\leq
    \left(1-2\frac{t}{k}\right)
        \log\frac{n}{k}
    +3\frac{t}{k}
        \log\log\frac{M_{\mathrm{comp}}n}{k}
    +O(1)\\
    &\qquad\leq
    -\frac12\log\frac{n}{k}
    +3\log\log\frac{M_{\mathrm{comp}}n}{k}
    +O(1).
\end{aligned}
\label{eq:app-internal-count-exponent}
\end{equation}
where the last inequality uses
\(
    3/4\leq t/k\leq1.
\)

As \(\beta\downarrow0\),
\[
    \beta\log\frac{2M_{\mathrm{comp}}}{\beta}\longrightarrow0,
    \qquad
    -\frac12\log\frac1\beta
    +3\log\log\frac{M_{\mathrm{comp}}}{\beta}
    +O(1)
    \longrightarrow-\infty.
\]
So we can choose \(\beta>0\) sufficiently small that
\[
    C_{\mathrm{int}}\beta\log\frac{2M_{\mathrm{comp}}}{\beta}<1,
    \qquad
    \frac{\beta}{a}\log\frac{2M_{\mathrm{comp}}}{\beta}<\frac{\rho}{4},
    \qquad
    \beta<\frac{M_{\mathrm{comp}}}{e}.
\]
For this choice of \(\beta\), there is a constant
\(\gamma_0>0\) such that, when \(k=\lfloor\beta n\rfloor\), the
right-hand side of \eqref{eq:app-internal-count-exponent} is at most
\(-\gamma_0\) for all sufficiently large \(n\).  Fix this value of \(\beta\), and
set
\[
    k:=\lfloor\beta n\rfloor,
    \qquad
    t:=\left\lceil\frac{3k}{4}\right\rceil.
\]
Now for all sufficiently large \(n\),
\[
    C_{\mathrm{int}}\frac{k}{n}
    \log\frac{M_{\mathrm{comp}}n}{k}<1,
\]
so the counting estimate above shows that the probability that, for
some family of \(k\) components, at least \(t\) check nodes are
internal to its union is at most
\(
    e^{-\gamma_0 k}=o(1).
\)

Combining these estimates with
\eqref{eq:app-residual-profile} and
\eqref{eq:app-block-union}, we obtain an event of probability
\(1-o(1)\) in the configuration model on which
\(N\geq3\rho n/4\), \eqref{eq:app-block-union} holds for every
nonempty \(\mathcal S\subseteq\mathcal C_2\), and, for every family
of \(k\) components, fewer than \(t\) check nodes are internal to
its union.  By the simplicity transfer established in
Appendix~\ref{app:dual-checks}, the same event has probability
\(1-o(1)\) in the uniform simple ensemble.

Fix a realization of the high-probability event above and an arbitrary
\(D\subseteq R\) with
\[
    |D|\leq\frac{\beta n}{8}.
\]
For all sufficiently large \(n\),
\(
    |D|\leq\lfloor k/4\rfloor.
\)
Define
\(
    Q
    :=
    R\setminus\Core_{H_{0,n}}(R\setminus D).
\)
Thus \(Q\) consists of the variables in \(D\) together with those
removed by peeling from \(R\setminus D\).
A connected component of \(G_2\) is either entirely contained in
\(\Core_{H_{0,n}}(R\setminus D)\) or entirely contained in \(Q\).
Indeed, if some but not all of its variables belonged to \(Q\), then
some edge of \(G_2\) would have one endpoint in \(Q\) and the other in
the terminal residual.  The degree-two check represented by that edge
would then have exactly one incidence with the terminal residual,
contradicting the stopping-set property.

Let
\(
    \mathcal B_0
    :=
    \{B\in\mathcal C_2:B\cap D\neq\varnothing\}
\) denote the set of components that meet \(D\).
Since the components of \(G_2\) are disjoint,
\(
    |\mathcal B_0|\leq|D|.
\)
Moreover, every component in \(\mathcal B_0\) is contained in \(Q\).
Fix any legal peeling sequence from \(R\setminus D\).
We claim that fewer than \(k\) components of \(G_2\) are contained
in \(Q\).  Suppose otherwise.  Apart from the components in
\(\mathcal B_0\), order the components contained in \(Q\) according
to the first peeling step that removes one of their variables, and let
\(\mathcal B\) consist of \(\mathcal B_0\) together with the earliest
\(k-|\mathcal B_0|\) such components.  Thus
\(
    |\mathcal B|=k.
\)
Consider any
\(B\in\mathcal B\setminus\mathcal B_0\), and let \(f_B\) be the check
that removes the first variable of \(B\).  The check \(f_B\) cannot
have exactly two incidences with \(R\).  Otherwise those two
incidences would define an edge of \(G_2\), whose other endpoint would
also lie in \(B\).  Since \(f_B\) is available for peeling, that other
endpoint would already have been removed, contradicting the choice of
the first removed variable of \(B\).  Since \(R\) is a stopping set,
\(f_B\) therefore has at least three incidences with \(R\).
When \(f_B\) removes the first variable of \(B\), all of its other
variable endpoints in \(R\) have already been removed.  None of them
belongs to \(B\), and by the choice of \(\mathcal B\), their components
belong to \(\mathcal B\).  Hence every incidence of \(f_B\) with \(R\)
has its variable endpoint in
\[
    U:=\bigcup_{B\in\mathcal B}B,
\]
so \(f_B\) is internal to \(U\).  Moreover, the checks \(f_B\) are
distinct for distinct components \(B\), since a check that performs
such a peeling step has no remaining variable afterward.
Thus there are at least
\[
    |\mathcal B|-|\mathcal B_0|
    =
    k-|\mathcal B_0|
    \geq
    k-|D|
    \geq
    k-\left\lfloor\frac{k}{4}\right\rfloor
    =
    \left\lceil\frac{3k}{4}\right\rceil
    =
    t
\]
distinct check nodes internal to \(U\).  This contradicts the fixed
event, under which fewer than \(t\) check nodes are internal to the
union of any \(k\) components.

We have shown that \(Q\) is the union of fewer than \(k\) connected
components of \(G_2\).  If \(Q\neq\varnothing\), let \(j<k\) be the
number of these components.  By \eqref{eq:app-block-union},
\[
    |Q|
    \leq
    \frac{j}{a}\log\frac{M_{\mathrm{comp}}n}{j}.
\]
Because \(\beta<M_{\mathrm{comp}}/e\), the function
\(
    \phi(s):=s\log(M_{\mathrm{comp}}n/s)
\)
is increasing for \(0<s\leq\beta n\), since
\(
    \phi'(s)=\log(M_{\mathrm{comp}}n/s)-1>0.
\)
Therefore
\[
\begin{aligned}
    |Q|
    &\leq
    \frac{k}{a}\log\frac{M_{\mathrm{comp}}n}{k}\\
    &\leq
    \frac{\beta n}{a}\log\frac{2M_{\mathrm{comp}}}{\beta}\\
    &<
    \frac{\rho n}{4}.
\end{aligned}
\]
The same bound is trivial when \(Q=\varnothing\).
Since \(N\geq3\rho n/4\) on the fixed event,
\[
\begin{aligned}
    \left|\Core_{H_{0,n}}(R\setminus D)\right|
    &=
    N-|Q|\\
    &\geq
    \frac{3\rho n}{4}
    -
    \frac{\rho n}{4}
    =
    \frac{\rho n}{2}.
\end{aligned}
\]

Since \(D\) was arbitrary and the high-probability event above was
fixed independently of \(D\), this holds simultaneously for every
\(D\subseteq R\) with \(|D|\leq\beta n/8\).  Hence the theorem holds
with
\[
    \delta_{\mathrm{seed}}=\frac{\beta}{8},
    \qquad
    \eta_{\mathrm{core}}=\frac{\rho}{2}.
\]
\end{proof}

\bibliographystyle{IEEEtranN}
\bibliography{low_weight_dual_closure}

\end{document}